\newtheorem{theorem}{Theorem}
\newtheorem{lemma}{Lemma}
\newtheorem{proposition}{Proposition}
\newtheorem{corollary}{Corollary}
\newtheorem{definition}{Definition}
\newtheorem{example}{Example}
\newtheorem{assumption}{Assumption}
\newcommand{\enabstractname}{Abstract}
\newenvironment{enabstract}{
  
  \noindent\mbox{}\hfill{\bfseries \enabstractname}\hfill\mbox{}\par
  \vskip 2.5ex}{\par\vskip 2.5ex}
\begin{document}
\justifying
\title{On the Equivalence of Synchronous Coordination Game and Asynchronous Coordination Design\thanks{This paper benefits from numerous suggestions and comments. I am grateful to (in random order) Yves Zenou, John Quah, Jinwoo Kim, Tilman B\"orgers,  Andrew Rhodes, Frank Yang, Giacomo Lanzani, Joshua Lanier, Xu Tan, Fan Wu, Rui Tang, Tat-how Teh, Yiqing Xing, Xingye Wu, Jie Zheng, and seminar participants at Tsinghua University, Renmin University, and 7th World Congress of Game Theory Society. For the newest version, please click \href{https://www.dropbox.com/scl/fi/gdu8m53806c0kqd71w0c9/Xinnian_Pan_On_The_Equivalence.pdf?rlkey=zjqtgdubjjfhre49yvv5ea09t&st=tu9r7aaq&dl=0}{here}.}}
\date{\today}
\author{Xinnian (Kazusa) Pan\thanks{\baselineskip=0.7\normalbaselineskip Tsinghua University. Email: panxn19@mails.tsinghua.edu.cn.} 
}
\maketitle

\begin{enabstract}
\linespread{1.3} \selectfont
This paper establishes the equivalence between synchronous and asynchronous coordination mechanisms in dynamic games with strategic complementarities and common interests. Synchronous coordination, characterized by simultaneous commitments, and asynchronous coordination, defined by sequential action timing, are both prevalent in economic contexts such as crowdfunding and fund management. We introduce Monotone Subgame Perfect Nash Equilibrium (MSPNE) to analyze least favorable equilibrium outcomes. We provide a recursive characterization for synchronous coordination and a graph-theoretic representation for asynchronous coordination, demonstrating their equivalence in terms of the greatest implementable outcome. Our results show that the structure of commitment—whether simultaneous or sequential—does not affect the achievable welfare outcome under certain conditions. Additionally, we discuss computational aspects, highlighting the general NP-Hardness of the problem but identifying a significant class of games that are computationally tractable. These findings offer valuable insights for the optimal design of coordination mechanisms.

\vspace{1ex}
\par\textbf{Keywords: Dynamic Game, Strategic Complementarities, Synchronous Movement, Asynchronous Movement} 

\end{enabstract}

\newpage

\section{Introduction}
Coordination games play a crucial role in understanding various economic phenomena, such as financial market stability, technological adoption, and public goods provision. In these settings, individuals or firms must align their actions to achieve collectively beneficial outcomes. Examples include financial market participants coordinating their expectations to prevent systemic crises, firms deciding on technological standards to ensure compatibility, or countries negotiating environmental preservation legislation to achieve optimal levels of carbon emissions. However, the multiplicity of equilibria often complicates the coordination process, making it challenging for players to achieve socially optimal outcomes. This paper addresses this issue by analyzing both synchronous and asynchronous coordination settings within a dynamic game framework.

In the synchronous setting, players decide whether to commit to the socially efficient action simultaneously across multiple stages, while observing the history of prior actions. This setting is particularly relevant for understanding scenarios where coordination is structured and collective, such as simultaneous investment decisions by firms responding to technological changes or coordinated actions by decentralized depositors facing market shocks. The synchronous model highlights how commitment can foster coordination by reducing uncertainty about others' choices, thereby facilitating convergence to an efficient equilibrium. In contrast, the asynchronous setting allows players to act sequentially, with each player making a decision only once during the game and observing prior actions before committing. This setting captures economic environments where the timing of decisions varies among participants. In many cases, a principal may have some power to manipulate the structure of sequential moves, such as designing diversified redemption dates during an investment period. This sequential structure can help mitigate coordination failures by allowing players to condition their actions on observed behavior.

We focus on strategies that are monotone in the history of actions, and accordingly, the solution concept we adopt is Monotone Subgame Perfect Nash Equilibrium (MSPNE). MSPNE represents a subgame perfect Nash equilibrium that is monotone with respect to historical actions. The dynamic game we analyze is an extensive-form game with complete and almost perfect information, where SPNE is a standard solution concept. However, the requirement of monotonicity introduces subtle distinctions. By emphasizing these monotone strategies, we provide an elegant characterization of both synchronous and asynchronous games. In contrast, non-monotone strategies often exploit over-observed historical data to create credible but unreasonable threats, where the player making the threat cannot benefit from following through. We present numerous examples to illustrate the issues with such strategies (see \autoref{sec:counterexample}).

Our analysis offers new insights into the interplay between commitment timing and strategic complementarities. By explicitly modeling and solving both synchronous and asynchronous decision-making frameworks, we bridge the gap between static coordination problems and dynamic strategic interactions. This connection is crucial for understanding a wide range of economic problems, from coordinating microeconomic behavior efficiently to eliminating systemic risks in financial markets. The methods developed in this paper form a foundation for analyzing dynamic games with strategic complementarities. The recursive characterization of MSPNE and the use of tree-depth equivalence allow us to effectively navigate the complexities of both synchronous and asynchronous settings. By focusing on the worst-case, least favorable equilibrium outcome, our analysis employs a robust approach that is well-suited to addressing coordination challenges with significant welfare implications.

The primary goal of this paper is to establish an equivalence between synchronous and asynchronous coordination mechanisms. Despite their structural differences, we show that when the principal adopts a pessimistic stance and evaluates performance based on the worst equilibrium (MSPNE) outcome, the greatest implementable outcome achieved through an asynchronous partition design is equivalent to that obtained in the synchronous setting. This result implies that the maximal power of commitment, whether synchronous or asynchronous, has no significant difference on the achievable outcome under certain conditions. By providing a unified theoretical framework for both settings, this paper demonstrates that the coordination challenges inherent in dynamic games can be effectively addressed regardless of the timing structure, provided key conditions are met.

To establish this equivalence, we employ a novel approach using tree-depth characterization and graph-theoretical constructs to analyze the underlying structure of strategic interdependencies in these coordination games. We also highlight the bridging role of the weakest link game on a directed graph, which has received limited attention in the literature. By leveraging these tools, we comprehensively characterize the MSPNE of synchronous games and the least MSPNE of arbitrary asynchronous games, offering a deeper analysis of how the timing of actions—whether simultaneous or sequential—affects the ability to coordinate on efficient equilibria. We believe that these methodologies can be adapted to study a broader range of environments, even in cases involving incomplete information and imperfect observations.

Our findings have implications for economic policy and institutional design. For instance, in the context of financial regulation or technology adoption interventions, our results suggest that participants can achieve equally efficient coordination regardless of whether collective decisions are voluntary (decentralized) and synchronously made or instructed (centralized) and asynchronously made. This insight is particularly valuable in economies where coordination and strategic complementarities play a crucial role.

Furthermore, we explore the flexibility of the model and the robustness of our results from several perspectives. When the number of periods is fixed, the optimal asynchronous design is independent of the principal's preferences, provided the principal always weakly prefers higher equilibrium outcomes. As a result, the choice of the number of partitions in the asynchronous design problem can be simplified into a cost-benefit analysis framework, assuming the principal also strictly prefers fewer partitions. We further demonstrate that every optimal asynchronous design for some preferences aligns with a totally ordered weak centrality measure, which means the set of candidate designs is relatively small compared to the number of players. Although the game-theoretical analysis is NP-Hard for most stage games, we identify a significant class of games where the corresponding dynamic programming problem has an unambiguous policy function and can thus be solved efficiently: ordered games. These games illustrate the alignment of two orders, termed the cost order and the contribution order—players who have a greater influence on others' incentives also incur a higher initiation cost to take the high action. Lastly, we consider time-dependent payoffs in the synchronous game and show that all of our results continue to hold under standard regularity conditions, such as nearness and procrastination, which are satisfied by lag-complementarities flow payoffs, for instance.

Overall, this paper provides a comprehensive framework for understanding dynamic coordination under different timing structures. By focusing on the least MSPNE and establishing the equivalence between synchronous and asynchronous coordination designs, we contribute to the broader literature on dynamic games and strategic complementarities, offering new insights into how coordination can be achieved efficiently across diverse economic contexts. The implications of our findings extend beyond theoretical contributions, providing practical insights for policymakers, firms, and institutions that face the challenge of coordinating actions in complex, interdependent environments.

\subsection{Related Literature}
This paper contributes to the literature on dynamic coordination games, particularly those involving irreversibility and strategic complementarities. The foundational work of \citet{farrell1985standardization} introduced models of technology adoption with binary actions and irreversibility, laying the groundwork for understanding coordination in the context of technological standards. \citet{gale2001monotone} and \citet{matthews2013achievable} extended these ideas by formalizing dynamic coordination games with irreversibility, which form a basis for analyzing efficient coordination outcomes over time. Our approach builds on these earlier contributions by emphasizing the role of interdependent payoffs and expanding the scope to asynchronous coordination.

The closest work to ours is \citet{gale1995dynamic}, which shows that synchronous games can facilitate efficient coordination while exhibiting robust delay as a common feature. \citet{gale1995dynamic} focuses on the discount structure of these games, whereas our paper emphasizes the interdependent payoff structure and its implications for coordination efficiency. Additionally, the literature on dynamic games with both synchronous and asynchronous movements, such as \citet{admati1991joint}, \citet{lagunoff1997asynchronous}, \citet{marx2000dynamic}, \citet{lockwood2002gradualism}, \citet{duffy2007giving}, and \citet{dutta2012coordination}, provides a foundation for understanding how different timing mechanisms contribute to achieving socially efficient outcomes.

Experimental studies have also tested the efficiency of dynamic coordination. \citet{choi2008sequential}, \citet{choi2011network}, \citet{jin2023coordination}, and \citet{avoyan2023road} have explored how sequential decision-making and network effects impact coordination outcomes, providing empirical support for theoretical models of dynamic coordination games.

The literature on commitment games is also closely related to our work. Commitment games, as studied by \citet{bade2009bilateral}, \citet{renou2009commitment}, and \citet{dutta2016dynamic}, involve players deciding whether to pledge to specific actions during a commitment stage before making decisions in a subsequent stage. A key distinction between our synchronous coordination game and commitment games is that only a subset of actions are pledgeable in the synchronous setting, which constrains the set of implementable outcomes compared to the broader action sets possible in commitment games.

Our analysis also extends the literature on network coordination games. \citet{chwe1999structure}, \citet{berninghaus2002conventions}, \citet{demange2004group}, \citet{papadimitriou2021public}, \citet{leister2022social}, \citet{sadler2022ordinal}, and \citet{bayer2023best} have all contributed to understanding coordination in networked settings with strategic complementarities. The most relevant work is \citet{chwe2000communication}, which analyzes binary coordination games with common interests and strategic complementarities, providing insights into the structure of minimal communication networks required for efficient coordination. Our results build on these insights by extending the analysis to dynamic coordination settings and exploring the implications of interdependent payoffs.

The concept of strategic complementarities is central to our approach. \citet{milgrom1994monotone} provided foundational insights into comparative statics in games with strategic complementarities. Further technical results about lattice structures and equilibrium properties can be found in \citet{vives1990nash}, \citet{zhou1994set}, \citet{kramarz1996dynamic}, and \citet{topkis1998supermodularity}. \citet{echenique2004extensive} established a unified framework for analyzing dynamic strategic complementarities, though the restrictive conditions in that framework do not fully apply to our setting. We also draw on the literature on Markov perfect equilibria (MPE), particularly the work of \citet{curtat1996markov} and \citet{maskin2001markov}, which highlights the challenges of applying MPE to settings with irreversible actions.

Compared to the existing literature, our contribution lies in adding structure to the stage game through strategic complementarities, common interests, and deviation-proof conditions. These additional structures allow us to precisely determine local behavior and evaluate performance through a robust approach that focuses on the worst-case, least (monotone) equilibrium outcome. This adversarial approach is common in the study of coordination games, as seen in recent work by \citet{basak2022panics} and \citet{inostroza2023adversarial}, although the specific frameworks differ from ours.

The rest of the paper is organized as follows.  \autoref{sec:2} introduces the model setup. \autoref{sec:3} states the characterization of the synchronous game. \autoref{sec:asyn} states the characterization of the asynchronous game. \autoref{sec:equiv} builds the equivalence between the synchronous and asynchronous games. \autoref{sec:5} discusses several applications and extensions. \autoref{sec:colc} concludes.

\section{Model Setup} \label{sec:2}

\subsection{Stage Game}
A stage game is a normal form game that can be represented by $\Gamma(N,A,U)$: 

Players: $N:i=1,2,...,|N|$; Actions: $\forall i\in N,A_i=\{0,1\}$, $A:=\prod_{i\in N}A_i$, $ A_{-i}:=\prod_{j\in N\backslash i}A_j$; Utility: $U=(u_i)_{i\in N}: A\rightarrow \mathbb{R}^{|N|}$. 

The stage game is a normal form game, in which each player $i$ chooses an action $a_i\in A_i$ simultaneously\footnote{More precisely, the player could choose $s_i \in \Delta(A_i)$, which represents a mixed strategy. In this paper, we follow the standard approach on games of strategic complementarities and focus on pure strategy, see \cite{echenique2003mixed} and \cite{echenique2007finding}.}, and then the payoff is realized. 

We see $A_i$ as a lattice equipped with the order $1>0$; $\forall X\in 2^N$, we see $A_X:=\prod_{i\in X}A_i$ as a lattice equipped with the product order, i.e., $\forall a,b\in A_X$, $a>b$ if $\forall i\in X, a_i\geq b_i$, and $\exists i\in X, a_i>b_i$.

\begin{assumption}\label{ass:1}
\begin{enumerate}
    \item Single-crossing condition: $\forall i\in N, \forall a_{-i},a_{-i}'\in A_{N\backslash i}$ such that $a_{-i}'>a_{-i}$, $u_i(a_i=1,a_{-i})\geq u_i(a_i=0,a_{-i}) \Rightarrow u_i(a_i=1,a_{-i}')\geq u_i(a_i=0,a_{-i}')$, $u_i(a_i=1,a_{-i})>u_i(a_i=0,a_{-i}) \Rightarrow u_i(a_i=1,a_{-i}')>u_i(a_i=0,a_{-i}')$.
    \item Common interests\footnote{I am indebted to Tat-how Teh for suggesting this weaker condition to replace the positive spillovers in the original paper: $\forall i\in N, u_i(a_i, a_{-i})$ is non-decreasing in $a_{-i}$.}: $\forall i\in N, u_i^*(a_{-i}):=\max_{a_i\in A_i} u_i(a_i, a_{-i})$ is non-decreasing in $a_{-i}$, and $\forall a_{-i}'>a_{-i}$, if $u_i^*(a_{-i}')=u_i(a_i=1, a_{-i}'), u_i^*(a_{-i})=u_i(a_i=0, a_{-i})$, then $f(a_{-i}')>f(a_{-i})$\footnote{The tie-break rule is a technical condition that guarantees the existence of least equilibrium outcome.}.
    \item Deviation-proof condition: $\forall i\in N, \forall a_{-i}'>a_{-i}$, $u_i(a_i'=1,a_{-i}') \geq u_i(a_i=0,a_{-i}) \Rightarrow u_i(a_i'=1,a_{-i}') \geq u_i(a_i=0,a_{-i}')$, $u_i(a_i'=1,a_{-i}') > u_i(a_i=0,a_{-i}) \Rightarrow u_i(a_i'=1,a_{-i}') > u_i(a_i=0,a_{-i}')$.

\end{enumerate}
\end{assumption}

We briefly explain why we need them for the main characterization. The single-crossing condition implies that the stage game is of strategic complementarities, therefore, the best correspondence is non-decreasing. Consequently, the pure strategy Nash equilibria(NE) set is a non-empty complete lattice (see \citet{vives1990nash} or \citet{zhou1994set}). The common interests condition further implies that given the NE is ordered, it is compatible with the Pareto rank: the greater the equilibrium is, the better it is for all players, i.e., given any two pure strategy NE $a<a'$, $\forall i\in N, u_i(a)\leq u_i(a')$. The deviation-proof condition may seem weird at first which needs some additional justifications. In short, it guarantees that any of the sub-games is a coordination game, a detailed discussion is given in the appendix\footnote{We show in the appendix that no assumption cannot be removed with various counterexamples}. Last, we argue that these assumptions are fairly lenient. For instance, if $u_i(a_i=0,a_{-i})\equiv 0$, the deviation-proof condition is satisfied automatically, and therefore \autoref{ass:1} is equivalent to the requisite that: $u_i(a_i=1,a_{-i})$ is non-decreasing in $a_{-i}$ and $\nexists i\in N, a_{-i}\in A_{-i}$ such that $u_i(a_i=1,a_{-i})=0$.

\subsection{Synchronous Game}
A synchronous game $\Gamma(N,A,U,T)$ can be generated from the stage game $\Gamma(N,A,U)$ with additional structure:

Time: $t=1,2,...,T$; Players: $N:i=1,2,...,|N|$. Actions: $\forall i\in N, 1\leq t \leq T, A_{it}\equiv A_i=\{0,1\}$, $A_{t}:=\prod_{i\in N} A_{it}$, $A:=\prod_{i\in N}\prod_{1\leq t\leq T}A_{it}$. Utility: $U=(u_i)_{i\in N}: A_{T}\rightarrow \mathbb{R}^{|N|}$. $u_i$ satisfies the assumptions of the stage game. Irreversible constraint: $\forall i\in N, \forall 1\leq t\leq T-1$, $a_{it+1}\geq a_{it}$\footnote{Therefore, the synchronous game is also called monotone game\citep{gale2001monotone}.}. 

History: $\forall i\in N, H_{i1}\equiv H_1=\emptyset, \forall 2\leq t\leq T, H_{it}\equiv H_t =\prod_{1\leq k\leq t-1} A_{k}$, $\mathcal{H}_i\equiv \mathcal{H} =\bigcup_{1\leq t\leq T} H_{t}$, $\mathcal{A}_i=\bigcup_{1\leq t\leq T} A_{it}$. Pure strategy: $s=(s_i)_{i\in N}, \forall i\in N,s_i: \mathcal{H}_{i}\rightarrow \mathcal{A}_{i}$ such that the irreversible constraint is satisfied. 

The monotone game $\Gamma(N,A,U,T)$ is an extensive-form game with complete and almost perfect information: at the beginning of $ 1\leq t \leq T$, each player $i$ chooses an action $a_{it}\in A_{it}$ simultaneously; at the end of $t$, each player $i$ observes the history $h_{t}=\prod_{i\in N}\prod_{1\leq k\leq t-1}a_{ik} \in H_{t}$ perfectly; at the end of $t=T$, the payoff is realized. When the stage game is obvious, we write the monotone game $\Gamma(N,A,U,T)$ as $\Gamma(T)$ for short. In this paper, we restrict our attention to pure strategy to make the result elegant. We use $a_s\in A_{T}$ to represent the outcome of a pure strategy profile $s$.

One may explain the game as a limited commitment game with $T$ stages\footnote{The commitment is limited in the sense that only socially efficient action $1$ is pledgeable. This assumption can be relaxed, see the appendix for a more detailed discussion.}: $T-1$ commitment-making stages and one decision-making stage, which explains why the payoff depends on the action in the last stage only. The commitment is creditable and irreversible, which means once a player pledges to take the socially plausible action $1$, he can neither withdraw nor violate his commitment. 

\subsection{Asynchronous Game}

An asynchronous game $\Gamma(N,A,U,T,\{N_t\}_{t=1}^{T})$ can be generated from the stage game $\Gamma(N,A,U)$ with additional structure:

Time: $t=1,2,...,T$. Partition: $\forall p\neq q, 1\leq p,q\leq T, N_p\cap N_q =\emptyset, \bigcup_{1\leq t\leq T}N_t=N$.

History: $H_{1}:=\emptyset$, $\forall 2\leq t\leq T, \forall i\in N_t, H_{it}=H_{t}:=\prod_{1\leq k\leq t-1}\prod_{i\in N_k}A_{i}$. Pure strategy: $s=(s_i)_{i\in N}$, $\forall 1\leq t\leq T, \forall i\in N_t$, $s_i: H_{t}\rightarrow A_i$.

The asynchronous game $\Gamma(N,A,U,T,\{N_t\}_{t=1}^{T})$ is an extensive-form game with complete and almost perfect information: at the beginning of $ 1\leq t \leq T$, each player $i\in N_t$ chooses an action $a_{i}\in A_{i}$ simultaneously; at the end of $t$, each player $j\in N$ observes the history $h_{t}=\prod_{1\leq k\leq t}\prod_{i\in N_t}a_{i} \in H_{t}$ perfectly; at the end of $t=T$, the payoff is realized.  In this paper, we restrict our attention to pure strategy to make the result elegant\footnote{Most characterization maintains even we consider mixed strategy.}. We use $a_s\in A$ to represent the outcome of a pure strategy profile $s$\footnote{$a_s$ can be generated in the following way: $\forall i\in N_1, a_{i}=s_{i}$; $\forall  2\leq k \leq T, i\in N_k, a_i=s_i(\prod_{1\leq l\leq k-1}a_{N_l})$}.

\subsection{Solution Concept}\label{sec:sou}

Both synchronous and asynchronous games are extensive-form games with complete and almost perfect information, so the usual solution concept is sub-game perfect equilibrium(SPNE). However, in this paper, we will introduce a new solution concept: Monotonic sub-game perfect equilibrium(MSPNE), the SPNE with monotonic strategy. 

\begin{definition}[Monotonic Strategy]
A pure strategy $s$ is monotonic in history if $\forall i\in N, \forall 2\leq t\leq T,\forall h_1,h_2\in H_{it}$, $h_1>h_2$ implies $s_i(h_1)\geq s_i(h_2)$. A pure strategy profile is a MSPNE if it is both a SPNE and monotonic in history. 
\end{definition}

The monotonic strategy captures the feature that the strategy maps higher history into higher action. For a monotone game whose stage game is of strategic complementarities and common interests, it seems to be a desirable property\footnote{A more theoretical explanation is that: we want a monotone comparative statics stronger than the strong set order.}. For more discussions that justify the reasonability of MSPNE, see \autoref{sec:counterexample}. In this paper, we will focus on MSPNE. 

We introduce some simplified notations for convenience. Given $a$ is a strategy profile, we may also use $X_a:=\{i\in N: a_i=1\}$ to represent it equivalently.  We use $O(\Gamma(N,A,U))$ to represent the pure strategy Nash equilibria set; when the stage game is obvious, we also write it as $O(\Gamma)$ for short. We use $O(\Gamma(T))$ to represent pure strategy MSPNE outcome of the synchronous game, and $O(\Gamma(T,\{N_t\}_{t=1}^{T}))$ to represent pure strategy MSPNE outcome of the asynchronous game.

\subsection{Graph Preliminaries}
We will also introduce some preliminary graph basics.

\begin{definition}[Directed Graph]
A directed graph $G=(N,E)$, in which $E\in 2^{N\times N}$ contains some ordered pairs. $\forall i,j\in N, (i,j)\in E$ represents $i\rightarrow j$. For convenience, we assume that $(i,i)\notin E$. $E_i:=\{j\in N: (j,i)\in E\}$. Let $\mathcal{G}$ represent the class of directed graph. Given a directed graph $G$, we define the subgraph $G|_{X}: 2^{N}\rightarrow \mathcal{G}$, $G|_{X}=(X,E\cap (X\times X))$.
\end{definition}

\begin{definition}[Strongly Connected]
Given a directed graph $G$, $R: 2^N\rightarrow 2^N$, $R(X)=\{j\in N: \exists \{j_p\}_{p=1}^{q}, j_1=j,j_q\in X, \forall 1\leq p\leq q-1, (j_p,j_p+1)\in E\}$. $R$ represents all of the vertex that can reach some vertex in $X$ according to $G$. A graph $G$ is strongly connected if $\forall i\in N, R(\{i\})=N$. Otherwise, a directed graph can be divided into several maximal strongly connected subgraphs: each subgraph is strongly connected, and there exists no larger subgraph that is strongly connected. Such a maximal strongly connected subgraph is called a strongly connected component\footnote{For convenience, we see a singleton as strongly connected.}. 
\end{definition}

 \begin{definition}[Tree-depth]
Tree-depth(td) is recursively defined: $td: \mathcal{G} \rightarrow \mathbb{N}^+$\footnote{At the best of my knowledge, tree-depth is defined for the undirected graph only \citep{pothen1988complexity}, and we extend the definition to the directed graph.}.
 
\begin{equation*}	   td(G):=
 		\begin{cases}

    1 & \text{if } |G|=0\\
1+\text{min}_{i\in N }td(G\backslash i) & \text{else if G is strongly connected}\\
\text{max}_{k} td(G_k) & \text{else, where $G_k$ is the strongly connected component}
 		\end{cases}
 	\end{equation*}  

Given a directed graph $G$, we also define subgraph tree-depth(std): $std: 2^{N} \rightarrow \mathbb{N}^+$. $std(X;G)=td(G|_{R(X)})$.
 \end{definition}

\begin{definition}[Weakest Link Game]
Weakest Link Game $\Lambda(G(N,E),A,U)$ is a stage game: Players: $N$; Actions: $A=\{ 0,1\}^{N}$; Directed graph $G=(N,E)$, Payoff: $U=(u_i)_{i\in N}: A\rightarrow \mathbb{R}^{|N|}$ satisfies \autoref{ass:1}, and additionally $\forall i\in N, \forall a_{-i}\in A_{-i}$, $u_i(a_i=1,a_{-i})>u_i(a_i=0,a_{-i})$ if $\forall j\in E_i, a_j=1$, and $u_i(a_i=1,a_{-i})<u_i(a_i=0,a_{-i})$ otherwise\footnote{Alternatively, we could assume $u_i(a_i=1,a_{-i})\leq u_i(a_i=0,a_{-i})$ otherwise. However, the tie-break rule we choose is convenient for the consideration of the least equilibrium outcome}.
\end{definition}

For instance, the minimum effort game on a directed graph is one of the weakest link game: $\forall i\in N, u_i(a)=a_i\times [\text{inf}_{j\in E_i}a_j-\frac{1}{2}]$. The pure strategy NE of the strict weakest link game on a directed graph is simple: $a$ is a NE if and only if $\forall i\in N, a_i=\text{inf}_{j\in E_i}a_j$.

\section{Synchronous Game: Characterization}\label{sec:3}
\subsection{Theoretical Characterization}

The monotone game is an extensive-form game with complete and almost perfect information, the existence of mixed strategy SPNE is well-established. However, the existence of MSPNE in pure strategy is more subtle. We will show the existence first.

In this paper, we allow the possibility that some players have (iterated) strictly dominant strategy in the stage game, and we show that those players will take the dominant strategy for sure in all SPNE, and therefore the analysis can be focused on the case that nobody has (weakly) dominant strategy without loss of generality. The case that some players have a strictly dominant strategy $1$ is easy, while the case that some players have a strictly dominant strategy $0$ is more subtle.

\begin{lemma}\label{lem:0}
Every SPNE outcome of the synchronous game is an action profile of the stage game that survives the iterated elimination of strictly dominated strategy.

\end{lemma}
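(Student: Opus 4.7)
The plan is to prove the lemma by induction on the number of rounds of IESDS, with the main work concentrated in handling the elimination of action $1$ by action $0$.

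For the base case, $S^0=A$ trivially contains every outcome. For the inductive step, assume every SPNE outcome of the synchronous game (and of every stage-$T$ subgame, interpreted as a degenerate synchronous game) lies in $S^k$, and suppose for contradiction that some SPNE outcome $a^*$ has $a_i^*\in S_i^k\setminus S_i^{k+1}$. Then there exists $\bar a_i$ with $u_i(\bar a_i,a_{-i})>u_i(a_i^*,a_{-i})$ for every $a_{-i}\in S_{-i}^k$.

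When $(a_i^*,\bar a_i)=(0,1)$ the argument is short: at stage $T$, if $i$ is still uncommitted then their best response to $\sigma_{-i,T}^*$ must be $1$ (by restricted strict dominance, using that $a_{-i,T}^*\in S_{-i}^k$ via the inductive hypothesis applied to the stage-$T$ subgame); and if $i$ is already committed by stage $T-1$ then $a_{iT}^*=1$ by irreversibility. Either way $a_i^*=0$ is impossible.

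The substantive case is $(a_i^*,\bar a_i)=(1,0)$. I would introduce the deviation $\tilde\sigma_i$ that plays $0$ at every history and let $a^{dev}$ be the outcome of $(\tilde\sigma_i,\sigma_{-i}^*)$. SPNE optimality at the root gives $u_i(1,a_{-i}^*)\geq u_i(0,a_{-i}^{dev})$; moreover the inductive hypothesis applied to the terminal-stage subgames reached on the equilibrium and deviation paths yields $a_{-i}^*,a_{-i}^{dev}\in S_{-i}^k$, so the restricted dominance $u_i(0,a_{-i})>u_i(1,a_{-i})$ is in force at both profiles. I would then split on the lattice relation between $a_{-i}^*$ and $a_{-i}^{dev}$. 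If $a_{-i}^{dev}\geq a_{-i}^*$, common interests (with $u_i^*(\cdot)=u_i(0,\cdot)$ on $S_{-i}^k$) gives $u_i(0,a_{-i}^{dev})\geq u_i(0,a_{-i}^*)>u_i(1,a_{-i}^*)$, contradicting the SPNE inequality. If $a_{-i}^{dev}<a_{-i}^*$ strictly, the deviation-proof condition directly delivers $u_i(1,a_{-i}^*)<u_i(0,a_{-i}^{dev})$, again a contradiction. If $a_{-i}^*$ and $a_{-i}^{dev}$ are incomparable, the meet $b:=a_{-i}^*\wedge a_{-i}^{dev}$ is strictly below each; the deviation-proof condition between $b$ and $a_{-i}^*$ gives $u_i(1,a_{-i}^*)<u_i(0,b)$, and common interests combined with $u_i^*\geq u_i(0,\cdot)$ give $u_i(0,b)\leq u_i^*(b)\leq u_i^*(a_{-i}^{dev})=u_i(0,a_{-i}^{dev})$, closing the chain.

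The main obstacle is twofold. First, all three parts of \autoref{ass:1} are needed simultaneously in the $(1,0)$ case, and in particular the incomparable subcase cannot be closed without combining deviation-proof with common interests through the meet $b$. Second, because $\tilde\sigma_i$ is not part of any SPNE profile, membership $a_{-i}^{dev}\in S_{-i}^k$ cannot be read off the inductive hypothesis globally; it must be established locally at the terminal stage, using that $\sigma_{-i,T}^*$ must play a Nash equilibrium of the restricted stage game at every reached history and that such restricted equilibria inherit IESDS-survivorship from the original stage game.
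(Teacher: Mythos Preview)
Your overall induction scheme is sound, and the $(0,1)$ case is handled correctly. In the $(1,0)$ case your argument is also essentially correct for the base round $k=0$, but it is more elaborate than the paper's. The paper avoids your three-way case split on the lattice relation between $a_{-i}^*$ and $a_{-i}^{dev}$ by routing everything through the profile $0^{N\setminus i}$: using the deviation ``play $0$ through stage $T-1$, then best-respond at $T$'' (not merely ``always play $0$'') yields the stronger inequality $u_i(1,a_{-i}^*)\geq u_i^*(a_{-i}^{dev})$; common interests then gives $u_i^*(a_{-i}^{dev})\geq u_i^*(0^{N\setminus i})=u_i(0,0^{N\setminus i})$ (the last equality holding because $0$ is strictly dominant for $i$); finally a single application of deviation-proof between $0^{N\setminus i}$ and $a_{-i}^*$ produces $u_i(1,a_{-i}^*)\geq u_i(0,a_{-i}^*)$, the desired contradiction. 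No comparability hypothesis is needed.

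There is, however, a genuine gap in your treatment of the iterated step. Your proposed fix for $a_{-i}^{dev}\in S_{-i}^k$---that ``restricted equilibria inherit IESDS-survivorship''---does not go through as stated: the terminal-stage subgame at the deviation history is a \emph{constrained} stage game in which some players $j\neq i$ may already be committed to $1$, and a Nash equilibrium of that constrained game need not lie in $S^k$ of the original game. What actually makes the iteration work is a stronger statement proved alongside the base case: in every SPNE $\sigma^*$ and at \emph{every} history $h$ where player $j$ has not yet committed, if $0$ is strictly dominant for $j$ then $\sigma_j^*$ never commits at $h$ (apply the base-case argument to the subgame at $h$, which is itself a shorter synchronous game satisfying \autoref{ass:1}). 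This yields that such $j$ play $0$ at every reachable terminal node, including those on your deviation path, and dually for players with iteratively dominant $1$. Once this off-path conclusion is in hand, the induction proceeds---and this is precisely what the paper's terse ``repeat the argument for iterated strict dominance'' is gesturing at.
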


The above argument shows that we can without loss of generality focus on the non-degenerate case. Therefore, in the following, we assume that nobody has a dominant action in the stage game\footnote{We also eliminate the possibility of weak dominance for elegance}. 

\begin{assumption}
$\forall i\in N$, $u_i(a_i'=1,a_{-i}=1^{N\backslash i})>u_i(a_i=0,a_{-i}=1^{N\backslash i})$, $u_i(a_i=0,a_{-i}=0^{N\backslash i})> u_i(a_i'=1,a_{-i}=1^{N\backslash i})$.
\end{assumption}

The existence of pure strategy MSPNE is clear now: $\forall i\in N, \forall h\in \mathcal{H}, s_i(h)\equiv 1$ is a MSPNE.

The first main result we want to show is: the pure strategy MSPNE outcome shrinks as there are more commitment stages. To prove this, we need a lemma.

\begin{lemma}\label{lem:1}
Suppose $s$ is a pure strategy MSPNE(SPNE), given $b_k\in A_k, 1\leq k\leq T-1$ satisfies the monotone constraint, define $a_1:=s_1$, $a_2:=s_2(a_1\vee b_1)$, $a_k:=s_k(\prod_{t=1}^{k-1} (a_t\vee b_t))$,  the following strategy $g$ is a MSPNE(SPNE) that realizes the same outcome: $\forall i\in N, g_{i1}=0, \forall 2\leq t\leq T-1, g_{it}(\prod_{k=1}^{t-1}b_k)=1 \text{ iff } b_{it-1}=1, g_T(\prod_{k=1}^{T-1}b_k)=s_T(\prod_{t=1}^{T-1} (a_t\vee b_t))$. 

\end{lemma}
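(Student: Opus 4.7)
My plan is to verify the three required properties of $g$ in sequence: monotonicity in history, subgame perfection (at every subgame), and outcome equivalence to $s$. A preliminary step is to complete the specification of $g$ off the $b$-path, since the lemma only pins down $g$ along the specific history $\prod_{k=1}^{t-1}b_k$. Concretely, for any history $h\in H_t$ with $h\neq \prod_{k=1}^{t-1}b_k$, I would set $g_{it}(h)=s_{it}\bigl(h \vee \prod_{k=1}^{t-1}(a_k\vee b_k)\bigr)$, truncated appropriately to respect the irreversibility constraint along $h$. This extension inherits monotonicity from $s$ combined with the order-preserving join operation, and it reduces off-path subgames of $g$ to shifted subgames of $s$.

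Next I would check that $g$ is monotone in history. Along the $b$-path the prescription $g_{it}(\prod b_k)=b_{i,t-1}$ is monotone in $t$ because $b$ is monotone by assumption; off-path monotonicity is inherited from $s$; and monotonicity across the two regions uses the fact that any history strictly above $\prod_{k=1}^{t-1}b_k$ triggers the off-path branch, which evaluates $s$ at a history weakly above the reference one.

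The bulk of the work is the SPNE verification, carried out by backward induction. At the terminal stage along the $b$-path, $g_T(\prod b_k)=s_T(\prod(a_k\vee b_k))$ is a best response under the reference history $\prod(a_k\vee b_k)$ by the MSPNE property of $s$; I would use the single-crossing condition in \autoref{ass:1} to transfer optimality to the lower history $\prod b_k$, invoking the deviation-proof condition to block single-coordinate upward deviations and common interests to break ties. At intermediate on-$b$-path subgames, each player $i$ is prescribed to play $b_{i,t-1}$; since the off-path extension makes the continuation deliver the terminal action $s_T(\prod(a_k\vee b_k))$ irrespective of local deviations, the incentive comparison collapses to a payoff comparison at a fixed terminal profile, and the deviation-proof condition rules out switching from $1$ to $0$ while common interests rules out a strictly profitable switch from $0$ to $1$. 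Off-path subgames reduce to subgames of $s$. Outcome equivalence is then immediate: the realized play of $g$ ends with terminal action $s_T(\prod(a_k\vee b_k))$, which by construction of $a_k$ matches the outcome produced by $s$ at the corresponding reference history.

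The main obstacle is the SPNE verification at intermediate on-$b$-path subgames. The subtlety is that $g$'s prescription copies an arbitrary monotone sequence $b$ that need not reflect any player's best response structure; the off-path extension must therefore be engineered so that the continuation value seen by a deviator is invariant, allowing the deviation-proof condition to do the work of ruling out unilateral reversals. This is precisely why the deviation-proof condition is imposed in \autoref{ass:1}. A secondary difficulty is bookkeeping the various histories to confirm that the extension is simultaneously monotone and does not introduce spurious SPNE violations off-path; strategic complementarities from the single-crossing condition should suffice once the reference history is fixed.
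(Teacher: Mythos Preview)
Your proposal rests on a misreading of the lemma: $b=(b_k)_{k=1}^{T-1}$ is not a fixed reference path but the \emph{generic history variable}. For each history $\prod_{k=1}^{t-1}b_k$, the auxiliary sequence $a_k=a_k(b_1,\dots,b_{k-1})$ is recomputed, and the formulas $g_{it}(\prod_{k}b_k)=b_{i,t-1}$ and $g_T(\prod_k b_k)=s_T(\prod_t(a_t\vee b_t))$ already specify $g$ at \emph{every} node. There is no ``off the $b$-path'' region to extend, so your preliminary step is spurious, and the off-path continuation-invariance you rely on at intermediate stages has no object. Once you correct the reading, the overall architecture (single deviation principle, terminal stage first, then intermediate stages, then monotonicity) matches the paper's.

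Two further points where your argument would need repair even after the correction. At $t=T$, the delicate case is $b_{i,T-1}=0$ but $a_{i,T-1}=1$: here the feasible set for $i$ is strictly larger under $g$ than under $s$, and you cannot ``transfer optimality to the lower history'' via single-crossing, which goes the wrong direction. The paper instead argues that $s$ being an SPNE gives $u_i(1,a_{-i}')\ge \max_{a_i}u_i(a_i,a_{-i})$ for appropriate profiles, then uses common interests to bound the right side below by $u_i(0,0^{N\setminus i})$, and finally deviation-proof to conclude $u_i(1,a_{-i}')\ge u_i(0,a_{-i}')$. At intermediate $t<T$, the paper does not claim continuation invariance in general; it splits on $a_{it}$: if $a_{it}=1$ the join $a_t\vee b_t$ is unaffected by $i$'s switch, so the terminal profile is literally unchanged and $i$ is indifferent; if $a_{it}=0$ the deviation is isomorphic to a deviation in $s$ at the merged history, ruled out by subgame perfection of $s$.
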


The detailed proof is left in the appendix. But we will see some intuitions on why this lemma is correct: $\{a_t\}_{t=1}^{T-1}$ are the actions that players should take according to strategy $s$, while $\{b_t\}_{t=1}^{T-1}$ are the actions that players do take. The strategy $g$  is constructed through history transformation: if the action you take is lower than the action you should take, all players act as if you have taken the higher action since the payoff depends on the action at the terminal only; else if the action you take is higher than the action you should take, given the higher action is irreversible, all players have to act based on the action you take. 

This implies any of the MSPNE outcomes is necessarily self-enforcing: there exists no MSPNE outcome that can be implemented via pledging only. If all players agree on a MSPNE outcome, then commitment is not necessary. This does not hold for general commitment games, in which the deviation-proof assumption may fail to hold. Now we can show that we can neglect those players who have a (iterated) strictly dominant strategy $0$.

Now we can show that the outcome set of MSPNE shrinks weakly. The validness of the proposition is guaranteed by the above lemma: the commitment stages are non-informative for analyzing MSPNE, so we could remove it without loss of generality.

\begin{proposition}\label{prop:1}
$\forall T\geq 2, O(\Gamma(T+1))\subset O(\Gamma(T)) \subset O(\Gamma)$.
\end{proposition}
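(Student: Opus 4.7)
The plan is to prove both inclusions by invoking \autoref{lem:1} with $b_k \equiv 0$ to rewrite any MSPNE in a canonical ``lazy'' form whose realized path is $(0,\ldots,0,a^{\ast})$---every pre-terminal stage plays the zero vector and the entire outcome is delivered in the last stage.

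For the outer inclusion $O(\Gamma(T))\subset O(\Gamma)$, I fix any MSPNE $s$ with outcome $a^{\ast}$ and let $g$ be its canonical form produced by \autoref{lem:1}. The terminal subgame rooted at history $0^{T-1}$ is a copy of the stage game, and subgame perfection of $g$ forces the profile $a^{\ast}=(g_{i,T}(0^{T-1}))_{i\in N}$ to be a simultaneous mutual best response in $\Gamma(N,A,U)$; hence $a^{\ast}\in O(\Gamma)$.

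For the nested inclusion $O(\Gamma(T+1))\subset O(\Gamma(T))$, I start from any MSPNE of $\Gamma(T+1)$ with outcome $a^{\ast}$ and again let $g$ be its canonical form, whose realized path is $(0,\ldots,0,a^{\ast})$ with $T$ leading zero vectors. I then collapse the redundant first stage by defining a strategy $s'$ in $\Gamma(T)$ via
\[
s'_{i,1} := 0, \qquad s'_{i,t}(h) := g_{i,t+1}(0,h) \quad \text{for } 2\leq t\leq T \text{ and each history } h\in H_{t},
\]
where $(0,h)$ denotes the $\Gamma(T+1)$-history obtained by prepending an all-zero period to $h$. Because prepending $0$ is order-preserving, monotonicity of $s'$ in history, together with the irreversibility constraint $s'_{i,t+1}\geq a_{i,t}$, is inherited directly from $g$. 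The realized path under $s'$ is again $(0,\ldots,0,a^{\ast})$, so the outcome is preserved. Finally, the $T$-stage subgame rooted at any history $h$ is structurally identical to the $(T+1)$-stage subgame of $g$ rooted at $(0,h)$---same number of remaining stages, identical action sets, identical irreversibility constraints, and identical terminal payoffs---so subgame perfection of $g$ on the tail carries over verbatim to subgame perfection of $s'$.

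The main obstacle is this final subgame-isomorphism step: because $s'$ and $g$ live on trees of different depth, one must verify carefully that the initial period of $g$ is genuinely payoff-irrelevant, which is precisely the content of \autoref{lem:1}---under $g$ the first-period action is identically $0$, and no first-period deviation can be profitable given how $g$ responds on the continuation. Once this identification is secured, the two inclusions together, iterating the nested one if desired, yield the full chain $O(\Gamma(T+1))\subset O(\Gamma(T))\subset\cdots\subset O(\Gamma(2))\subset O(\Gamma)$ for every $T\geq 2$.
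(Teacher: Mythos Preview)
Your proof is correct and follows essentially the same route as the paper: invoke \autoref{lem:1} to put the MSPNE in canonical lazy form, then observe that the $\Gamma(T)$ strategy obtained by restricting to the subgame of $\Gamma(T+1)$ after the all-zero first period (your map $h\mapsto(0,h)$ is exactly the paper's $c_1=0^N,\ c_k=b_{k-1}$) is itself an MSPNE with the same outcome. You additionally spell out the outer inclusion $O(\Gamma(T))\subset O(\Gamma)$ via the terminal-stage best-response argument, which the paper leaves implicit.
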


We also prove a result similar to \citet{farrell1985standardization} and \citet{gale1995dynamic} as a direct corollary of \autoref{lem:1}, which shows that the unique MSPNE outcome of the monotone game is the socially efficient one, given the commitment stages are long enough.

\begin{corollary}\label{cor:1}
$O(\Gamma(|N|))=\{N\}$.
\end{corollary}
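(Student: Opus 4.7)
The plan is induction on $|N|$, combining \autoref{prop:1} (which via \autoref{lem:1} identifies every MSPNE outcome of $\Gamma(T)$ as a stage-game NE) with a profitable-deviation argument to rule out every outcome other than $N$. For the base case $|N|=2$: \autoref{prop:1} restricts MSPNE outcomes of $\Gamma(2)$ to stage-game NEs, and the no-dominant condition excludes $\{1\}$ and $\{2\}$ (since $u_i(1,1)>u_i(0,1)$), leaving only $\emptyset$ and $N$. To rule out $\emptyset$, suppose $s$ realizes it and let player~$1$ deviate by committing at stage~$1$; at stage~$2$ with history $(1,0)$, subgame perfection together with the no-dominant inequality $u_2(1,1)>u_2(0,1)$ forces $s_2((1,0))=1$, so the continuation outcome is $N$, and the tie-break clause of common interests yields $u_1(N)>u_1(\emptyset)$---contradicting MSPNE.

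For the inductive step I assume the claim for every stage game on fewer than $|N|$ players satisfying \autoref{ass:1} and no-dominant. Take a MSPNE $s$ of $\Gamma(|N|)$ with outcome $X\neq N$, and pick $i^*\in N\setminus X$; by irreversibility $i^*$ plays $0$ throughout the equilibrium path. I would study the unilateral deviation in which $i^*$ commits at stage~$1$. The resulting stage-$2$ continuation is effectively an $(|N|-1)$-stage synchronous game on the modified stage game $\tilde\Gamma$ in which $i^*$'s action is permanently locked at $1$. Now $\tilde\Gamma$ inherits \autoref{ass:1}, and crucially no player in $\tilde\Gamma$ can acquire a dominant strategy~$0$---this would contradict $u_j(1,1^{N\setminus j})>u_j(0,1^{N\setminus j})$ on $\Gamma$. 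Iteratively eliminating any dominant-$1$ players from $\tilde\Gamma$ (in the spirit of \autoref{lem:0}) yields a core $\tilde\Gamma^*$ on some $N^*\subsetneq N$ satisfying both \autoref{ass:1} and no-dominant; the inductive hypothesis applied to $\tilde\Gamma^*$, combined with \autoref{prop:1} to absorb the extra stages ($|N|-1\geq|N^*|$), forces the core's MSPNE outcome to be $N^*$. Together with the eliminated dominant-$1$ players and $i^*$'s commitment, the continuation outcome of the deviation is $N$ (the case $N^*=\emptyset$ is immediate).

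It remains to compare payoffs. In equilibrium, $u_{i^*}(X)=u_{i^*}(0,X)=u_{i^*}^*(X)$ since $X$ is a stage-game NE with $i^*\notin X$; under the deviation, $u_{i^*}(N)=u_{i^*}(1,1^{N\setminus i^*})=u_{i^*}^*(1^{N\setminus i^*})$. Because no-dominant rules out $X=N\setminus\{i^*\}$ as a NE, we have $X<1^{N\setminus i^*}$ strictly in the product order on $A_{N\setminus i^*}$, so the tie-break clause of common interests delivers $u_{i^*}(N)>u_{i^*}(X)$. Hence the deviation is strictly profitable, contradicting MSPNE of $s$, and therefore $X=N$.

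The main obstacle is executing the reduction in the inductive step: one must verify that $\tilde\Gamma$ inherits the structural hypotheses---most importantly that no dominant-$0$ strategy can emerge in it, which is where the no-dominant condition on $\Gamma$ is used in an essential way---and that the iterated elimination of dominant-$1$ players leaves behind a legitimate smaller instance on which the inductive hypothesis together with \autoref{prop:1} delivers the required continuation outcome $N$. A minor subtlety is invoking the tie-break clause, which hinges on the observation $X\neq N\setminus\{i^*\}$, itself a direct consequence of no-dominant.
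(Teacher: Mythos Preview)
Your proposal is correct and follows essentially the same route as the paper's proof: induction on $|N|$, pick a player $i^*$ outside the outcome, have $i^*$ commit at $t=1$, and use the inductive hypothesis on the $(|N|-1)$-stage continuation to force outcome $N$, then observe the deviation is profitable. The paper's version is terser---it invokes \autoref{lem:1} to normalize the MSPNE into the ``no-commitment-until-terminal'' form before deviating, whereas you work directly from subgame perfection and monotonicity; and you are more careful than the paper about the reduction (checking that no dominant-$0$ strategy appears in $\tilde\Gamma$, iterating out dominant-$1$ players via \autoref{lem:0}, and invoking \autoref{prop:1} to absorb surplus stages) and about the payoff comparison (making the tie-break clause of common interests explicit, and noting $X\neq N\setminus\{i^*\}$).
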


\begin{proof}
Using mathematical induction. The argument is true for any of the monotone games with $|N|=2$. Suppose the argument is true for any of the monotone games with $|N|=k$. Then suppose it fails for a monotone game with $|N|=k+1$, therefore at least one player $i$ takes action $0$ as the outcome of the strategy profile. We only consider the strategy in the form of \autoref{lem:1}, and we claim that such player $i$ has the incentive to deviate at $t=1$: in the remaining sub-game after deviation, mathematical induction implies in all MSPNE of the sub-game, all players take $1$, which is therefore a profitable deviation.
\end{proof}

In fact, for an adversarial coordination approach only, we could immediately focus on the characterization of the minimum number of stages needed to guarantee all players take the socially efficient action in all MSPNE, i.e., the smallest $T\in \mathbb{N}^+$ such that $O(\Gamma(T))=\{N\}$, which is well-defined. Nevertheless, we will show a general picture of the MSPNE outcome set.

\begin{figure}[h]
  \centering
  \begin{tikzpicture}[>=stealth, node distance=2cm]
    
    \coordinate (1) at (1,1);
    \coordinate (2) at (3,3);
    \coordinate (6) at (4,4);
    \coordinate (3) at (4,5);
    \coordinate (4) at (5,4);
    \coordinate (5) at (6,6);

    \filldraw[black] (1) circle (2pt);
    \filldraw[black] (2) circle (2pt);
    \filldraw[black] (6) circle (2pt);
    \filldraw[black] (3) circle (2pt);
    \filldraw[black] (4) circle (2pt);
    \filldraw[black] (5) circle (2pt);

    \draw[dashed, red, rotate around={45:(3.5,3.5)}] (3.5,3.5) ellipse (4.5cm and 2.5cm) node[right=0cm, above=-4.0cm] {$O(\Gamma)$};
    \draw[dashed, red, rotate around={45:(5.0,5.0)}] (5.0,5.0) ellipse (2.0cm and 1.5cm) node[right=0cm, above=-2.5cm] {$O(\Gamma(2))$};
    \draw[dashed, red] (5) circle (0.4) node[right=0cm, above=-1.0cm] {$O(\Gamma(3))$};
  \end{tikzpicture}
  \caption{A Graphical Illustration of MSPNE lattice}
  \label{fig:x}
\end{figure}
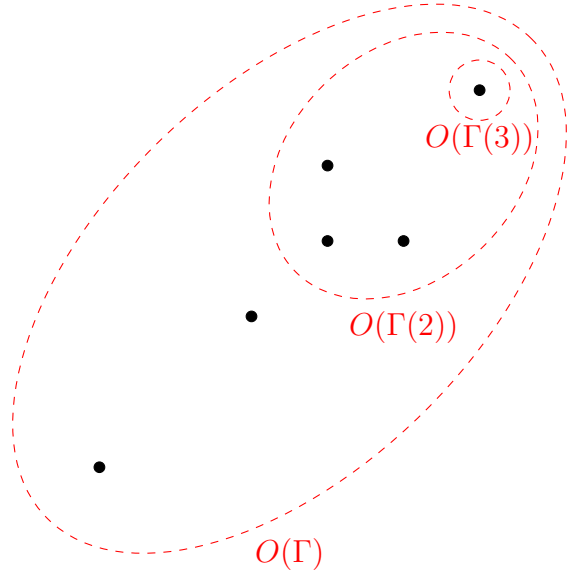

The following lemma is a preliminary tool to prove that the outcome set of MSPNE is a lattice, as it confirms the existence of minimum elements. The detailed proof is left in the appendix, and the key is to construct the most conservative strategy, that is, taking the lowest possible action collectively.

\begin{lemma}\label{lem:2}
If $\forall i\in N, \exists N_i\in O(\Gamma(T)) \text{ such that } i\notin N_i$, then $\emptyset \in O(\Gamma(T))$.  
\end{lemma}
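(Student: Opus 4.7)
The plan is to construct the ``most conservative'' strategy profile $s^*$ by taking the componentwise minimum of the given MSPNEs: define
\[
s^*_j(h) := \min_{i \in N} s^i_j(h)
\]
for every player $j$ and every history $h$, where $s^i$ denotes an MSPNE realizing outcome $N_i$. I aim to show that $s^*$ is itself an MSPNE and that its outcome is $\emptyset$, which is exactly the lemma.

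I would first dispose of the easy properties. The strategy $s^*$ respects irreversibility (minimum of $1$'s is $1$), is monotone in history (minimum of monotone functions is monotone), and has outcome $\emptyset$. For the outcome, note that $i \notin N_i$ together with irreversibility forces $s^i_i \equiv 0$ along the entire on-path history of $s^i$; monotonicity of $s^i$ then yields $s^i_i(h) = 0$ for every history $h$ pointwise below some on-path history of $s^i$. Taking $i = j$ and the minimal history $0^{\ldots}$, we get $s^j_j(0^{\ldots}) = 0$, hence $s^*_j(0^{\ldots}) = 0$. A short induction on $t$ confirms that the on-path history of $s^*$ is always $0^{\ldots}$, so the realized outcome is $\emptyset$.

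The technical heart is to show $s^*$ is an SPNE, which I would do by backward induction on $t$. At the base case $t = T$, each $s^i$ prescribes at history $h$ a stage-game Nash equilibrium $(s^i_j(h))_j$ of the game restricted to non-committed players. By the classical lattice property of NE sets in games of strategic complementarities (\citet{vives1990nash}, \citet{zhou1994set}), the componentwise meet $(s^*_j(h))_j$ is itself a NE. For the inductive step, assume $s^*$ is an SPNE on all subgames starting after $t$. At a subgame $(t, h)$, for each player $j$ pick an index $i^* = i^*(j,h)$ attaining $s^{i^*}_j(h) = s^*_j(h)$. The SPNE property of $s^{i^*}$ says $s^{i^*}_j(h)$ is a best response for $j$ against $s^{i^*}_{-j}(h)$ with continuation $s^{i^*}$. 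I would then invoke \autoref{ass:1}---single-crossing combined with the deviation-proof condition and common interests---to transfer this optimality to the statement that $s^*_j(h)$ remains a best response against $s^*_{-j}(h) \leq s^{i^*}_{-j}(h)$ with the pointwise-smaller continuation $s^*$ given by the inductive hypothesis and monotonicity.

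The main obstacle is the transfer step in the inductive argument. Lowering simultaneously both the opponents' current actions and their continuation play changes the payoffs $j$ faces from deviating versus not deviating, and a priori one could worry that a deviation from $0$ to $1$ becomes strictly profitable under $s^*$ even though it was not under $s^{i^*}$. The deviation-proof condition is precisely what rules this out: it forces a weak preference for $0$ against a higher profile of opponents' actions to extend to the same weak preference at a lower profile once combined with single-crossing, while the tie-break encoded in common interests handles the strict-versus-weak distinctions needed to prevent indifferences from collapsing. A case analysis on whether $s^*_j(h)$ is $0$ or $1$ and on whether the candidate deviation strictly moves the continuation then closes the inductive step and delivers the SPNE property of $s^*$.
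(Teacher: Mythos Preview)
Your base case rests on a false premise. Zhou's theorem says the Nash equilibrium set of a supermodular game is a complete lattice, but it does \emph{not} say the lattice meet is the componentwise minimum. In general the componentwise infimum of two Nash equilibria need not be a Nash equilibrium, even under \autoref{ass:1}. A five-player example with $u_i(0,a_{-i})\equiv 0$ and
\[
u_1(1)=a_2+a_3-\tfrac32,\quad u_2(1)=a_1+a_3-\tfrac32,\quad u_3(1)=a_1+a_2+a_4+a_5-\tfrac32,\quad u_4(1)=a_3+a_5-\tfrac32,\quad u_5(1)=a_3+a_4-\tfrac32
\]
has $\{1,2,3\}$ and $\{3,4,5\}$ as Nash equilibria, but their componentwise meet $\{3\}$ is not: player $3$ strictly prefers $0$ there. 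So at an off-path terminal history where two of your $s^i$'s happen to prescribe such incomparable equilibria, $s^*(h)$ fails to be a best response for the shared player, and $s^*$ is not an SPNE. Your inductive step inherits the same difficulty, since lowering opponents' current actions and continuation simultaneously can make the upward deviation strictly profitable for exactly this kind of ``bridge'' player; neither single-crossing nor deviation-proof blocks that direction.

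The paper's proof avoids this entirely by not taking a pointwise minimum of the given MSPNEs. It builds a single explicit profile: everyone plays $0$ before the terminal stage, and at each terminal history $h$ everyone plays according to $P(h)$, the committed players together with the \emph{least} Nash equilibrium of the residual stage game $\overline\Gamma$. That least equilibrium exists and is unique, so there is no meet-of-equilibria issue. The hypothesis $N_i\in O(\Gamma(T))$ with $i\notin N_i$ is used only afterwards, to certify that no player $i$ wants to deviate at an earlier stage: one compares $u_i(P(e_i))$ with $u_i(\emptyset)$ via a chain through $u_i(P(e_i)\cup N_i)$ and $u_i(N_i)$, using deviation-proof, single-crossing, and the tie-break clause of common interests in turn. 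The $N_i$'s enter as certificates in an inequality, not as building blocks of the strategy.
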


Now we define two operators: $\phi_{\Gamma}$ and $\tau_{\Gamma}$, which play a crucial role in characterizing MSPNE outcome.

\begin{definition}\label{def:3}
$\phi_{\Gamma}: \mathbb{N}^+ \rightarrow 2^N, \phi_{\Gamma}(T):= \text{min} \{X: \forall Y\in O(\Gamma(T)), X\subset Y\}$ which represents those players who take action $1$ in all MSPNE of $\Gamma(T)$. $\tau_{\Gamma}: 2^N\rightarrow  \mathbb{N}^+, \tau_{\Gamma}(X):=\text{min} \{T: \forall Y\in O(\Gamma(T)), X \subset Y\}$, which represents the smallest $T$ such that those players $X$ take action $1$ in all MSPNE of $\Gamma(T)$.
\end{definition}

Note that \autoref{prop:1} and \autoref{cor:1} guarantees that $\tau_{\Gamma}$ is well-defined, and both $\phi_{\Gamma}$ and $\tau_{\Gamma}$ are bounded, non-decreasing single-valued correspondence. One may further see these two operators as generalized converse, though both operators are neither injective nor subjective in general: $\phi_{\Gamma}(T)=\text{max} \{X\in 2^N: \tau_{\Gamma}(X)\leq T\}$, $\tau_{\Gamma}(X)=\text{min} \{T\in \mathbb{N}^+: \phi_{\Gamma}(T)\supset X \}$. \autoref{lem:2} further shows $\phi_{\Gamma}(T)$ is the least element of $O(\Gamma(T))$. 

We also need two specific auxiliary games. The players in the former game act as if those players outside of the auxiliary game all take the highest action for sure, while the players in the latter act as if those players outside of the auxiliary game all take the lowest action for sure. 

\begin{definition}
We define $\overline{\Gamma}(X)$ as the following stage game: $(X,A|_{X},\overline{U}|_{X})$, where $\forall i\in X, \forall Y\subset X$, $\overline{u}_i(Y)=u_i(Y\cup (N\backslash X))$. We define $\underline{\Gamma}(X)$ as the following stage game: $(X,A|_{X},\underline{U}|_{X})$, where $\forall i\in X, \forall Y\subset X$, $\underline{u}_i(Y)=u_i(Y)$.
\end{definition}

Now we are well prepared to give a complete characterization of $O(\Gamma(T))$.

\begin{proposition}\label{prop:2}
$X$ is a MSPNE outcome of $\Gamma(T)$ iff: $X$ is a NE and $\phi_{\overline{\Gamma}(N\backslash X)}(T)=\emptyset$.
\end{proposition}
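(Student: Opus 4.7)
The plan is to prove each direction of the biconditional by converting MSPNEs between $\Gamma(T)$ and the auxiliary $T$-stage game derived from $\overline{\Gamma}(N \setminus X)$, using \autoref{lem:1} to canonicalize strategies and the remark following \autoref{lem:2} that $\phi$ coincides with the least MSPNE outcome.

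For the necessity direction, suppose $s$ is an MSPNE of $\Gamma(T)$ with outcome $X$. I first apply \autoref{lem:1} to replace $s$ by an equivalent canonical MSPNE $g$ in which every player plays $0$ at $t=1$; propagating this through the recursion in \autoref{lem:1}, the equilibrium path of $g$ consists of all zeros through period $T-1$, and $g_T$ delivers outcome $X$. Since every player chooses freely at $T$ on this path and $g$ is subgame-perfect, the action profile $X$ must be a NE of the stage game. Next I project $s$ onto the players in $N\setminus X$: on the equilibrium path the $X$-players are committed to $1$, so the induced strategies $\tilde{s}$ for $N\setminus X$ players form an MSPNE of $\overline{\Gamma}(N \setminus X)(T)$, because the auxiliary payoffs for these players coincide with their original payoffs once $X$ is pinned at $1$. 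The outcome of $\tilde{s}$ is $\emptyset$, so the definition of $\phi$ together with the remark after \autoref{lem:2} yields $\phi_{\overline{\Gamma}(N \setminus X)}(T) = \emptyset$.

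For the sufficiency direction, let $\hat{s}$ be an MSPNE of $\overline{\Gamma}(N \setminus X)(T)$ with outcome $\emptyset$, guaranteed by $\phi_{\overline{\Gamma}(N \setminus X)}(T) = \emptyset$. I construct a strategy profile $s$ in $\Gamma(T)$ by setting $s_i \equiv 1$ for every $i \in X$ and $s_i(h) = \hat{s}_i(h|_{N \setminus X})$ for every $i \in N \setminus X$. Monotonicity is immediate from the monotonicity of $\hat{s}$ and the constancy of the $X$-players' strategy. On the equilibrium path at $T$, the NE property of $X$ gives optimal best responses. A deviation by some $i \in X$ to $0$ at $t < T$ is unprofitable: by monotonicity of $\hat{s}$ the $N\setminus X$-players continue to play $0$, and the stage NE property of $X$ together with \autoref{ass:1} ensures that $i$ still optimally switches to $1$ at $T$, recovering outcome $X$. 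Deviations by any $i \in N \setminus X$ are ruled out by the MSPNE property of $\hat{s}$, since along such paths the original payoffs of these players reduce exactly to the auxiliary payoffs.

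The main obstacle is the verification of incentives in the sufficiency construction at off-path histories where some $X$-player has deviated to $0$. Specifically, I must invoke the deviation-proof condition in \autoref{ass:1} to ensure that no $i \in N \setminus X$ is tempted to take $1$ in reaction to a partially committed $X$, and to confirm that the specification $s_i(h) = \hat{s}_i(h|_{N \setminus X})$ remains a subgame-perfect best response when the $X$-subset committed by period $t$ is strictly smaller than $X$. A secondary subtlety on the necessity side is justifying that $\tilde{s}$ inherits both monotonicity and subgame perfection when passing from the richer history space of $\Gamma(T)$ to the coarser history space of $\overline{\Gamma}(N \setminus X)(T)$; this relies on the equality of auxiliary-game utilities and original utilities once $X$ is fully committed to $1$, which \autoref{lem:1} makes automatic by letting us consider only histories extending the canonical all-zero path.
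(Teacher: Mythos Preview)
Your proposal is correct and runs closely parallel to the paper's argument, with one genuine difference worth noting.

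For necessity, both you and the paper canonicalize via \autoref{lem:1} and then use monotonicity to conclude that the $X$-players take action $1$ at \emph{every} history in period $T$ (since they do so at the lowest history). The paper phrases the rest as a contradiction---if $\phi_{\overline{\Gamma}(N\setminus X)}(T)=Y\neq\emptyset$, those $Y$ players would have a profitable deviation in the original game---while you phrase it as a direct projection producing an MSPNE $\tilde s$ of the auxiliary game with outcome $\emptyset$. These are the same argument in different clothing; your ``secondary subtlety'' about $\tilde s$ inheriting subgame perfection dissolves precisely because the $X$-players' terminal actions are pinned at $1$ for every history by monotonicity, so the $N\setminus X$ players' continuation problem at $(\tilde h,0_X)$ is literally the auxiliary game.

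For sufficiency, the constructions differ. The paper keeps \emph{all} players at $0$ through period $T-1$ and at $T$ uses the ``most conservative'' response function $P(\cdot)$ from \autoref{lem:2}, appealing to that lemma for the SPNE verification. You instead set $s_i\equiv 1$ for $i\in X$ and let the $N\setminus X$ players ignore the $X$-history entirely. Your construction is valid and arguably more transparent: because every $i\in X$ plays $1$ unconditionally at $T$, the $N\setminus X$ players' continuation payoffs reduce \emph{exactly} to $\overline{\Gamma}(N\setminus X)$ at every history, not just on path. This is why your self-identified ``main obstacle'' is in fact a non-issue---there is no need to invoke deviation-proofness for partially committed $X$, since the specification $s_i\equiv 1$ makes the $X$-block fully committed at $T$ regardless of what happened earlier. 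The only check left for $i\in X$ is that playing $1$ at $T$ is a best response against $X\setminus\{i\}$ plus whatever $Y\subset N\setminus X$ has been triggered, which follows from the NE property of $X$ and single-crossing; early-period deviations by $i\in X$ are payoff-irrelevant under your profile. The paper's construction buys consistency with the canonical form of \autoref{lem:1} and reuses \autoref{lem:2} wholesale; yours buys a self-contained verification that does not route through the inductive machinery of \autoref{lem:2}.
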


The sketch of the proof: On the one hand, given $X$ is a MSPNE, \autoref{lem:1} makes sure there exists a MSPNE such that all players do not pledge in at all and those players in $X$ take action $1$ in the decision-making stage. Given the strategy is monotonic in history, those players in $X$ take action $1$ regardless of the history, so those players outside of $X$ may simplify the game into $\overline{\Gamma}(N\backslash X)(T)$ and apply \autoref{lem:2} now. On the other hand, one may construct a MSPNE in the form of \autoref{lem:1} via the most conservative strategy in the proof of \autoref{lem:2}, which realizes the outcome $X$.

\begin{corollary}
\begin{enumerate}
    \item $O(\Gamma(T))$ is a non-empty complete lattice: $\forall a,b\in O(\Gamma(T))$, $a\wedge^{\Gamma(T)} b=a\wedge^{\Gamma} b$, $a\vee^{\Gamma(T)} b= (a\vee^{\Gamma} b) \cup \phi_{\overline{\Gamma}(N\backslash (a\vee^{\Gamma} b))}(T)$.
    \item $O(\Gamma(T))$ is weakly increasing in weak set order: $\text{sup } O(\Gamma(T+1)) \geq \text{sup } O(\Gamma(T))$, \text{inf} $O(\Gamma(T+1)) \geq \text{inf } O(\Gamma(T))$.
    \item $O(\Gamma(T+1))$ is not a sub-lattice of $O(\Gamma(T))$ in general.
\end{enumerate}
\end{corollary}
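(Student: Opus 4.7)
The approach is to reduce each claim to \autoref{prop:2} together with the lattice structure of the stage-game NE set (\citet{vives1990nash,zhou1994set}) and the inclusion $O(\Gamma(T+1)) \subseteq O(\Gamma(T))$ from \autoref{prop:1}.

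For the first claim, non-emptiness holds because $N \in O(\Gamma(T))$ for every $T$: $N$ is a stage NE by the non-dominance assumption, and the auxiliary condition $\phi_{\overline{\Gamma}(\emptyset)}(T)=\emptyset$ is vacuous. For the meet formula, note that $X_a\cap X_b$ is a stage NE by the lattice property of NE, so by \autoref{prop:2} it suffices to verify $\phi_{\overline{\Gamma}(N\setminus(X_a\cap X_b))}(T)=\emptyset$. Fix $i$ in this larger auxiliary game; without loss of generality $i\in N\setminus X_a$. Since $\phi_{\overline{\Gamma}(N\setminus X_a)}(T)=\emptyset$, there exists a MSPNE of $\overline{\Gamma}(N\setminus X_a)(T)$ whose outcome excludes $i$. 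Lift it to $\overline{\Gamma}(N\setminus(X_a\cap X_b))(T)$ by assigning the extra players in $X_a\setminus X_b$ the constant monotone strategy of playing $1$; by single-crossing the total set of $1$-playing opponents faced by each original player remains $X_a$, so incentives are preserved and the lifted profile is still a MSPNE excluding $i$. Applying \autoref{lem:2} yields the desired $\phi=\emptyset$. For the join, set $Y = X_a\cup X_b$ and $Z = \phi_{\overline{\Gamma}(N\setminus Y)}(T)$; strategic complementarities together with the fact that $Z$ is a stage NE of the auxiliary game show that $Y\cup Z$ is a stage NE of $\Gamma$, and $\phi_{\overline{\Gamma}(N\setminus(Y\cup Z))}(T)=\emptyset$ follows from the minimality of $Z$ combined with a parallel lifting argument. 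Completeness is automatic since $2^N$ is finite.

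For the second claim, since $N$ is always a MSPNE outcome, $\sup O(\Gamma(T)) = N$ is constant in $T$ and hence weakly increasing. For the infimum, $O(\Gamma(T+1))\subseteq O(\Gamma(T))$ by \autoref{prop:1} implies $\inf O(\Gamma(T+1)) \geq \inf O(\Gamma(T))$, since $\inf O(\Gamma(T+1))$ is a lower bound for its subset. For the third claim, the meet operation coincides across levels, so any obstruction must come from the join. Applying \autoref{prop:1} to any auxiliary game gives $\phi_{\overline{\Gamma}(\cdot)}(T) \subseteq \phi_{\overline{\Gamma}(\cdot)}(T+1)$; whenever this inclusion is strict one has $a\vee^{\Gamma(T+1)} b \supsetneq a\vee^{\Gamma(T)} b$, so the smaller element lies in $O(\Gamma(T))$ but not in $O(\Gamma(T+1))$, violating closure under the ambient join. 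A concrete counterexample can be constructed by selecting a stage game whose auxiliary subgame on $N\setminus(X_a\cup X_b)$ requires exactly $T+1$ (rather than $T$) stages to commit an additional player, for instance by nesting a weakest-link structure of appropriate tree-depth.

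The main obstacle is the verification in the first claim that the meet lies in $O(\Gamma(T))$: lifting a MSPNE from $\overline{\Gamma}(N\setminus X_a)(T)$ to the enlarged auxiliary game $\overline{\Gamma}(N\setminus(X_a\cap X_b))(T)$ requires controlling off-path play of the added players so that no deviation in the larger game becomes profitable, and a symmetric issue arises when augmenting the join by $\phi$. This is precisely where the strategic complementarity, common interests, and deviation-proof conditions of \autoref{ass:1} are needed.
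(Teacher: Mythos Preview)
Your approach to parts 1 and 2 is essentially the paper's: reduce to \autoref{prop:2} and \autoref{prop:1}, then check the $\phi$-condition on the relevant auxiliary game. The paper compresses your lifting argument for the meet into the single inclusion $\phi_{\overline{\Gamma}(N\setminus (a\wedge^{\Gamma} b))}(T)\subset \phi_{\overline{\Gamma}(N\setminus a)}(T)\cup\phi_{\overline{\Gamma}(N\setminus b)}(T)=\emptyset$, which is the contrapositive of what you spell out; your explicit lift (extend by constant-$1$ for the extra players and project histories) is the content behind that line. One caution: you write $X_a\cap X_b$ and $X_a\cup X_b$ where the paper writes $a\wedge^{\Gamma} b$ and $a\vee^{\Gamma} b$, the meet and join \emph{in the NE lattice}. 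These need not coincide with the pointwise set operations under single-crossing alone, so your sentence ``$X_a\cap X_b$ is a stage NE by the lattice property of NE'' is not justified as stated; the argument should be run with the lattice meet/join throughout, exactly as the paper does.

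The genuine gap is part 3. To show ``not a sublattice in general'' you must exhibit a concrete instance; a recipe (``nest a weakest-link structure of appropriate tree-depth'') is not a proof, because one still has to verify that the resulting $a,b$ actually lie in $O(\Gamma(T{+}1))$ while their $\Gamma(T)$-join does not. The paper supplies an explicit eight-player weakest-link game (the graph in \autoref{fig:5}) and checks directly that $O(\Gamma)=\{\emptyset,\{1,2,3\},\{4,5,6\},\{1,\dots,6\},N\}$ while $O(\Gamma(2))=\{\emptyset,\{1,2,3\},\{4,5,6\},N\}$; here $\{1,2,3\}\vee^{\Gamma}\{4,5,6\}=\{1,\dots,6\}\notin O(\Gamma(2))$, so the join in $O(\Gamma(2))$ is $N$, strictly larger than the join in $O(\Gamma)$. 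Your abstract description points in the right direction but stops short of this verification.
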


\begin{proof}
Given $O(\Gamma(T))\subset O(\Gamma)$, we must have $\forall a,b\in O(\Gamma(T)), a\wedge^{\Gamma(T)} b \leq a\wedge^{\Gamma} b$, so it is enough to show that $a\wedge^{\Gamma} b \in O(\Gamma(T))$: $\phi_{\overline{\Gamma}(N\backslash (a\wedge^{\Gamma} b))}(T)\subset \phi_{\overline{\Gamma}(N\backslash a)}(T) \cup \phi_{\overline{\Gamma}(N\backslash b)}(T)=\emptyset$. Given $O(\Gamma(T))\subset O(\Gamma)$, we must have $\forall a,b\in O(\Gamma(T)), a\vee^{\Gamma(T)} b \geq a\vee^{\Gamma} b$, $a\vee^{\Gamma(T)} b$ is the least MSPNE that is greater than  $a\vee^{\Gamma} b$. Weak set order is obvious since $\phi_{\Gamma}(T)$ is weakly increasing. One counterexample is given in the following: consider the following weakest link game, not hard to show that, $O(\Gamma)=\{ \emptyset, \{1,2,3 \}, \{4,5,6 \}, \{1,2,3,4,5,6 \}, N \}$, while $O(\Gamma(T=2))=\{ \emptyset, \{1,2,3 \}, \{4,5,6 \}, N \}$.

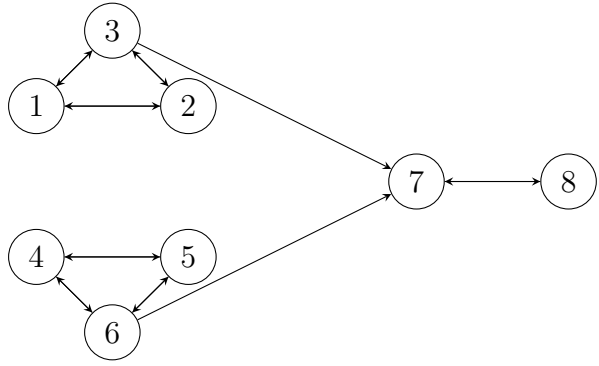
\begin{figure}[h]
  \centering
\begin{tikzpicture}[>=stealth, node distance=2.5cm, every node/.style={circle, draw, minimum size=20pt}]
    
    \node (1) at (-1,2) {1};
    \node (2) at (1,2) {2};
    \node (3) at (0,3) {3};
    \node (4) at (-1,0) {4};
    \node (5) at (1,0) {5};
    \node (6) at (0,-1) {6};
    \node (7) at (4,1) {7};
    \node (8) at (6,1) {8};

    \draw[->] (1) -- (2);
    \draw[->] (1) -- (3);
    \draw[->] (2) -- (1);
    \draw[->] (2) -- (3);
    \draw[->] (3) -- (1);
    \draw[->] (3) -- (2);
    
    \draw[->] (4) -- (5);
    \draw[->] (4) -- (6);
    \draw[->] (5) -- (4);
    \draw[->] (5) -- (6);
    \draw[->] (6) -- (4);
    \draw[->] (6) -- (5);
    
    \draw[->] (3) -- (7);
    \draw[->] (6) -- (7);
    \draw[->] (8) -- (7);
    \draw[->] (7) -- (8);
  \end{tikzpicture}
  \caption{$O(\Gamma(T+1))$ is possibly not a sub-lattice of $O(\Gamma(T))$}
  \label{fig:5}
\end{figure}

\end{proof}

\subsection{Recursive Characterization}

As we haven shown in \autoref{prop:2}, $\phi_{\Gamma}$, or $\tau_{\Gamma}$ uniquely determines the MSPNE of the monotone game. We specifically take care of the smallest $T\in \mathbb{N}^+$ such that $\phi_{\Gamma}(T)=N$. \autoref{cor:1} derives a bound that such $T\leq |N|$. Let us see some intuition about why the bound above is possibly not tight first.

\begin{figure}[h]
  \centering
  \begin{tikzpicture}[>=stealth, node distance=2cm, every node/.style={circle, draw, minimum size=20pt}]
    
    \node (1) at (0,0) {1};
    \node (2) at (30:2) {2};
    \node (3) at (90:2) {3};
    \node (4) at (150:2) {4};
    \node (5) at (210:2) {5};
    \node (6) at (270:2) {6};
    \node (7) at (330:2) {7};

    \foreach \x in {2,...,7} {
        \draw[->] (1) -- (\x);
        \draw[->] (\x) -- (1);
    }
  \end{tikzpicture}
  \caption{Why SPNE is fast: Weakest Link Game with Star Network}
  \label{fig:3}
\end{figure}
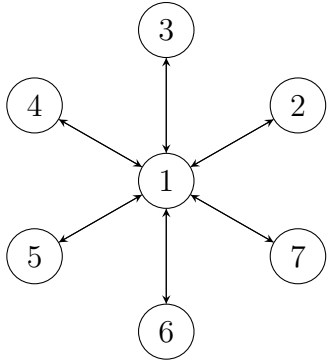

The first intuitive example is the weakest link game with a star network. In this example, only two stages are needed, since the central players could always pledge in the commitment stage to guarantee the efficient outcome.

\begin{figure}[h]
  \centering
  \begin{tikzpicture}[>=stealth, node distance=2cm, every node/.style={circle, draw, minimum size=20pt}]
    
    \node (1) at (90:2) {1};
    \node (2) at (45:2) {2};
    \node (3) at (0:2) {3};
    \node (4) at (-45:2) {4};
    \node (5) at (-90:2) {5};
    \node (6) at (-135:2) {6};
    \node (7) at (180:2) {7};
    \node (8) at (135:2) {8};

    \foreach \x/\y in {1/2, 2/3, 3/4, 4/5, 5/6, 6/7, 7/8, 8/1} {
        \draw[->] (\x) -- (\y);
    }
  \end{tikzpicture}
  \caption{Why SPNE is fast: Weakest Link Game with Chain Network}
  \label{fig:2}
\end{figure}
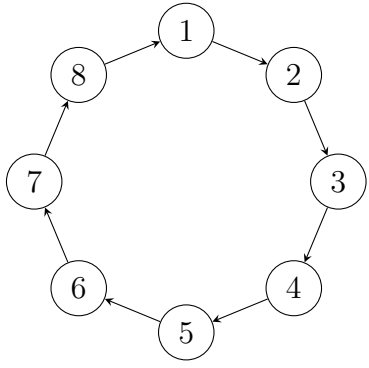

The second intuitive example is the weakest link game with a chain network. In this example, only two stages are needed, since any of the players could always pledge in the commitment stage, then the direct successor has a strictly dominant strategy, and the successor of the successor has an iterated strictly dominant strategy, and so on, which induces the efficient outcome.

\begin{figure}[h]
  \centering
\begin{tikzpicture}[>=stealth, node distance=2cm, every node/.style={circle, draw, minimum size=20pt}]
    
    \node (1) at (0,0) {1};
    \node (2) at (3,0) {2};
    \node (3) at (3,-3) {3};
    \node (4) at (0,-3) {4};

    \draw[->] (1) -- (2);
    \draw[->] (1) -- (4);
    \draw[->] (2) -- (1);
    \draw[->] (2) -- (3);
    \draw[->] (3) -- (4);
    \draw[->] (4) -- (3);
  \end{tikzpicture}
  \caption{Why MSPNE is faster}
  \label{fig:6}
\end{figure}
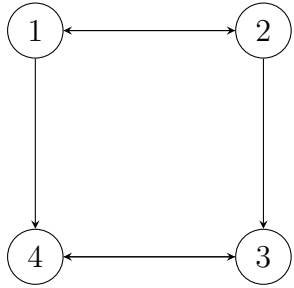

The last intuitive example is simply the weakest link game with several strongly connected components, which has been discussed in detail. As we have shown, MSPNE admits a simple analysis: once we have shown that a sub-group of players take action $1$ in all MSPNE, other players could without loss of generality simplify the game by removing the sub-group of players.

These three examples count three different mechanisms, which count all cases that facilitate adversarial coordination. Though one cannot compute $\phi_{\Gamma}$ directly in general, a recursive algorithm is developed to determine the generalized inverse $\tau_{\Gamma}$. We still need a concept similar to strict equilibrium, in the sense that all those players who take action $1$ prefer to do so strictly.

\begin{definition}[Strict Sufficiency]
Strictly sufficient set(SSS) is a subgroup of players who strictly prefer taking action $1$ collectively: $SSS(\Gamma)=\{X\in 2^N: \forall i\in X, u_i(X)>u_i(X\backslash i)\}$. Strictly sufficient equilibria is both a strictly sufficient set and a Nash equilibrium: $SSE(\Gamma)=\{ X\in 2^N: X\in SSS(\Gamma) \bigcap O(\Gamma)\}$.
\end{definition}

Both $SSS$ and $SSE$ are non-empty since $N\in SSE$. We also define $\emptyset \notin SSS$. Now we show the main theorem of this paper.

\begin{theorem}\label{thm:1}
$\tau_{\Gamma}(X)=\text{min}_{Y\in SSS(\Gamma): Y\supset X} \tau(\underline{\Gamma}(Y))$.
\begin{footnotesize}
   \begin{equation*}	    
\tau(\Gamma)= \begin{cases}
 \tau(\overline{\Gamma}(N\backslash i)) & \text{if } u_i(i)>u_i(\emptyset)\\
    1 & \text{else if } N=\emptyset\\
    \text{min} \{ 1+ \text{min}_{i\in N} \tau(\overline{\Gamma}(N\backslash i)), min_{X\in SSS(\Gamma)\backslash \{\emptyset, N\} } max\{ \tau(\underline{\Gamma}(X)),\tau(\overline{\Gamma}(N\backslash X)) \}\} & \text{else } 
 		\end{cases}
 	\end{equation*} 
\end{footnotesize}
All results hold if we change $SSE$ into $SSS$.
\end{theorem}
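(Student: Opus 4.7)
The plan is to prove the two pieces separately: first the reduction formula $\tau_\Gamma(X)=\min_{Y\in SSS(\Gamma),\,Y\supset X}\tau(\underline{\Gamma}(Y))$, and then the recursive formula for $\tau(\Gamma)$ itself. Throughout I lean on \autoref{prop:2} as the structural engine, on \autoref{lem:2} for the existence of a least MSPNE outcome $\phi_\Gamma(T)$, and on \autoref{lem:1} for the history-shift device that lets me put any MSPNE into a canonical form where commitments are recorded stage by stage.

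For the reduction formula I argue by two inequalities. For $\tau_\Gamma(X)\leq\tau(\underline{\Gamma}(Y))$ with any $Y\in SSS(\Gamma)$ containing $X$: set $T^{*}=\tau(\underline{\Gamma}(Y))$, take the all-ones MSPNE of $\underline{\Gamma}(Y)(T^{*})$ (which exists because $\phi_{\underline{\Gamma}(Y)}(T^{*})=Y$ by definition of $\tau$), and lift it to $\Gamma(T^{*})$ via the \autoref{lem:1} template, letting the players in $N\backslash Y$ play their best responses. The SSS property together with the deviation-proof clause of \autoref{ass:1} keeps every player in $Y$ strictly preferring $1$ in the lift, while \autoref{prop:2} certifies the resulting profile as an MSPNE of $\Gamma(T^{*})$ whose outcome contains $Y\supset X$. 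For the reverse inequality I take $T=\tau_\Gamma(X)$ and set $Y:=\phi_\Gamma(T)$; then $Y\supset X$ by definition and $Y$ is an NE of $\Gamma$ by \autoref{prop:2}, while the tie-break embedded in the common-interests condition, combined with the minimality of $\phi_\Gamma(T)$, forces $Y\in SSS(\Gamma)$ (any indifferent $i\in Y$ could be flipped to $0$ to produce a strictly smaller MSPNE, contradicting minimality). Applying \autoref{prop:2} inside $\underline{\Gamma}(Y)(T)$ then yields $\tau(\underline{\Gamma}(Y))\leq T$.

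For the recursive formula, Case 1 is immediate: $u_i(\{i\})>u_i(\emptyset)$ together with single-crossing makes action $1$ strictly dominant for $i$, so by \autoref{lem:0} player $i$ commits in every SPNE and the game collapses to $\overline{\Gamma}(N\backslash i)$. Case 2 is vacuous. For Case 3, I establish both options as upper bounds on $\tau(\Gamma)$. For the first option $1+\min_i\tau(\overline{\Gamma}(N\backslash i))$, I fix the minimizing $i$ and build an MSPNE of $\Gamma(T)$ in which $i$ commits at stage $1$; by induction the remaining $T-1$ stages of $\overline{\Gamma}(N\backslash i)$ force all other players to $1$, and the absence of dominant action $0$ rules out any profitable downward deviation by $i$. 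For the second option $\min_{X\in SSS\setminus\{\emptyset,N\}}\max\{\tau(\underline{\Gamma}(X)),\tau(\overline{\Gamma}(N\backslash X))\}$, I invoke the reduction formula just proved: with $T$ equal to the $\max$, the players of $X$ are locked into $1$ within $\tau(\underline{\Gamma}(X))$ stages, after which the residual subgame is $\overline{\Gamma}(N\backslash X)$ with enough remaining stages to force $N\backslash X$ by induction. The $SSS$-versus-$SSE$ equivalence clause at the end follows because the witness $Y=\phi_\Gamma(T)$ produced in the $\geq$ direction is automatically an NE, hence lies in $SSE\subset SSS$, so the two minima coincide.

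The main obstacle will be the matching lower bound in Case 3, i.e., showing that if $T$ is strictly smaller than both options simultaneously then $\phi_\Gamma(T)\neq N$. My plan is to contrapose: assume $\phi_\Gamma(T)=N$, put the certifying MSPNE into the canonical \autoref{lem:1} form, and examine the first stage at which some nonempty set $Z$ commits. If $|Z|=1$, say $Z=\{i\}$, then the subgame after $i$'s commitment is $\overline{\Gamma}(N\backslash i)$ with $T-1$ stages and must itself force $N\backslash i$, giving $T\geq 1+\tau(\overline{\Gamma}(N\backslash i))$. Otherwise the first committing set $Z$ must be a nontrivial SSS, since any indifferent member could profitably retract under the tie-break and strategic-complementarities conditions; then $\underline{\Gamma}(Z)$ has to lock $Z$ in and $\overline{\Gamma}(N\backslash Z)$ has to finish off the rest within the available stages, delivering $T\geq\max\{\tau(\underline{\Gamma}(Z)),\tau(\overline{\Gamma}(N\backslash Z))\}$. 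The delicate point is certifying that this first committing set really belongs to $SSS$ rather than merely to the equilibrium outcome set; this is where the tie-break inside common interests and the deviation-proof condition bear the full weight of the argument, ruling out knife-edge profiles that could otherwise allow a smaller $T$ to slip through.
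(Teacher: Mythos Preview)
Your proposal has two genuine gaps, one in each direction.

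\textbf{Upper bounds and the $\leq$ direction of the reduction formula.} You repeatedly argue by \emph{constructing} a single MSPNE whose outcome contains $X$ (or equals $N$). But $\tau_\Gamma(X)\leq T$ means that \emph{every} MSPNE outcome of $\Gamma(T)$ contains $X$; exhibiting one such profile proves nothing. Concretely, in your $\leq$ step you ``lift the all-ones MSPNE of $\underline{\Gamma}(Y)(T^{*})$ to $\Gamma(T^{*})$'' and conclude via \autoref{prop:2} that the lifted profile is an MSPNE with outcome $\supset Y$---but this does not exclude other MSPNE of $\Gamma(T^{*})$ with outcome not containing $X$. The paper's argument runs the other way: it freezes the players in $N\backslash Y$ at $0$, observes that in the resulting game $\underline{\Gamma}(Y)(T)$ every MSPNE forces all of $Y$ to $1$, and then uses monotonicity in history plus strategic complementarities to conclude that \emph{raising} the $N\backslash Y$ players back to their equilibrium strategies can only push the $Y$-outcome up. The same confusion appears in your upper bound for $1+\min_i\tau(\overline{\Gamma}(N\backslash i))$: ``build an MSPNE in which $i$ commits at stage $1$'' is not the claim you need; the paper instead takes an arbitrary MSPNE with outcome $\neq N$ and shows $i$ has a profitable deviation.

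\textbf{Lower bound in Case 3.} Your plan is to assume $\phi_\Gamma(T)=N$, put ``the certifying MSPNE'' into the canonical \autoref{lem:1} form, and examine ``the first stage at which some nonempty set $Z$ commits.'' But in the canonical form $g$, \emph{no one commits on path before stage $T$}: $g_{i1}=0$ and $g_{it}=1$ only if $b_{i,t-1}=1$, so on the equilibrium path $Z=N$ appears only at stage $T$, and your case split on $|Z|$ never fires. The paper's route is quite different. It sets $t_1:=\min\{t:\phi_\Gamma(t)\neq\emptyset\}$ and splits on whether $\phi_\Gamma(t_1)=N$. If $\phi_\Gamma(t_1)=X_1\subsetneq N$, then $X_1\in SSS$ witnesses the divide branch with equality. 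If $\phi_\Gamma(t_1)=N$, the paper shows directly that for $T$ one below the right-hand side, $\emptyset\in O(\Gamma(T))$: it \emph{constructs} the most conservative strategy from \autoref{lem:2}/\autoref{prop:2} and checks that no player $i$ can profitably deviate at any stage $k$. The key is that the post-deviation outcome $P_{ik}$ cannot be $N$ (since $T-1<\tau(\overline{\Gamma}(N\backslash i))$), and if $u_i(P_{ik})>u_i(\emptyset)$ then $P_{ik}$ is a nontrivial SSS with $\emptyset\subsetneq P_{ik}\subset\phi_\Gamma(T)\subsetneq N$, contradicting the case hypothesis $\phi_\Gamma(t_1)=N$. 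Your ``first committing set'' heuristic gestures at the right objects ($P_{ik}$ plays the role of your $Z$) but attaches them to the wrong construction: they arise from a \emph{unilateral deviation} off the conservative profile, not from an on-path commitment in a canonical MSPNE.
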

 
The detailed proof is long and therefore left in the appendix. But the intuition of the result is fairly clear. For the recursive algorithm, the first case is degenerated if some players have a strictly dominant strategy, and the second case is simply left for the counting of the first case. The last case is more complex: first, the minimum number of stages needed to make all those players take action $1$ is no more than the right-hand-side, which is an upper bound: otherwise, either some player $i$ has the incentive to deviate, or some sub-group of players have the incentive to deviate. The upper bound is further tight. To see this, $t_1:=\text{min}\{t\in \mathbb{N}^+: \phi_{\Gamma}(t)\neq \emptyset \}$, we only need to consider the case that $\phi_{\Gamma}(t_1)=N$ (otherwise we can choose $X=\phi_{\Gamma}(t_1)$, which meets the right hand side then). In this case, we could show that if $T<t_1$, then $\emptyset \in O(\Gamma(T))$ by constructing the most conservative strategy, which completes the proof of the recursive algorithm. The first statement just makes use of the definition of $\phi_{\Gamma}(T)$, and the last statement holds since the equality holds in SSE.

In practice, SSE is significantly fewer than SSS, so it is better to check SSE; while it is usually more convenient to write proof with SSS rather than SSE since SSS only shrinks or expands with respect to most monotone comparative statics.

There is still something worth saying in the theorem, as it will be used in the later proof: We may see $\tau(\Gamma)$ as the value function and define the policy function consists of the following three operations: dominate ($i$), delete ($i$), and divide (the game into two auxiliary games $\underline{\Gamma}(X)$ and $\overline{\Gamma}(N\backslash X)$). 

Now we develop the corresponding monotone comparative statics like \cite{milgrom1994monotone}. Instead of using an algebraic method, we provide a direct proof through the recursive algorithm.
\begin{proposition}\label{prop:3}
Let $\{\Gamma_\theta(T)=(N,A,U_\theta=\prod_{i\in N}u_i(a_i,a_{-i},\theta),T)\}$ be a family of monotone games such that $\forall i\in N$, $u_i(a_i;a_{-i},\theta)$ further satisfies single-crossing condition in $(a_i;a_{-i},\theta)$, then the least MSPNE is the non-decreasing function of $\theta$.
\end{proposition}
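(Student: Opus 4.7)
The plan is to leverage the recursive characterization of $\tau$ in Theorem 1 and prove, by induction on $|N|$, that $\tau_{\Gamma_\theta}(X)$ is weakly non-increasing in $\theta$ for every $X\subseteq N$. Since Lemma 2 and Definition 3 identify the least MSPNE outcome of $\Gamma_\theta(T)$ as $\phi_{\Gamma_\theta}(T)=\max\{X:\tau_{\Gamma_\theta}(X)\leq T\}$, this monotonicity of $\tau$ translates directly into set-inclusion monotonicity of $\phi_{\Gamma_\theta}(T)$ in $\theta$, which is exactly the claim.

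The first step is to record three primitive monotonicity facts that follow directly from single-crossing in $(a_i;a_{-i},\theta)$. For any $\theta'\geq\theta$: (i) $SSS(\Gamma_\theta)\subseteq SSS(\Gamma_{\theta'})$, because the strict inequality $u_i(X,\theta)>u_i(X\setminus i,\theta)$ is preserved by the strict part of single-crossing as $\theta$ increases with the other coordinates held fixed; (ii) if player $i$ is strictly dominant in $\Gamma_\theta$, then $i$ remains strictly dominant in $\Gamma_{\theta'}$, which is the special case $X=\{i\}$ of (i); (iii) the auxiliary games $\underline{\Gamma}_\theta(X)$ and $\overline{\Gamma}_\theta(X)$ inherit the single-crossing property in $(a_i;a_{-i},\theta)$, since their payoffs are obtained from the original $u_i$ by fixing some coordinates of $a_{-i}$.

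With (i)--(iii) in hand, the inductive step on $|N|$ proceeds as follows. The base case $|N|=0$ is trivial. For the inductive step, apply the recursive formula at $\theta$ and let the minimizer be either a player $i^*$ in the $1+\min_i\tau(\overline{\Gamma}(N\setminus i))$ branch or a set $X^*\in SSS(\Gamma_\theta)\setminus\{\emptyset,N\}$ in the $\max$-branch. By (i), $X^*$ remains in $SSS(\Gamma_{\theta'})$, so the same witness is admissible at $\theta'$; and by the inductive hypothesis applied to the strictly smaller games $\overline{\Gamma}(N\setminus i^*)$, $\underline{\Gamma}(X^*)$, $\overline{\Gamma}(N\setminus X^*)$ (whose $\theta$-families retain single-crossing by (iii)), every constituent quantity at $\theta'$ is no larger than its counterpart at $\theta$. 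Taking $\min$ and $\max$ preserves weak inequalities, yielding $\tau(\Gamma_{\theta'})\leq\tau(\Gamma_\theta)$. The identity $\tau_\Gamma(X)=\min_{Y\in SSS(\Gamma):Y\supseteq X}\tau(\underline{\Gamma}(Y))$ from Theorem 1, combined again with (i) and (iii), extends this bound from $X=N$ to an arbitrary $X\subseteq N$.

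The main obstacle is the asymmetric case in which $\Gamma_\theta$ is evaluated via the $\min$-branch but $\Gamma_{\theta'}$ falls into the dominance branch because a new dominant player $i$ has emerged at $\theta'$. I would dispatch this by observing that the dominance branch is logically redundant: whenever $i$ is dominant one has $\{i\}\in SSS$ and the $\max$-branch at $X=\{i\}$ returns $\max\{\tau(\underline{\Gamma}(\{i\})),\tau(\overline{\Gamma}(N\setminus i))\}=\tau(\overline{\Gamma}(N\setminus i))$, matching the dominance-branch value (since the single-player game $\underline{\Gamma}(\{i\})$ has $\tau=1$). Hence the $\min$-formula is a valid uniform expression for $\tau$ regardless of which computational branch the author's formula selects, and the monotonicity argument above carries through without modification.
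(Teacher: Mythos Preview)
Your proposal is correct and follows essentially the same strategy as the paper's proof: reduce monotonicity of the least MSPNE to monotonicity of $\tau_{\Gamma_\theta}(X)$, then verify via the recursion in Theorem~\ref{thm:1} that every operation (dominate, delete, divide) available at $\theta$ remains available at $\theta'$, with the recursive values weakly decreasing. The paper states this in a single sentence (``all feasible operations under $\theta$ are still feasible under $\theta'$''), whereas you spell out the induction on $|N|$, the inheritance of single-crossing by the auxiliary games, and the edge case in which a dominant player emerges at $\theta'$---your observation that the dominance branch is subsumed by the $\max$-branch at $X=\{i\}$ is a clean way to make the recursion uniform, which the paper leaves implicit.
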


\begin{proof}
We just need to show that $\forall T\in \mathbb{N}^+, \phi_{\Gamma_{\theta}}(T)$ is non-decreasing in $\theta$, which is equivalent to show that $\forall X\in 2^{N}, \tau_{\Gamma_{\theta}}(X)$ is non-increasing in $\theta$, which can be checked in the algorithm that, all feasible operations under $\theta$ are still feasible under $\theta'$ when $\theta'>\theta$: if $i$ is dominated under $\theta$, the same under $\theta'$; else if $X$ is a strictly sufficient set under $\theta$, the same under $\theta'$, which proves the desired result.
\end{proof}

\section{Asynchronous Game: Characterization}\label{sec:asyn}

The following lemma shows that we can restrict our attention to the subgame that survives iterated elimination of strictly dominated strategy without loss of generality.
\begin{lemma}\label{lemma1}
Any MSPNE outcome of the asynchronous coordination game is a NE of the stage game.
\end{lemma}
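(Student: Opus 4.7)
Let $s$ be any MSPNE of the asynchronous game and write $a=a_s$ for its outcome. The plan is to show, for each player $i$, that $a_i$ is a best response to $a_{-i}$ in the stage game. Fix $i\in N_t$ and an alternative action $a_i'\in A_i\setminus\{a_i\}$; let $a'$ denote the outcome that results when $i$ unilaterally plays $a_i'$ while everyone else retains $s_{-i}$. Since $s$ is an SPNE of the extensive-form game, $u_i(a_i,a_{-i})\ge u_i(a_i',a_{-i}')$. The key structural observation is that monotonicity of $s$ propagates the deviation in a uniform direction: for any $j\in N_k$ with $k>t$, the history reached at stage $k$ after the deviation is weakly smaller (resp.\ weakly larger) than the equilibrium history when $a_i'<a_i$ (resp.\ $a_i'>a_i$), so monotonicity of $s_j$ yields $a_j'\le a_j$ (resp.\ $a_j'\ge a_j$). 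Players in $N_k$ with $k\le t$ do not observe $a_i$ and are unaffected, hence $a_{-i}'$ and $a_{-i}$ are lattice-comparable in the direction matching the deviation.

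Case 1 ($a_i=1$, deviation to $a_i'=0$). Here $a_{-i}'\le a_{-i}$, and SPNE reads $u_i(1,a_{-i})\ge u_i(0,a_{-i}')$. If $a_{-i}'=a_{-i}$ we are done; otherwise the deviation-proof clause of Assumption~\ref{ass:1}, applied with the higher profile $a_{-i}$ and the lower profile $a_{-i}'$, is set up exactly to upgrade this hypothesis to the desired stage-game inequality $u_i(1,a_{-i})\ge u_i(0,a_{-i})$.

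Case 2 ($a_i=0$, deviation to $a_i'=1$). Here $a_{-i}'\ge a_{-i}$ and SPNE gives $u_i(0,a_{-i})\ge u_i(1,a_{-i}')$. Suppose for contradiction that $u_i(1,a_{-i})>u_i(0,a_{-i})$. Single-crossing propagates this strict preference upward to $u_i(1,a_{-i}')>u_i(0,a_{-i}')$, so $a_i=1$ is uniquely optimal at both profiles, giving $u_i^*(a_{-i})=u_i(1,a_{-i})$ and $u_i^*(a_{-i}')=u_i(1,a_{-i}')$. Common interests then delivers $u_i(1,a_{-i}')\ge u_i(1,a_{-i})$, and chaining yields
\[
u_i(0,a_{-i})\ge u_i(1,a_{-i}')\ge u_i(1,a_{-i})>u_i(0,a_{-i}),
\]
a contradiction, so $a_i=0$ is a best response.

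The main obstacle is the gap between what SPNE directly delivers---a comparison at the pair $(a,a')$, where $a_{-i}'$ is the endogenous continuation under the deviation---and what is needed, namely the stage-game comparison at the fixed profile $(a_i',a_{-i})$. The three clauses of Assumption~\ref{ass:1} are tailored to close this gap asymmetrically: the deviation-proof condition lifts downward deviations, while single-crossing combined with common interests handles upward ones. Beyond this asymmetric application of Assumption~\ref{ass:1} and the one-directional propagation forced by monotonic strategies, I expect no further ingredient, and the argument should parallel the reasoning used in \autoref{lem:0} for the synchronous setting.
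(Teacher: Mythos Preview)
Your proposal is correct and follows essentially the same route as the paper's own proof: use monotonicity of the MSPNE strategy to render the post-deviation continuation profile $a_{-i}'$ comparable with the equilibrium profile $a_{-i}$, then invoke the deviation-proof clause of Assumption~\ref{ass:1} for the downward deviation and single-crossing together with common interests for the upward deviation. The only cosmetic differences are that the paper first treats the $2$-partition case and then ``repeats the argument'' for general $T$, whereas you handle an arbitrary $i\in N_t$ directly; and the paper frames both cases as a proof by contradiction starting from a profitable stage-game deviation, while you argue Case~1 directly. One small point worth making explicit in a final write-up: the claim that histories at every later stage $k>t$ remain ordered in the direction of the deviation is an induction on $k$ (each $s_j$ for $j\in N_k$ is monotone in a history that already inherits the order from stages $t,\dots,k-1$); you gesture at this with ``propagates,'' and the paper likewise leaves it implicit in the $2$-partition reduction.
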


Therefore, in the following, we assume that no player has strictly dominated strategy.

\begin{assumption}
$\forall i\in N$, $u_i(a_i=1,a_{-i}=1^{N\backslash i})>u_i(a_i=0,a_{-i}=1^{N\backslash i})$, $u_i(a_i=0,a_{-i}=0^{N\backslash i})>u_i(a_i=1,a_{-i}=0^{N\backslash i})$.
\end{assumption}

The first theorem characterizes the least MSPNE. We introduce two definitions first.

\begin{definition}
For an asynchronous coordination game $\Gamma(N,A,U,T,\{N_t\}_{t=1}^{T})$, the action profile $a^*$, the least action profile that survives \textbf{iterated elimination of strictly extensively dominated strategy}(IESEDS) is defined recursively:

1. $\forall i\in N_T$, $\forall h_{T-1}\in H_{T-1}$, consider the auxiliary game $\Gamma(N_T,A_T,U'(h_{T-1}))$, in which $U'(h_{T-1}): A_T\rightarrow \mathbb{R}^{|N_T|}$. $ \forall i\in N_T, a_{T}\in A_{T}, u_i'(a_T)=u_i(a_{T},h_{T-1})$. Compute the least action profile that survives iterated elimination of strictly dominated strategy as $\hat{a_{T}}(h_{T-1})$ as $a_T^*(h_{T-1})$.

2. Suppose for some $1\leq k\leq T-1$, $a_{T-k+1}^*(h_{T-k})$ is well-defined for every $h_{T-k}\in H_{T-k}$. Now $\forall i\in N_{T-k}$, $\forall h_{T-k-1}\in H_{T-k-1}$, consider the following auxiliary game $\Gamma(N_{T-k},A_{T-k},U'(h_{T-k-1}))$, in which $U'(h_{T-k-1}): A_{T-k}\rightarrow \mathbb{R}^{|N_{T-k}|}$. $\forall i\in N_{T-k}, a_{T-k}\in A_{T-k}, u_i'(a_{T-k})=u_i(a_{T-k},a_{T-k+1}^*(h_{T-k-1},a_{T-k}), h_{T-k-1})$. Compute the least action profile that survives iterated elimination of strictly dominated strategy as $\hat{a_{T-k}}(h_{T-k-1})$. And $a_{T-k}^*((h_{T-k-1})):=(\hat{a_{T-k}}(h_{T-k-1}), a_{T-k+1}^*(h_{T-k-1},a_{T-k}))$.  

3. $a^*=(a_t^*((a_{k}^*)_{k=1}^{t-1}))_{t=1}^{T}$.

\end{definition}

\begin{definition}
For a stage game $\Gamma(N,A,U)$, a directed graph $G=(M,E), M\in 2^N$ is $M$-sufficient if: $\forall i\in M, u_i(E_i\cup i)>u_i(E_i)$. A directed graph $G=(M,E), M\in 2^N$ is $M$-minimal sufficient if it is $M$ sufficient and there exists no sufficient graph $G'=(M,E')$ such that $E'\subsetneq E$. A directed graph is sufficient(minimal sufficient) if $M=N$. The class of sufficient(minimal sufficient) graph of the stage game $\Gamma$ is $\mathcal{S}_{\Gamma}$($\mathcal{MS}_{\Gamma}$).

For an asynchronous coordination game $\Gamma(N,A,U,T,\{N_t\}_{t=1}^{T})$, a directed graph $G=(M,E), M\in 2^N$ is $M$-feasible if: it is $M$-sufficient, and $\forall i,j\in N_t\cap M, 1\leq t\leq T$, $i$ and $j$ are not strongly connected in $G|_{\cup_{t\leq k\leq T}N_k\cap M}$. A directed graph is feasible if $M=N$.
\end{definition}

For instance, when the stage game is the weakest link game $\Lambda(G(N,E),A,U)$, the unique minimal sufficient graph is the graph $G$. 

\begin{theorem}\label{theorem1}
For an asynchronous coordination game $\Gamma(N,A,U,T,\{N_t\}_{t=1}^{T})$, the following three statements are equivalent:

1. Among all MSPNE outcomes, all players in $M\in 2^N$ take action 1.

2. In the least action profile survives IESEDS, all players in $M\in 2^N$ take action 1.

3. There exists a $S$-consistent directed graph, $S\supset M$.
\end{theorem}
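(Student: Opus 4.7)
The plan is to prove the three-way equivalence by establishing $(2) \Leftrightarrow (1)$ and $(2) \Leftrightarrow (3)$ separately. The equivalence between (1) and (2) essentially mirrors the reasoning used for the synchronous game: the least action profile surviving IESEDS gives the pointwise lower bound on every MSPNE outcome, and conversely it is itself realized by a most-conservative MSPNE. For $(2) \Rightarrow (1)$, I would argue by induction on the elimination round that no MSPNE can assign action $0$ to a player whose action $0$ has been strictly dominated given the (inductively fixed) behavior of later-moving players; the deviation-proof condition of \autoref{ass:1} is exactly what guarantees that the strict domination in the induced auxiliary game translates into strict optimality against any MSPNE continuation. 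For $(1) \Rightarrow (2)$, I would construct a MSPNE that implements the least IESEDS profile by having every player play the lowest action consistent with IESEDS at each history, analogous to the most-conservative strategy in the proof of \autoref{lem:2}.

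For $(2) \Rightarrow (3)$: write $a^{*}$ for the least IESEDS profile and $S = \{ i : a_i^{*} = 1 \}$. Order the elimination events by pairs $(t, r)$, where $t$ is the stage and $r$ the sub-round of iterated elimination inside the simultaneous auxiliary game for stage $t$. For each $i \in S$ with elimination event $(t_i, r_i)$, the moment action $0$ becomes strictly dominated for $i$ there is a set $E_i \subset S$ of players already pinned to action $1$ such that $u_i(E_i \cup \{i\}) > u_i(E_i)$; pick any minimal such $E_i$ and declare $(j,i) \in E$ for each $j \in E_i$. Sufficiency of the resulting graph on $S$ is immediate. Feasibility requires showing that no two $i,j \in N_t \cap S$ are strongly connected in $G|_{\bigcup_{k \geq t} N_k \cap S}$: the edges of any putative cycle would have to run only through players eliminated no earlier than $(t, \min\{r_i, r_j\})$, but within stage $t$ the elimination events are totally ordered, so any $j$ in $E_i$ whose own event is at stage $t$ must satisfy $r_j < r_i$, ruling out a cycle.

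For $(3) \Rightarrow (2)$: given an $S$-feasible graph $G = (S, E)$, I would induct on the time index $t$ to show that after IESEDS has processed stages $1, \ldots, t$, every player in $S \cap (N_1 \cup \ldots \cup N_t)$ has been pinned to action $1$ along every on-path history. The inductive step examines the simultaneous auxiliary game at stage $t$: the subgraph $G|_{N_t \cap S}$ has no strongly connected component of size greater than one (by feasibility, since later stages' vertices play the role of exogenous $1$'s by the inductive hypothesis together with common interests and strategic complementarities), hence admits a topological order. Processing the players of $N_t \cap S$ in that order and invoking the strict inequality $u_i(E_i \cup \{i\}) > u_i(E_i)$ at each step shows that the corresponding action $0$ is strictly dominated inside the auxiliary game, so IESEDS pins $i$ to $1$.

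The principal difficulty is the bookkeeping in $(2) \Rightarrow (3)$: I must simultaneously guarantee sufficiency of $G$ and the no-intra-stage-strong-connection requirement. The delicate point is that $E_i$ can contain players eliminated at later stages (who already have $a^{*}_j = 1$ independently of $i$) as well as players in $N_{t_i}$ eliminated in an earlier sub-round $r < r_i$, but must exclude any $j \in N_{t_i}$ with $r_j \geq r_i$; demonstrating that such a choice is always possible amounts to showing that the iterated elimination inside each stage-$t$ auxiliary game respects a well-defined total order on eliminated players, which in turn rests on the single-crossing and deviation-proof conditions ensuring that once a player becomes strictly dominated at some sub-round she stays so regardless of further eliminations. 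Once this total order is extracted, the edge-selection is straightforward and the feasibility condition falls out.
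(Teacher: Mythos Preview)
Your $(1)\Leftrightarrow(2)$ portion is sound and matches the paper's construction of the most-conservative MSPNE realizing $a^*$. The genuine gap lies in your graph construction for $(2)\Rightarrow(3)$. Your feasibility argument only rules out cycles among same-stage players via the sub-round order $r_j<r_i$, but it does not rule out cycles that detour through later stages. Concretely, take $N_1=\{1,2\}$, $N_2=\{3\}$ with $u_1(a)=a_1(2a_3-1)$, $u_2(a)=a_2(2a_1-1)$, $u_3(a)=a_3(2\max\{a_1,a_2\}-1)$. Here $E_1=\{3\}$ is forced, $E_2=\{1\}$ is forced, but $E_3$ can minimally be either $\{1\}$ or $\{2\}$. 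If you ``pick any minimal $E_3$'' and choose $\{2\}$, you obtain the edges $3\to 1$, $1\to 2$, $2\to 3$, so $1$ and $2$ are strongly connected in $G|_{N_1\cup N_2}$, violating feasibility. The paper avoids this by a recursive construction: it picks a single $i\in N_1$, defines $B_i$ as the later-stage players who respond with $1$ to the history $(a_i=1,\,a_{N_1\setminus i}=0)$, lays down edges $B_i\to i$ and $(B_i\cup\{i\})\to N\setminus(B_i\cup\{i\})$, and then recurses \emph{inside} $B_i$ (which contains no $N_1$ players). This nesting is precisely what guarantees that no two $N_1$ players land in a common strongly connected component.

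Your $(3)\Rightarrow(2)$ also has a directional problem: forward induction on $t$ gives you the earlier stages pinned to $1$, but establishing strict dominance of $a_i=0$ in the stage-$t$ auxiliary game requires knowing that the continuation $a^*_{>t}(\cdot)$ pins $E_i\cap N_{>t}$ to $1$ at the relevant (off-path) history---your inductive hypothesis says nothing about that. The sentence ``later stages' vertices play the role of exogenous $1$'s by the inductive hypothesis'' is exactly where this slips. The paper sidesteps the issue by proving $(3)\Rightarrow(1)$ via induction on the \emph{number of stages} $T$: it selects an $i\in N_1$ unreachable from every other $j\in N_1$ in the feasible graph, applies the inductive hypothesis to the $(T{-}1)$-stage game on $i$'s strongly connected component (with $i$ committed to $1$), and then uses common interests plus monotonicity of MSPNE strategies to conclude $i$ strictly prefers $1$.
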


The theorem also shows the existence of the least MSPNE outcome for every asynchronous coordination game fixed. The following proposition further shows the existence of a partition that realizes the greatest least MSPNE outcome for every stage game fixed.

\begin{proposition}\label{proposition1}
Given any $T\in \mathbb{N}^+$ fixed, there exists a greatest subset of players $M(T)\subset N$ such that: there exists a partition such that among all MSPNE outcomes, all players in $M(T)\subset N$ take action 1. $M(T)\subset M(T+1)$. 
\end{proposition}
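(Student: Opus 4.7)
My plan addresses the two claims separately. Throughout I rely on \autoref{theorem1} to recast achievability: $M\subset N$ is achievable by some $T$-partition iff there exists a sufficient directed graph $G=(S,E)$ with $S\supset M$ that is feasible under some $T$-partition, which by the recursive definition of tree-depth is equivalent to $G$ having tree-depth at most $T$. Write $\mathcal{A}(T)$ for the family of achievable subsets.

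For the existence of a greatest $M(T)$, since $2^N$ is finite it suffices to show $\mathcal{A}(T)$ is closed under binary union. Given $(G_i,\pi_i)$ witnessing $M_i\in\mathcal{A}(T)$ with $G_i=(S_i,E_i)$ for $i=1,2$, my plan is to exhibit a new witness $(G^\star,\pi^\star)$ on vertex set $S_1\cup S_2\supset M_1\cup M_2$. First I would form the candidate $G=(S_1\cup S_2,E_1\cup E_2)$ and observe it is $(S_1\cup S_2)$-sufficient: for any $i\in S_1$, the predecessor set in $G$ enlarges $E_i^{G_1}$, and the single-crossing condition in \autoref{ass:1} propagates the strict inequality $u_i(E_i^{G_1}\cup\{i\})>u_i(E_i^{G_1})$ to the enlarged set; the argument is symmetric for $i\in S_2$.

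The main obstacle is that $G$ may itself have tree-depth strictly larger than $T$, because taking the union of edge sets can create new strongly connected components. My plan is to extract a sparser sufficient subgraph $G^\star\subset G$ of tree-depth at most $T$, from which the recursive tree-depth decomposition produces the desired $\pi^\star$. I would assign each $i\in S_1\cup S_2$ a ``side'' $c(i)\in\{1,2\}$ with $i\in S_{c(i)}$ and retain at $i$ only the incoming edges from $G_{c(i)}$. Sufficiency at $i$ is preserved because $E_i^{G_{c(i)}}$ was already sufficient in $G_{c(i)}$. The side assignment $c(\cdot)$ must be chosen for global tree-depth compatibility: I plan to build it recursively along the condensation of $G$, choosing within each SCC a side whose local tree-depth decomposition is consistent with those already fixed on ancestor components. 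Establishing this global consistency is the technical crux of the argument; once done, the recursive tree-depth decomposition of $G^\star$ delivers $\pi^\star$ and closes the union step.

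Finally, $M(T)\subset M(T{+}1)$ follows from $\mathcal{A}(T)\subset\mathcal{A}(T{+}1)$: given a $T$-partition $\pi=(N_1,\ldots,N_T)$ witnessed by $G$, the $(T{+}1)$-partition $\pi'=(N_1,\ldots,N_T,\emptyset)$ admits the same $G$ as witness. For each $t\le T$ the restricted subgraph entering the feasibility condition is unchanged by the added empty group, and the feasibility condition at $t=T{+}1$ is vacuous since the group is empty. Hence every achievable subset for $T$ stages is achievable for $T{+}1$ stages, and in particular $M(T)\subset M(T{+}1)$.
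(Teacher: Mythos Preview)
Your monotonicity argument ($M(T)\subset M(T+1)$ via appending an empty block) is correct and in fact a touch cleaner than the paper's version, which instead splits some nonempty block $N_t$ into $\{i\}$ and $N_t\setminus\{i\}$.

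Where you take an unnecessary detour is the union-closure step. You correctly see that the naive union $G=(S_1\cup S_2,E_1\cup E_2)$ can have tree-depth $>T$, but your proposed remedy---a recursive side-assignment built along the condensation of $G$, with ``global consistency'' left as the ``technical crux''---is both unfinished and unneeded. The paper avoids this entirely by an \emph{asymmetric} construction: keep $G_1$ intact on $M_1$, keep only $G_2|_{M_2\setminus M_1}$ on $M_2\setminus M_1$, and add all one-way edges from $M_1$ into $M_2\setminus M_1$; partition by placing $M_1$ according to $\pi_1$ and $M_2\setminus M_1$ according to $\pi_2$. Sufficiency at every vertex is immediate (in-neighborhoods only enlarge via single-crossing), and because there are \emph{no} edges from $M_2\setminus M_1$ back into $M_1$, every strongly connected component lies entirely in $M_1$ or entirely in $M_2\setminus M_1$, so feasibility reduces to the already-known feasibility of $G_1$ under $\pi_1$ and of $G_2|_{M_2\setminus M_1}$ under $\pi_2$.

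Notice that your own framework contains this idea: simply take the \emph{constant} side-assignment $c(i)=1$ for $i\in S_1$ and $c(i)=2$ for $i\in S_2\setminus S_1$. Then every in-edge to an $S_1$-vertex lies in $G_1$, so no edge runs from $S_2\setminus S_1$ into $S_1$; the SCCs decouple and tree-depth $\le\max\{td(G_1),td(G_2|_{S_2\setminus S_1})\}\le T$. No recursion over the condensation is needed, and the ``technical crux'' evaporates.
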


Now the corresponding problem of finding the optimal asynchronous partition for a fixed stage game $\Gamma(N,A,U)$ and the number of periods $T$ can be turned into the search for the greatest feasible graph. However, the definition of the feasible graph is relatively hard to check. The following lemma is useful: We use the Tree-depth of the sufficient graph to characterize the existence of feasible graph.

\begin{lemma}\label{lemma2}

For a directed graph $G=(N,E)$, we have the following equivalent characterization of Tree-depth:

1) If there exists a partition $\{N_t\}_{t=1}^{T}$ such that $\forall i,j\in N_t, 1\leq t\leq T$, $i$ and $j$ are not strongly connected in $G|_{\cup_{t\leq k\leq T}N_k}$, then we have $Td(G)\leq T$.

2) If $Td(G)\leq T$, then there exists a partition $\{N_t\}_{t=1}^{T}$ such that $\forall i,j\in N_t\cap S, 1\leq t\leq T$, $i$ and $j$ are not strongly connected in $G|_{\cup_{t\leq k\leq T}N_k}$. 
\end{lemma}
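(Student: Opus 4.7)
The plan is to prove the two implications by induction, each following the recursive structure of the tree-depth definition.

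For implication (1), I would induct on $|V(G)|$. If $G$ is not strongly connected, decompose it into strongly connected components $G_1,\dots,G_m$; restricting the given partition to each $V(G_k)$ yields a partition of $G_k$ that still meets the hypothesis, since two vertices within the same SCC are strongly connected in $G|_{S}$ if and only if they are strongly connected in $G_k|_{S\cap V(G_k)}$. The induction hypothesis gives $td(G_k)\leq T$ for every $k$, and hence $td(G)=\max_k td(G_k)\leq T$. If instead $G$ is strongly connected, the crucial observation is that $N_1$ contains at most one vertex: any two distinct vertices of $N_1$ would have to be not strongly connected in $G|_{\bigcup_k N_k}=G$, contradicting strong connectivity. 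Picking the (at most) one vertex $i\in N_1$ and passing to $G\setminus i$ with the truncated partition $\{N_2,\dots,N_T\}$, the condition transfers because for $t\geq 2$ we have $i\notin\bigcup_{k\geq t}N_k$, so $G|_{\bigcup_{k\geq t}N_k}=(G\setminus i)|_{\bigcup_{k\geq t}N_k}$. Induction gives $td(G\setminus i)\leq T-1$, hence $td(G)\leq 1+td(G\setminus i)\leq T$.

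For implication (2), I would induct on $td(G)$. If $G$ is not strongly connected, the recursion gives $td(G_k)\leq td(G)\leq T$ for every SCC $G_k$; by induction each $V(G_k)$ admits a partition $\{N_t^k\}_{t=1}^T$ with the desired property, and taking $N_t:=\bigcup_k N_t^k$ assembles these into a partition of $N$. Two vertices in a common $N_t$ lying in the same SCC inherit the non-strong-connectivity from the componentwise partition, while two lying in different SCCs fail to be strongly connected in any subgraph of $G$ at all. If $G$ is strongly connected with $td(G)\leq T$, the recursion provides some $i^*\in V(G)$ with $td(G\setminus i^*)\leq T-1$, so by induction $V(G)\setminus\{i^*\}$ has a valid $(T-1)$-partition $\{N'_t\}_{t=1}^{T-1}$. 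Placing $i^*$ alone in the first layer, $N_1:=\{i^*\}$ and $N_{t+1}:=N'_t$, produces the required $T$-partition: the condition for $N_1$ is vacuous, and the condition for $N_{t+1}$ is inherited because $G|_{\bigcup_{k\geq t+1}N_k}$ coincides with $(G\setminus i^*)|_{\bigcup_{k\geq t}N'_k}$.

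The main obstacle I expect is the bookkeeping of how induced subgraphs interact with strong-connectivity and the SCC decomposition, specifically the identity used in both directions that two vertices in the same SCC of $G$ are strongly connected in $G|_S$ if and only if they are strongly connected in the induced subgraph on $S$ intersected with the vertex set of their SCC. A secondary technicality is handling the degenerate base cases (empty graph, singleton) given the slightly non-standard convention $td(\emptyset)=1$ in the paper; this only affects the induction anchor and can be absorbed by permitting some of the $N_t$ to be empty (so an $N_1=\emptyset$ in the strongly connected case just reduces to a $(T-1)$-partition to which the induction applies directly).
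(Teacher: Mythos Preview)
Your proposal is correct and follows essentially the same approach as the paper: both arguments trace the recursive definition of tree-depth, peeling off a single vertex when the graph is strongly connected and treating components separately otherwise. The paper's proof is a two-sentence sketch (restate the hypothesis for part~1; follow the ``policy function'' of the $td$ recursion for part~2), and your write-up simply fills in the inductive bookkeeping that the paper leaves implicit, including the SCC restriction identity and the handling of empty layers.
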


That is to say, it is enough to search for the greatest sufficient graph whose tree-depth is no more than $T$. In the following, we will further provide a surprising equivalence between the optimal asynchronous coordination design and the synchronous game.

\section{Equivalent Characterization}\label{sec:equiv}

We briefly describe the strategy we use to prove the equivalence: we will use the Tree-depth as a bridge. We first show that the synchronous game can be turned into the search for the greatest sufficient graph whose Tree-depth is no larger than $T$. Next, we show that a sufficient graph whose Tree-depth is no larger than $T$ is equivalent to a feasible graph with respect to some $T$-partition. Therefore, the synchronous game is equivalent to the optimal asynchronous design.

The first result is that: Tree-depth is exactly the solution to the synchronous game whose stage game is the weakest link game. That is, the corresponding $\tau_{\Lambda}$ has a much simpler form:

 \begin{proposition}\label{prop:4}
For the synchronous game whose stage game is the weakest link game, $\tau_{\Lambda}(X)=td(G|_{W(X)})$
\begin{equation*}	   td(G):=
 		\begin{cases}

    1 & \text{if } |G|=1\\
1+\text{min}_{i\in N }td(G\backslash i) & \text{else if G is strongly connected}\\
\text{max}_{i } td(G_i) & \text{else, where $G_i$ is the strongly connected component}
 		\end{cases}
 	\end{equation*}  
 \end{proposition}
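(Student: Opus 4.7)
The plan is to prove $\tau_{\Lambda}(X) = td(G|_{R(X)})$ by induction on $|R(X)|$, using \autoref{thm:1}'s recursive characterization of $\tau$ alongside the recursive definition of tree-depth. First, I would reduce to the case $X = N = R(X)$: in the weakest link game, any MSPNE in which the players of $X$ play $1$ must have every predecessor of $X$ play $1$ as well, because $a_i = 1$ forces all of $E_i$ to play $1$, iterated along any directed path ending in $X$. Hence $\tau_{\Lambda}(X) = \tau_{\Lambda}(R(X))$, and the sub-game on $R(X)$ is independent of everything outside $R(X)$ since no external vertex lies in any $E_i$ for $i \in R(X)$. So it suffices to show $\tau_{\Lambda}(N) = td(G)$ whenever $N = R(N)$; the base case $|N| \leq 1$ is immediate.

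Two structural observations drive the argument. First, the strictly sufficient sets of the weakest link game are precisely the sets closed under predecessors: $X \in SSS$ iff $E_i \subset X$ for all $i \in X$, which in condensation terms means $X$ is a downward-closed union of strongly connected components. Second, the two auxiliary games reduce cleanly: $\overline{\Lambda}(N \setminus X)$ coincides with the weakest link game on $G|_{N \setminus X}$ (fixing players in $X$ at $1$ simply satisfies the in-edges coming from $X$), and when $X$ is closed under predecessors, $\underline{\Lambda}(X)$ coincides with the weakest link game on $G|_X$ (because $E_i \subset X$ for $i \in X$ makes the actions of $N \setminus X$ irrelevant to the preferences of $X$).

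The inductive step then splits by strong connectedness. If $G$ is strongly connected, $SSS \setminus \{\emptyset, N\} = \emptyset$, so \autoref{thm:1} collapses to $\tau = 1 + \min_i \tau(\overline{\Lambda}(N \setminus i)) = 1 + \min_i td(G \setminus i) = td(G)$ by the inductive hypothesis. Otherwise, for the upper bound I pick any source SCC $G_s$ (which is closed under predecessors) and apply divide with $X = G_s$: $\tau \leq \max\{td(G_s), td(G|_{N \setminus G_s})\} = \max_k td(G_k) = td(G)$, using that removing a source SCC leaves the other SCCs intact. For the lower bound, let $G_s$ achieve $\max_k td(G_k)$. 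Because $G_s$ is strongly connected and any $X \in SSS$ is a union of whole SCCs, either $G_s \subset X$ or $G_s \subset N \setminus X$, so the divide option always yields $\max\{td(G|_X), td(G|_{N \setminus X})\} \geq td(G_s) = td(G)$. In the delete option, if $i \notin G_s$ then $G_s$ is intact in $G \setminus i$ and $td(G \setminus i) \geq td(G_s)$; if $i \in G_s$, then the SCCs of $G_s \setminus i$ are also SCCs of $G \setminus i$, giving $td(G \setminus i) \geq td(G_s \setminus i) \geq td(G_s) - 1$ via the recursion $td(G_s) = 1 + \min_j td(G_s \setminus j)$, so $1 + td(G \setminus i) \geq td(G)$.

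The main obstacle is the bookkeeping in the non-strongly-connected lower bound: establishing that $SSS$-closure forces the maximum-$td$ SCC to lie entirely on one side of any divide, and handling the delete case by invoking the recursion for $td$ on $G_s$ itself. Once these structural facts are pinned down, the two recursions align term by term and the induction closes.
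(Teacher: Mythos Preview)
Your argument is correct and follows the same overall architecture as the paper: both reduce to showing $\tau(\Lambda)=td(G)$ for the full player set, both handle the strongly connected case by observing that $SSS(\Lambda)=\{N\}$ so \autoref{thm:1} collapses to the delete recursion, and both treat the non-strongly-connected case via source SCCs. The execution of that last branch differs: the paper peels off source components one at a time, arguing directly from the game-theoretic independence of a source SCC that $\tau_{\Lambda}(N)=\max_k \tau(\Lambda|_{G_k})$, whereas you stay inside \autoref{thm:1}'s $\min\{\text{delete},\text{divide}\}$ formula and establish matching upper and lower bounds. Your route makes the two recursions align term by term and is more explicit about why the \emph{lower} bound holds; the paper's route is terser but asks the reader to accept the independence step (``the game is isolated'') without spelling it out.

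One small omission: your lower bound in the non-strongly-connected case checks only the delete and divide branches of \autoref{thm:1}, but the recursion there has a prior \emph{dominate} branch that fires whenever some vertex $i$ has $E_i=\emptyset$. This cannot happen at the top level under the paper's standing non-dominance assumption, but it \emph{can} happen in the sub-games $\overline{\Lambda}(N\setminus j)$ and $\overline{\Lambda}(N\setminus X)$ you recurse into (any vertex whose only in-neighbors were just removed now has empty $E_i$). In that case $\tau(\Lambda)=\tau(\overline{\Lambda}(N\setminus i))=td(G\setminus i)$ without the ``$+1$'', so your delete-branch inequality does not apply. The patch is immediate --- $\{i\}$ is then a singleton source SCC with tree-depth $1$, so removing it leaves the remaining SCCs unchanged and $td(G\setminus i)=\max_{k\neq\{i\}}td(G_k)=td(G)$ --- but it should be stated to close the induction.
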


The proof of \autoref{prop:4} is similar to \autoref{thm:1}, see appendix for details. 

Tree-depth is the solution to the optimal elimination trees problem, and the corresponding decision problem is NP-complete, see \cite{pothen1988complexity}. Therefore, it is in general NP-hard to determine whether $\emptyset \in O(\Gamma(T))$ for some $T$. 

Now we show how to turn a stage game into a weakest link game on a direct graph. The intuition is to apply monotone comparative statics and move down the payoff of each player to a threshold level. One can easily turn a stage game into a weakest link game using minimal sufficient network: pick any $S\in \mathcal{S}_{\Gamma}$, $U^{S}:=(u_i^S)_{i\in N}$, $\forall X\in N\backslash i,u_i^S(X)\equiv u_i(X\cap V_i)$, $ u_i^S(X\cup i):=u_i((X\cap V_i) \cup i)$. $\Gamma^S:=\Gamma(N,A,U^S)$ is still a monotone game. If further $S\in \mathcal{MS}_{\Gamma}$, then $\Gamma^S$ is a weakest link game. The following lemma is a direct application of the monotone comparative statics above.

\begin{lemma}\label{lem:3}
Note that tree-depth $td(G(N,E))$ is non-decreasing in $G$, so $\forall S\in \mathbb{S}_{\tau}$, $\exists S'\in \mathbb{MS}_{\tau}$ such that $S'$ is the subgraph of $S$, and the monotone comparative statics implies $\forall X\in 2^N, \tau_{\Gamma}(X) \leq \tau_{\Gamma^S}(X) \leq \tau_{\Gamma^{S'}}(X)=td(S'|_{W(X)})\leq td(S|_{W(X)})$. 
\end{lemma}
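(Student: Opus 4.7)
The plan is to verify the chain
$$\tau_{\Gamma}(X) \leq \tau_{\Gamma^S}(X) \leq \tau_{\Gamma^{S'}}(X) = td(S'|_{W(X)}) \leq td(S|_{W(X)})$$
by assembling four ingredients in order: extraction of a minimal sufficient subgraph, tree-depth monotonicity in edges, Proposition \ref{prop:4}, and Proposition \ref{prop:3}.

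First, starting from $S \in \mathcal{S}_\Gamma$, I would obtain $S' \in \mathcal{MS}_\Gamma$ with $S' \subseteq S$ by iteratively removing any edge whose deletion still preserves sufficiency. The procedure terminates since the edge set is finite, and the result is minimal sufficient by construction. For the rightmost inequality, I would invoke that tree-depth is monotone under edge addition: any elimination ordering witnessing $td(S|_{W(X)}) \leq k$ remains valid for the spanning subgraph $S'|_{W(X)}$, because removing edges can only split strongly connected components into smaller pieces, each of which the recursion handles with weakly lower depth. Since $S'$ is minimal sufficient, the paragraph preceding the lemma identifies $\Gamma^{S'}$ as a weakest link game on $S'$, so Proposition \ref{prop:4} applies directly and delivers the middle equality $\tau_{\Gamma^{S'}}(X) = td(S'|_{W(X)})$.

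For the two middle $\tau$-inequalities, I would apply Proposition \ref{prop:3} to a one-parameter family of payoffs indexed by $\theta$ ordered $S' < S < \mathrm{full}$, where $\theta$ controls the perceived neighborhood $V_i(\theta)$ of player $i$, with $V_i(S') = E_i^{S'}$, $V_i(S) = E_i^{S}$, and $V_i(\mathrm{full}) = N \setminus \{i\}$, and the payoff is defined by $u_i(a_i; a_{-i}, \theta) := u_i(a_i, a_{-i} \cap V_i(\theta))$. The single-crossing hypothesis of Proposition \ref{prop:3} in $(a_i; a_{-i}, \theta)$ is inherited from the single-crossing of the original $u_i$ in $a_{-i}$: raising $(a_{-i}, \theta)$ to $(a'_{-i}, \theta')$ with $a'_{-i} \geq a_{-i}$ and $\theta' \geq \theta$ enlarges both $a_{-i}$ and $V_i$, so $a'_{-i} \cap V_i(\theta') \supseteq a_{-i} \cap V_i(\theta)$, and any weak or strict preference for $a_i = 1$ at the smaller profile persists at the larger one by the original single-crossing in $a_{-i}$. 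Proposition \ref{prop:3} then yields that the least MSPNE is non-decreasing in $\theta$, equivalently $\tau(\cdot)$ is non-increasing in $\theta$, which delivers both $\tau_\Gamma(X) \leq \tau_{\Gamma^S}(X)$ (comparing $\mathrm{full}$ to $S$) and $\tau_{\Gamma^S}(X) \leq \tau_{\Gamma^{S'}}(X)$ (comparing $S$ to $S'$).

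The main delicate step is packaging three games with distinct payoff structures into a common parametric family that satisfies the precise hypothesis of Proposition \ref{prop:3}; the crucial observation is that the parameter acts only by enlarging the coordinates of $a_{-i}$ that enter $u_i$, so single-crossing in $\theta$ reduces mechanically to single-crossing in $a_{-i}$. Once this parametrization is in place, the remaining pieces of the chain are immediate applications of the cited results.
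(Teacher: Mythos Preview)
Your proposal is correct and follows essentially the same route as the paper. The paper treats this lemma as a one-line consequence of Proposition~\ref{prop:3} (stated just before the lemma as ``a direct application of the monotone comparative statics above'') together with Proposition~\ref{prop:4} and the edge-monotonicity of tree-depth; you have simply spelled out the parametrization $u_i(a_i;a_{-i},\theta)=u_i(a_i,a_{-i}\cap V_i(\theta))$ that makes the application of Proposition~\ref{prop:3} explicit, and your verification that single-crossing in $(a_{-i},\theta)$ reduces to single-crossing in $a_{-i}$ via $a'_{-i}\cap V_i(\theta')\supseteq a_{-i}\cap V_i(\theta)$ is exactly the mechanism the paper has in mind.
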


The following result builds the equivalence between general synchronous games and weakest link games\footnote{Note that people should not expect strategic equivalence in general, as the weakest link game adds additional structure to the feasible NE. For instance, there exists no weakest link game whose pure strategy NE set is: $\{\emptyset, \{1,2,3\},\{1,4,5\}, N\}$.}.

\begin{proposition}\label{thm:2}
For any synchronous game with stage game $\Gamma(N,A,U)$, there exists at least one weakest link game $\Lambda(N,E,A,P)$ such that: $\forall X\in 2^N, \tau_{\Gamma(N,A,U)}(X)\equiv \tau_{\Lambda(N,E,A,P)}(X)=td(G(N,E)|_{W(X)})$. The graph $G$ is a (minimal) sufficient network of $\Gamma$.
\end{proposition}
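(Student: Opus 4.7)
My plan is to prove the equality by constructing a specific (minimal) sufficient graph $G$ for $\Gamma$ that achieves the upper bound of \autoref{lem:3} with equality. Once $G$ is built and verified sufficient, setting $\Lambda$ to be the weakest link game on $G$ gives $\tau_\Lambda(X) = td(G|_{W(X)})$ by \autoref{prop:4}, so it suffices to establish $\tau_\Gamma(X) = td(G|_{W(X)})$ for every $X \in 2^N$. The upper bound $\tau_\Gamma(X) \leq td(G|_{W(X)})$ is already delivered by \autoref{lem:3} applied to the constructed $G$, so the substantive content is producing a sufficient $G$ whose tree-depth realizes this upper bound with equality.

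I construct $G$ by induction on $|N|$, mirroring the three branches of the recursive algorithm in \autoref{thm:1}. In the dominate case, when some player $i$ satisfies $u_i(\{i\}) > u_i(\emptyset)$, I recursively build $G''$ for $\overline{\Gamma}(N \setminus i)$ and append $i$ as a source vertex, adding out-edges $i \to j$ for each $j \in N \setminus i$ but no in-edges to $i$; then $\{i\}$ forms its own singleton strongly connected component and $td(G) = \max(td(G''), 1) = td(G'')$, matching $\tau(\overline{\Gamma}(N \setminus i)) = \tau(\Gamma)$. In the delete case, when the optimal policy commits some $i$ at cost $1 + \tau(\overline{\Gamma}(N \setminus i))$, I again build $G''$ recursively and add $i$ together with bidirectional edges to every other vertex, making $G$ strongly connected with $G \setminus i = G''$, so $td(G) \leq 1 + td(G'') = \tau(\Gamma)$; sufficiency of $i$ is guaranteed by the non-degenerate condition $u_i(N) > u_i(N \setminus i)$, and sufficiency of each $j \neq i$ follows from the inductive hypothesis on $G''$ together with the assumption that $i$ is now among $j$'s in-neighbors. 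In the divide case, when the policy partitions via $X \in SSS(\Gamma) \setminus \{\emptyset, N\}$, I build $G_X$ on $X$ and $G_{N \setminus X}$ on $N \setminus X$ recursively and add one-way cross-edges $j \to k$ for $j \in X$, $k \in N \setminus X$; since no cycle crosses this boundary, the strongly connected components of $G$ are the disjoint union of those of the two pieces, giving $td(G) = \max(td(G_X), td(G_{N \setminus X})) = \tau(\Gamma)$.

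The main obstacle is extending the equality from $X = N$ to arbitrary $X \in 2^N$. For this, I invoke \autoref{thm:1}'s characterization $\tau_\Gamma(X) = \min_{Y \in SSS(\Gamma), Y \supset X} \tau(\underline{\Gamma}(Y))$ and argue that the reachable set $W(X)$ in the constructed $G$ captures precisely the closure of $X$ under the sufficiency dependencies recorded by $G$. The recursive construction ensures that $G|_{W(X)}$ behaves as the graph that would have been built for the optimal minimizing subgame $\underline{\Gamma}(Y^*)$ in this formula, so the induction yields $td(G|_{W(X)}) = \tau_\Gamma(X)$ uniformly in $X$. A delicate step is verifying that the edges added at each recursion—cross-boundary in the divide case and bidirectional at the new vertex in the delete case—do not inflate the tree-depth of any induced restriction $G|_{W(X)}$ above $\tau_\Gamma(X)$; this follows from the one-way nature of the cross-edges together with \autoref{lem:3} applied inside each induced subgame.
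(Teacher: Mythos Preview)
Your construction is the same as the paper's: build $G$ recursively with out-edges for dominate, bidirectional edges for delete, and one-way cross-edges for divide, then obtain one inequality from \autoref{lem:3} and the other from the construction itself. There is, however, a real gap in your extension to arbitrary $X$.

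The issue is that ``mirroring the branches of \autoref{thm:1}'' is underspecified when several branches attain the minimum. Whenever $\phi_\Gamma(t_1)\subsetneq N$ (for the least $t_1$ with $\phi_\Gamma(t_1)\neq\emptyset$), dividing at $N_1=\phi_\Gamma(t_1)$ always attains the minimum in \autoref{thm:1}, but delete may attain it too. If you choose delete in that situation, the bidirectional edges at the deleted vertex make $G$ strongly connected, hence $W(X)=N$ for every nonempty $X$ and $td(G|_{W(X)})=\tau(\Gamma)$ even for those $X\subset\phi_\Gamma(t_1)$ with $\tau_\Gamma(X)=t_1<\tau(\Gamma)$. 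Your assertion that ``$G|_{W(X)}$ behaves as the graph that would have been built for the optimal minimizing subgame $\underline\Gamma(Y^*)$'' fails exactly here, and the appeal to \autoref{lem:3} inside induced subgames only gives the upper bound, not the matching lower bound you need.

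The paper closes this gap by fixing an explicit priority: always divide first (and in particular always split off $\phi_\Gamma(t_1)$ whenever it is a proper nonempty subset), then dominate, and only then delete. With this priority, $W(X)$ in the constructed graph never overshoots the relevant subgame, and $td(G|_{W(X)})\leq\tau_\Gamma(X)$ holds for every $X$. Once you add this tie-breaking rule to your construction, the rest of your argument goes through as written.
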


The following result is a direct corollary of \autoref{thm:2} and \autoref{lem:3}, which gives a second algorithm to solve $\tau_{\Gamma}$, a two steps procedure: find all of the sufficient networks and compute the tree-depth of each of them. This approach is more convenient when the sufficient network class is relatively small or the tree-depth is easy to compute.
\begin{corollary}\label{prop:5}
 Alternative algorithm: $\forall X\in 2^N,\tau_{\Gamma}(X) \equiv \text{min}_{S\in \mathbb{MS}_{\Gamma}}td(S|_{W(X)})$.
\end{corollary}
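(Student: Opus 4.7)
The plan is to prove the asserted equality by establishing the two inequalities $\tau_{\Gamma}(X) \leq \min_{S \in \mathbb{MS}_{\Gamma}} td(S|_{W(X)})$ and $\tau_{\Gamma}(X) \geq \min_{S \in \mathbb{MS}_{\Gamma}} td(S|_{W(X)})$ separately. Both follow almost immediately from \autoref{thm:2} and \autoref{lem:3}, so the corollary is essentially a bookkeeping consequence of those two results; the only thing to verify is that the minimization on the right-hand side is over the finite index set $\mathbb{MS}_{\Gamma}$ of minimal sufficient networks, which guarantees the minimum is attained.

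For the upper bound, I would invoke \autoref{lem:3} pointwise in $S' \in \mathbb{MS}_{\Gamma}$. That lemma furnishes the chain $\tau_{\Gamma}(X) \leq \tau_{\Gamma^{S'}}(X) = td(S'|_{W(X)})$, where the inequality follows from the monotone comparative statics of \autoref{prop:3} applied to the payoff-lowering transformation $u_i \mapsto u_i^{S'}$ (which weakly reduces every player's incentive to take action $1$ and therefore weakly delays coordination), and the equality follows from \autoref{prop:4} since $\Gamma^{S'}$ is by construction the weakest link game on $S'$. Taking the minimum over $S' \in \mathbb{MS}_{\Gamma}$ then yields the upper bound.

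For the matching lower bound, I would appeal to \autoref{thm:2}, which produces at least one minimal sufficient graph $G^{*} \in \mathbb{MS}_{\Gamma}$ together with a weakest link game $\Lambda(N, E^{*}, A, P)$ realizing $\tau_{\Gamma}(X) = \tau_{\Lambda}(X) = td(G^{*}|_{W(X)})$. Since $G^{*}$ is itself a candidate in the minimization, we immediately obtain $\min_{S \in \mathbb{MS}_{\Gamma}} td(S|_{W(X)}) \leq td(G^{*}|_{W(X)}) = \tau_{\Gamma}(X)$. Combining the two inequalities gives the equality, which holds for every $X \in 2^{N}$.

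No genuine obstacle arises in this proof, since all the heavy lifting — the weakest-link reduction in \autoref{thm:2} and the monotone comparative statics in \autoref{prop:3} — has already been done upstream; the main subtlety is simply making sure one applies the upper bound over \emph{all} minimal sufficient graphs and the lower bound via the \emph{specific} graph guaranteed by \autoref{thm:2}, so that the two sides of the sandwich are the same quantity $\min_{S \in \mathbb{MS}_{\Gamma}} td(S|_{W(X)})$. The practical payoff of the corollary, as the text emphasizes, is computational rather than mathematical: it supplies a concrete two-step procedure (enumerate $\mathbb{MS}_{\Gamma}$, then compute tree-depths) that is efficient whenever the class of minimal sufficient networks is small or the tree-depth is easy to read off from the graph structure.
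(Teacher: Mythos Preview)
Your proposal is correct and matches the paper's own reasoning: the text explicitly states that the corollary is ``a direct corollary of \autoref{thm:2} and \autoref{lem:3},'' and your two-sided sandwich (upper bound from \autoref{lem:3} pointwise over $\mathbb{MS}_{\Gamma}$, lower bound via the specific minimal sufficient graph produced by \autoref{thm:2}) is exactly that derivation spelled out.
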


The equivalence between the synchronous game and the asynchronous design is obvious now: since both of the problems can be turned into the search of the greatest sufficient graph with the least tree-depth. And as we all know, it is more convenient to compute the operator $\tau_{\Gamma}$ recursively. We conclude it as the following theorem.

\begin{theorem}
For a stage game $\Gamma(N,A,U)$ and the number of stages $T$ fixed, and the least MSPNE outcome of the optimal asynchronous coordination design coincides with the least MSPNE of the monotone game, and the partition that maximizes the least MSPNE outcome of the asynchronous game coincides with the policy function of the corresponding synchronous game.
\end{theorem}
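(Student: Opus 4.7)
The plan is to use tree-depth of sufficient graphs as a common combinatorial bridge between the two sides. On the synchronous side, by \autoref{def:3} the least MSPNE outcome is $\phi_\Gamma(T) = \max\{X \subseteq N : \tau_\Gamma(X) \le T\}$, and by \autoref{prop:5} one has $\tau_\Gamma(X) = \min_{S \in \mathcal{MS}_\Gamma} td(S|_{W(X)})$. Hence $\phi_\Gamma(T)$ is exactly the largest $M \subseteq N$ for which some minimal sufficient graph restricted to the reach of $M$ has tree-depth at most $T$.

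On the asynchronous side, \autoref{theorem1} identifies, for any fixed partition $\{N_t\}_{t=1}^T$, the set of players taking action $1$ in the least MSPNE with the largest $M$ admitting an $M$-feasible graph. Maximising over $T$-partitions and invoking \autoref{lemma2}, I would argue that the greatest implementable $M$ is exactly the largest set supported by a sufficient graph of tree-depth at most $T$: clause (1) of \autoref{lemma2} turns any feasible witness into a tree-depth bound, and clause (2) reconstructs a $T$-partition from any tree-depth-optimal elimination ordering. Matching this with the synchronous reformulation above yields the equality of the least MSPNE outcomes.

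For the policy-function correspondence I would trace the recursion of \autoref{thm:1} against the partition construction. A \emph{dominate} or \emph{delete} operation at a player $i$ corresponds to placing $i$ in the earliest block, so that the continuation game is $\overline{\Gamma}(N \setminus i)$; a \emph{divide} at a strictly sufficient $X$ splits the partition so that $N \setminus X$ occupies blocks strictly before those of $X$, mirroring the decomposition into $\overline{\Gamma}(N \setminus X)$ and $\underline{\Gamma}(X)$. An induction on $|N|$ would then show that the synchronous policy function emits an optimal asynchronous partition, and the converse follows by reading an optimal partition as an elimination ordering on the witnessing minimal sufficient graph.

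The main obstacle I anticipate is aligning the auxiliary-game constructions $\underline{\Gamma}$ and $\overline{\Gamma}$ with the subgraph condition in $M$-feasibility, which is defined relative to the global later-blocks subgraph $G|_{\cup_{t \le k \le T} N_k \cap M}$; one must verify that treating players in a strictly sufficient $X$ as committed to $1$ (as in $\overline{\Gamma}(N \setminus X)$) is compatible with placing $X$ in strictly later asynchronous blocks, so that earlier movers correctly anticipate their actions. Once this compatibility is nailed down, the theorem follows essentially by reading \autoref{prop:5} and \autoref{lemma2} in tandem.
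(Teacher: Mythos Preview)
Your proposal is essentially the paper's own argument: both sides are reduced to the search for the greatest set supported by a sufficient graph of tree-depth at most $T$, via \autoref{prop:5} on the synchronous side and \autoref{theorem1} together with \autoref{lemma2} on the asynchronous side. The paper in fact states the theorem as ``obvious'' once these pieces are in place, and your write-up simply spells out that bridge.

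One correction on the policy-function part: a \emph{divide} at a strictly sufficient $X$ does \emph{not} place $N\setminus X$ in strictly earlier blocks than $X$. In the sufficient graph constructed in the proof of \autoref{thm:2}, all edges between the two parts run from $X$ to $N\setminus X$, so the two pieces lie in different strongly connected components; by \autoref{lemma2} their partitions are then chosen \emph{independently} and overlaid within the same $T$ blocks (this is precisely what yields the $\max$ rather than a sum in the recursion). Your ``strictly before'' reading would force $\tau(\underline{\Gamma}(X))+\tau(\overline{\Gamma}(N\setminus X))$ periods and would not in general match $\tau(\Gamma)$. This is exactly the compatibility issue you flag as the main obstacle, and once you replace ``strictly before'' by ``overlaid on disjoint strongly connected components,'' the induction goes through as you outline.
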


\section{Extension}\label{sec:5}
In this section, we will discuss several important questions in practice. 
First, we apply the asynchronous design to explain the redemption restriction of the mutual fund. Second, though the precise computation of $\tau_{\Gamma}$ is NP-hard in general, there is still a substantial class of games that can be computed efficiently, which also has implications on the asynchronous design. Lastly, from the policy intervention view, we will show that a principal's capacity of facilitating the adversarial coordination performance via subsiding the key player is relatively limited. 

\subsection{Costly Optimal Asynchronous Design}
 We state the robust asynchronous coordination design problem faced by a principal formally.

Players: A principal $I$ and a finite set $N$ of players. Actions: $A_I=\bigcup \{(T,\{ N_t\}_{t=1}^{T}\}$, in which $T\in \mathbb{N}^+, \{N_t\}_{t=1}^{T}$ is a partition of $N$. $\forall i\in N, A_i=\{0, 1\}$. Timeline: At the beginning of the game, the principal chooses an asynchronous coordination structure $a_I \in A_I$, including the number of periods and a corresponding partition. The principal's choice is common knowledge. Then the players take the asynchronous coordination game. At the end of the game, the outcome is realized. Preferences: The depositors' utilities satisfy the assumptions for the stage game above. Let $\mathcal{P}$ be the set of preferences that are weakly increasing in $\text{min}_{a^*\in \text{MSPNE Outcome}} a^*$, strictly decreasing in $T$, and independent of other variables. The principal's preference just needs to satisfy that: $P_I\in \mathcal{P}$. 

In words, the principal chooses an asynchronous structure to maximize the least MSPNE outcome corresponding to the asynchronous structure while minimizing the number of partitions. 

In practice, the game above can be seen as an optimal debt maturity structure design: the bank has several atomic firms as depositors, which are interrelated in financial risk sharing through cross-ownership, for example. Now the bank wants to eliminate the risk of bank runs by diversifying the feasible withdrawal date: the depositor can only schedule a withdrawal on a specific date during a time zone periodically\footnote{Such a design is more common in open-end funds, which has unambiguous restriction on redemption window}. This also explains why the principal has a strictly decreasing preference for the length of the game since the principal has to compensate the players for inflexibility. Also see \cite{basak2022panics} for a reference.

Surprisingly, we will find that although the principal's preference space is abundant, the optimal robust asynchronous coordination design is relatively narrow.

\begin{proposition}
An asynchronous structure $\{(T,\{ N_t\}_{t=1}^{T}\}$ is optimal for some preferences in $\mathcal{P}$ if and only if $\phi_{\Gamma}(T)\supsetneq \phi_{\Gamma}(T-1)$, and $a^*(\Gamma(N,A,U,T,\{ N_t\}_{t=1}^{T}))=\phi_{\Gamma}(T)$. 
\end{proposition}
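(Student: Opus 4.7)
The plan is to leverage the equivalence theorem just proved, which identifies the greatest attainable least MSPNE outcome of any $T$-period asynchronous game as $\phi_{\Gamma}(T)$. I interpret ``optimal for some $P_I \in \mathcal{P}$'' as ``strictly preferred, under some $P_I \in \mathcal{P}$, to every alternative structure yielding a distinct $(X,T)$ profile.'' This is the natural reading when preferences in $\mathcal{P}$ are only weakly monotone in $X$, since otherwise ties among same-$T$ structures that attain different outcomes would blur the statement.

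For the sufficiency direction ($\Leftarrow$), I would exhibit an explicit witnessing preference. Take
\[
u_I(X,T) \;=\; M \cdot \mathbf{1}\bigl[X \supseteq \phi_{\Gamma}(T)\bigr] \;-\; T,
\]
with $M > 0$ large. This is weakly increasing in $X$ and strictly decreasing in $T$, so $P_I \in \mathcal{P}$. By condition 2 the given structure yields $u_I = M - T$. For any alternative with $T' < T$, the equivalence theorem bounds its outcome by $\phi_{\Gamma}(T') \subseteq \phi_{\Gamma}(T-1) \subsetneq \phi_{\Gamma}(T)$ (the strict inclusion is condition 1), so the indicator vanishes and $u_I \leq -1 < M - T$ once $M$ is chosen large enough. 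For $T' > T$ we have $u_I \leq M - T' < M - T$; for $T' = T$ with outcome failing to contain $\phi_{\Gamma}(T)$ we have $u_I = -T < M - T$. The given structure is therefore strictly preferred to every distinct alternative.

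For the necessity direction ($\Rightarrow$), I would argue by contrapositive. If condition 2 fails, i.e.\ $Y := a^*(\Gamma(N,A,U,T,\{N_t\}_{t=1}^{T})) \subsetneq \phi_{\Gamma}(T)$, the equivalence theorem supplies another partition into $T$ blocks attaining $\phi_{\Gamma}(T) \supsetneq Y$; under any weakly increasing $P_I$ this alternative is weakly preferred and yields a distinct profile, so the given structure cannot be strictly optimal. If condition 1 fails, i.e.\ $\phi_{\Gamma}(T-1) = \phi_{\Gamma}(T)$, the equivalence theorem at $T-1$ yields a partition into $T-1$ blocks achieving the same outcome $\phi_{\Gamma}(T)$ as ours; under every $P_I \in \mathcal{P}$ the strict monotonicity in $T$ makes this alternative strictly preferred, again contradicting optimality.

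The main obstacle is pinning down the sense of ``optimal'' compatible with weak monotonicity in $X$. Under the naive reading ``in the argmax of some $P_I$,'' edge cases such as $Y \subsetneq \phi_{\Gamma}(T)$ with $\tau_{\Gamma}(Y) = T$ can remain in an argmax through ties, breaking the necessity direction. The distinct-profile reading above—equivalently, identifying the Pareto-undominated designs in the principal's problem—is the one that makes both implications go through cleanly, with the equivalence theorem and monotonicity of $\phi_{\Gamma}$ (from \autoref{prop:1}) as the only inputs.
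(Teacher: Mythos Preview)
Your argument is correct and follows essentially the same two-step approach as the paper: for necessity you exhibit a dominating alternative (the paper explicitly treats only the failure of condition~1, leaving condition~2 implicit, whereas you handle both), and for sufficiency you construct a witnessing preference (the paper simply invokes ``lexicographic order'' while you give an explicit indicator-minus-$T$ utility). Your care about what ``optimal'' should mean under merely weak monotonicity in the outcome is a genuine subtlety the paper's short proof glosses over; the distinct-profile/Pareto-undominated reading you adopt is the natural one and makes the equivalence clean.
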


\begin{proof}
The proof is self-contained. If $\phi_{\Gamma}(T)= \phi_{\Gamma}(T-1)$, then there exists a partition $\{ N_t\}_{t=1}^{T-1})$ such that $a^*=\phi_{\Gamma}(T-1)$, which is better for the principal. Conversely, one can easily create some extreme preferences like lexicographic order\footnote{Since the subjects to be compared are finite, it always allows for a utility representation.}. 
\end{proof}

\begin{corollary}
The number of different $T$ that can be optimal for some $P_I\in \mathcal{P}$ equals to $1+$the number of $t+1, t\in \mathbb{N}^+$ such that $\phi_{\Gamma}(t+1)\supsetneq \phi_{\Gamma}(t)$, which is at most at most $1+\lfloor \sqrt{2N+\frac{9}{4}}-\frac{3}{2} \rfloor$.
\end{corollary}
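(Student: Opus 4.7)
The plan is to split the claim into two parts: the exact count is immediate from the preceding proposition, while the numerical bound reduces to a lemma controlling how rapidly $\phi_\Gamma$ can grow at each strict increase.

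For the equality, the preceding proposition shows that, for $T \geq 2$, an asynchronous structure with $T$ stages is optimal for some $P_I \in \mathcal{P}$ iff $\phi_\Gamma(T) \supsetneq \phi_\Gamma(T-1)$; the trivial $T=1$ design is always a candidate (no partition is ever costlier), supplying the $+1$. Re-indexing with $T = t+1$ yields the stated count.

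The crux is the bound. I will first prove the key lemma: whenever $\phi_\Gamma(T+1) \supsetneq \phi_\Gamma(T)$, one has $|\phi_\Gamma(T+1) \setminus \phi_\Gamma(T)| \geq T+1$. Set $X := \phi_\Gamma(T)$ and form the reduced game $\Gamma' := \overline{\Gamma}(N \setminus X)$. By \autoref{prop:2} applied to $\Gamma$, $\phi_{\Gamma'}(T) = \emptyset$; by \autoref{prop:1}, every MSPNE outcome of $\Gamma(T+1)$ contains $X$, so the correspondence between MSPNE outcomes of $\Gamma(T+1)$ and of $\Gamma'(T+1)$ identifies $\phi_{\Gamma'}(T+1) = \phi_\Gamma(T+1) \setminus X$, which is nonempty by hypothesis. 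Applying the recursive characterization (\autoref{thm:1}) to $\Gamma'$: the condition $\phi_{\Gamma'}(T) = \emptyset$ forces $\tau(\underline{\Gamma'}(Z)) > T$ for every $Z \in SSS(\Gamma')$, and the bound $\tau(\underline{\Gamma'}(Z)) \leq |Z|$ from \autoref{cor:1} then yields $|Z| \geq T+1$ for all such $Z$. Since $\phi_{\Gamma'}(T+1) \neq \emptyset$, some $Z \in SSS(\Gamma')$ actually satisfies $\tau(\underline{\Gamma'}(Z)) \leq T+1$, and the algorithm forces $Z \subseteq \phi_{\Gamma'}(T+1)$. Hence $|\phi_\Gamma(T+1) \setminus \phi_\Gamma(T)| = |\phi_{\Gamma'}(T+1)| \geq |Z| \geq T+1$.

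Finally, label the strict-increase times $T_1 < T_2 < \cdots < T_k$ (all integers $\geq 2$, hence $T_j \geq j+1$) and telescope using the lemma: with the convention $\phi_\Gamma(T_0) := \phi_\Gamma(1)$, we obtain $|\phi_\Gamma(T_j)| - |\phi_\Gamma(T_{j-1})| \geq T_j \geq j+1$, so $N \geq |\phi_\Gamma(T_k)| \geq \sum_{j=1}^{k}(j+1) = \tfrac{k(k+3)}{2}$. Solving $k^2 + 3k \leq 2N$ produces $k \leq \lfloor \sqrt{2N + \tfrac{9}{4}} - \tfrac{3}{2} \rfloor$, and adding the $+1$ for $T=1$ completes the estimate. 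The main obstacle I anticipate is the subgame identification $\phi_{\Gamma'}(T+1) = \phi_\Gamma(T+1) \setminus X$: it requires verifying via \autoref{prop:2}, together with single-crossing in \autoref{ass:1} to lift Nash equilibria between $\Gamma'$ and $\Gamma$, that MSPNE outcomes of $\Gamma(T+1)$ containing $X$ are in bijection (by subtracting $X$) with MSPNE outcomes of $\Gamma'(T+1)$, so that minimality transfers between the two games; once that correspondence is in place, the SSS-size accounting is the short combinatorial finish.
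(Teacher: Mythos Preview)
Your proposal is correct and follows essentially the same route as the paper: the paper's proof is a single sentence asserting that $|\phi_\Gamma(a)\setminus\phi_\Gamma(b)|\geq a$ whenever $\phi_\Gamma(a)\supsetneq\phi_\Gamma(b)$ and then summing $2+3+\cdots+(k+1)\leq |N|$, while you supply a full argument for that key inequality via the reduced game $\overline{\Gamma}(N\setminus X)$, \autoref{thm:1}, and \autoref{cor:1}. The subgame identification $\phi_{\Gamma'}(T+1)=\phi_\Gamma(T+1)\setminus X$ that you flag does go through exactly as you outline (using \autoref{prop:2} for both directions and single-crossing to check that players in $X$ still best-respond with $a_i=1$ at any larger profile), so there is no gap.
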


To see this, note that if $\phi_{\Gamma}(a)\supsetneq \phi_{\Gamma}(b)$, then $|\phi_{\Gamma}(a)-\phi_{\Gamma}(b)|\geq a$, and calculate $2+3+...+T+1 \leq |N|$, which implies the desired result. We can construct such games easily via the weakest link game on a graph consisting of several disconnected cliques with $2,3,4,...$ players, so the bound is tight. 

Therefore, regardless of the preference of the policymaker, to determine the optimal asynchronous design, we only need to compare these non-trivial $T$ candidates.

\subsection{Centrality Measure} 
Whether one player always takes higher action than the other player in the least MSPNE of the synchronous game and asynchronous design, not surprisingly, is characterized by the operator $\tau_{\Gamma}$. The following result can be seen as a dynamic game version of \cite{sadler2022ordinal}. These results are almost obvious given the operator $\tau_{\Gamma}$ is well-defined.

\begin{definition}
Weak centrality $w$: $\forall i,j\in N$, we say $i \succeq^{w} j$ if $\tau_{\Gamma}(i)\leq \tau_{\Gamma}(j)$. Strong centrality $s$: $\forall i,j\in N$, we say $i \succeq^{s} j$ if in all NE of the stage game $\Gamma$, $a_i\geq a_j$.
\end{definition}

According to the definition, weak centrality is a weak total order, and strong centrality is a weak partial order. The following proposition shows that they coincide with the ``centrality'' defined by \cite{sadler2022ordinal}. The following proposition shows that the weak centrality of the monotone game can be determined through the operator $\tau_{\Gamma}$, while the strong centrality of the monotone game agrees with the strong centrality of the stage game. 

\begin{proposition}
\begin{enumerate}
    \item Strong centrality implies weak centrality. 
    \item $i \succeq^{w} j$ iff $\forall T\in \mathbb{N}^+$, in the least MSPNE outcome of the synchronous game $\Gamma(T)$, $a_i\geq a_j$, iff in the least MSPNE outcome of the $T$-optimal asynchronous design, $a_i\geq a_j$.
    \item $i \succeq^{s} j$ iff $\forall T\in \mathbb{N}^+$, in any of the MSPNE outcome of the synchronous game $\Gamma(T)$, $a_i\geq a_j$, iff in any of the MSPNE outcome of the $T$-optimal asynchronous design, $a_i\geq a_j$.
\end{enumerate}
\end{proposition}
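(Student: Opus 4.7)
The plan is to reduce everything to two facts already in hand. First, $O(\Gamma(T)) \subset O(\Gamma)$ by \autoref{prop:1}, and by \autoref{lemma1} every MSPNE outcome of any asynchronous game is also a stage-game NE. Second, the least MSPNE outcome of $\Gamma(T)$ is exactly $\phi_{\Gamma}(T)$ (combine \autoref{lem:2} with \autoref{def:3}), and $\phi_{\Gamma}$ and $\tau_{\Gamma}$ are generalized inverses of one another. For the asynchronous halves of each claim I would invoke the Equivalence Theorem of \autoref{sec:equiv}, which identifies the least MSPNE outcome of the $T$-optimal asynchronous design with $\phi_{\Gamma}(T)$.

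For Part 1, I would take $T^{*}:=\tau_{\Gamma}(j)$, so that $j$ belongs to every MSPNE outcome of $\Gamma(T^{*})$. Because each such outcome is a stage-game NE and $i \succeq^{s} j$ forces $a_{i}\geq a_{j}$ in every such NE, $i$ also belongs to every MSPNE outcome of $\Gamma(T^{*})$. Hence $\tau_{\Gamma}(i)\leq \tau_{\Gamma}(j)$, which is $i \succeq^{w} j$.

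For Part 2, I would read off both directions from the generalized-inverse relation. If $\tau_{\Gamma}(i)\leq \tau_{\Gamma}(j)$, then $j\in \phi_{\Gamma}(T)$ forces $T\geq \tau_{\Gamma}(j)\geq \tau_{\Gamma}(i)$ and hence $i\in \phi_{\Gamma}(T)$, so $a_{i}\geq a_{j}$ in the least MSPNE outcome at every $T$. Conversely, if the inequality holds at every $T$, specializing to $T=\tau_{\Gamma}(j)$ forces $i\in \phi_{\Gamma}(\tau_{\Gamma}(j))$, so $\tau_{\Gamma}(i)\leq \tau_{\Gamma}(j)$. The asynchronous half is immediate: the Equivalence Theorem replaces the least MSPNE of $\Gamma(T)$ by the least MSPNE of the $T$-optimal asynchronous design, so the same argument carries over verbatim.

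For Part 3, the forward direction is immediate in both settings because MSPNE outcomes are always stage-game NE. For the reverse direction I would specialize to $T=1$: the synchronous game $\Gamma(1)$ has no commitment stage, so $O(\Gamma(1))=O(\Gamma)$; and the only feasible partition in the asynchronous game with $T=1$ is the trivial $N_{1}=N$, which also collapses to the stage game. Hence $a_{i}\geq a_{j}$ in every MSPNE outcome at $T=1$ recovers $i \succeq^{s} j$. I expect no real obstacle; the only subtlety worth flagging is that the asynchronous half of Part 2 leans on the full Equivalence Theorem, whereas the asynchronous half of Part 3 only needs the $T=1$ degeneracy together with the general inclusion of MSPNE outcomes in stage-game NE.
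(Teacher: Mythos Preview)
Your proposal is correct and matches the paper's intended reasoning. The paper does not spell out a proof for this proposition, remarking instead that ``these results are almost obvious given the operator $\tau_{\Gamma}$ is well-defined''; your argument is precisely the natural unpacking of that remark, using $O(\Gamma(T))\subset O(\Gamma)$ (\autoref{prop:1}), the identification of the least MSPNE outcome with $\phi_{\Gamma}(T)$, the generalized-inverse relationship between $\phi_{\Gamma}$ and $\tau_{\Gamma}$, the Equivalence Theorem for the asynchronous halves, and the $T=1$ degeneracy for the converse in Part~3.
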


Note that the weak centrality is a weak total order, while it cannot involve too many strictly different weak centrality classes: it coincides with the number of $T\in \mathbb{N}^+$ such that $\phi_{\Gamma}(T+1)\supsetneq \phi_{\Gamma}(T)$, which is at most at most $\lfloor \sqrt{2N+\frac{9}{4}}-\frac{3}{2} \rfloor$.

Additionally, we should also notice that the centrality measure is not too sharp\footnote{Though some network researchers may make a contrasting argument, see \cite{bloch2023centrality}.}, especially for the analysis of coordination games. For instance, given a weakest link game on a strongly connected network, every vertex has the same strong centrality. This motivates the invention of stronger centrality alternates. For instance, the policy function of $\tau_{\Gamma}$ can also be used as a centrality measure. The basic intuition is that tree-depth has been used to characterize the difference between a graph and the star graph, while $\tau_{\Gamma}$ can be seen as the expansion of the tree-depth.

\subsection{Ordered Game}

Given the NP-hardness of $\tau_{\Gamma}$ in general, a natural question is, what kinds of games can be computed efficiently? The first category is the weakest link game with specific graphs, given tree-depth has a closed-form solution. The second category we find is ordered game, where the players' incentives to take action $1$ and the influences to incentivize others to take action $1$ can be totally ordered. This approach is similar to \cite{sadler2024games}.

\begin{definition}
A stage game $\Gamma$ is cost-ordered if: $\forall i,j\in N, i<j, X\subset N\backslash\{i,j\}$, $u_j(j\cup X)> u_j(X)$ implies that there exists a finite sequence $\{n_l\}_{l=1}^{k}, n_1=j,n_{k}=i$, $\forall 1\leq p\leq k, u_{n_p}(\bigcup_{1\leq t\leq p} n_p \cup X )>u_{n_p}(\bigcup_{1\leq t\leq p-1} n_p \cup X )$. A stage game $\Gamma$ is strongly cost-ordered if: $\forall i,j\in N, i<j, X\subset N\backslash\{i,j\}$, $u_j(j\cup X)> u_j(X)$ implies $u_i(i\cup X)> u_i(X)$. A stage game $\Gamma$ is contribution-ordered if: $\forall i,j,k\in N,i<j, X\subset N\backslash \{i,j,k\}$, 
$u_k(i\cup X)> u_k(X)$ implies $u_k(j\cup X)> u_k(X)$. A stage game $\Gamma$ is contribution-natural if: $\forall i,j,k\in N,i\neq j, X\subset N\backslash \{i,j,k\}$, 

$u_k(i\cup X)> u_k(X)$ implies $u_k(j\cup X)> u_k(X)$.
\end{definition}

If a stage game $\Gamma$ is cost-ordered, then $\forall X\in SSE, X=\{1,2,...,k_{X}\}$. So the NE set is totally ordered in set cover order. If a stage game $\Gamma$ is contribution-natural, then whether $u_i(X\cup i)>u_i(X)$ depends on $|X|$ only. Therefore, such a stage game always can be strongly cost-ordered after permutation. In this sense, any stage game that is contribution-natural is $\tau-$equivalent to an aggregative game with heterogeneous cost.
 
\begin{proposition}\label{prop:8}
If a stage game is both cost-ordered/strongly cost-ordered and contribution-ordered, then suppose the utility value is saved, then $\tau(\Gamma)$ can be computed efficiently in quadratic/linear time: $\tau_{\Gamma}(X)=\text{min}_{Y\in SSE(\Gamma): Y\supset X}\tau(\underline{\Gamma}(Y))$,

\begin{footnotesize}

   \begin{equation*}	   \tau(\Gamma)=
 		\begin{cases}
 \tau(\overline{\Gamma}{(N\backslash \{1,2,...,k\})}) & \text{if } u_k(k)>u_k(\emptyset)/u_1(1)>u_1(\emptyset)\\
    1 & \text{else if } N=\emptyset\\
1+  \tau(\overline{\Gamma}(N\backslash |N|)) & \text{else } 
 		\end{cases}
 	\end{equation*} 
\end{footnotesize}

If a stage game is both strictly cost-ordered and contribution natural, $\tau(\Gamma)$ can be further computed efficiently in linear time $(O(|N|))$ after acceleration, see \autoref{alg:1}. $c_i$ is the minimum number of other players taking action 1 such that $i$ prefer taking action $1$ strictly: $\forall i\in N, \forall X\subset N\backslash i, u_i(X\cup i)>u_i(X)$ if and only if $|X|\geq c_i, c_i\in \mathbb{N}^+, 1\leq c_i\leq c_{i+1}\leq n-1$.
 \end{proposition}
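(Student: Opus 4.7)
The plan is to prove the simplified recursion by strong induction on $|N|$, showing that it collapses the general recursion of \autoref{thm:1} under the ordering axioms, and then verify the complexity claims.

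First, I would establish structural consequences of cost-ordering and contribution-ordering. Under (strong) cost-ordering, the set $D := \{i \in N : u_i(\{i\}) > u_i(\emptyset)\}$ of players with a strictly dominant action $1$ is either empty or an initial segment $\{1, \ldots, k^*\}$: in the strongly cost-ordered case this is immediate from the definition, while in the cost-ordered case one follows the chain $\{n_l\}$ guaranteed by the definition downward from any $j \in D$ to $1$. A parallel downward-closure argument shows every element of $SSS(\Gamma)$ is of the form $\{1, \ldots, m\}$. These two facts let me replace the unrestricted $\min_{i \in N}$ and $\min_X$ quantifiers in \autoref{thm:1} by deterministic choices. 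For the dominate branch I then iterate the dominance-elimination step: if $D = \{1, \ldots, k^*\} \neq \emptyset$, peel off player $1$ to get $\overline{\Gamma}(N \backslash 1)$, observe that cost-ordering and contribution-ordering are both preserved by restriction to $\overline{\Gamma}(\cdot)$, and invoke the induction hypothesis to absorb the whole block $\{1,\ldots,k^*\}$ at once.

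Second, in the non-degenerate case (no dominant strategy) I must show: (a) $\arg\min_{i \in N} \tau(\overline{\Gamma}(N \backslash i)) = |N|$, and (b) the divide option $\min_{X \in SSS \setminus \{\emptyset,N\}} \max\{\tau(\underline{\Gamma}(X)), \tau(\overline{\Gamma}(N \backslash X))\}$ never improves on $1 + \tau(\overline{\Gamma}(N \backslash |N|))$. Claim (a) follows from contribution-ordering combined with the monotone comparative statics of \autoref{prop:3}: the larger the index of the ``gifted'' supporter, the more generously the remaining players are incentivized, so their required $\tau$ is weakly lower. Claim (b) is the subtle step: I would apply the induction hypothesis to $\underline{\Gamma}(X)$, which inherits both orderings from $\Gamma$ with $X = \{1, \ldots, m\}$, to rewrite $\tau(\underline{\Gamma}(X))$ via the simplified recursion on $X$, and then compare the resulting boosted subgame on the top of $X$ with $\overline{\Gamma}(N \backslash |N|)$ via \autoref{prop:3}.

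The main obstacle will be claim (b): one compares two dynamic recursions that live on different player sets with different boosting assignments, and must show that the divide strategy cannot strictly outperform the canonical delete-top strategy. This requires carefully chaining monotone comparative statics along each layer of the recursion and is where most of the technical work lies. Once these pieces are in place, the complexity follows: in the cost-ordered case each dominance step re-identifies the current dominant player in $O(|N|)$, yielding the quadratic bound, while in the strongly cost-ordered case player $1$ always suffices, giving linear time.

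Finally, the further acceleration under strict cost-ordering plus contribution-naturalness rests on the observation that $u_k(X \cup k) > u_k(X)$ depends only on $|X|$ through the threshold $c_k$, and that $k \mapsto c_k$ is non-decreasing. This means each recursive level only needs to compare the current ``support count'' against $c_n$ for the current top player in $O(1)$, so the entire sequence of dominance eliminations and delete-top operations can be carried out by a single descending sweep $i = |N|, |N|-1, \ldots, 1$, exactly as implemented in \autoref{alg:1}, yielding the $O(|N|)$ bound.
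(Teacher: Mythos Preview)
Your proposal is correct and follows the same route as the paper: both collapse the recursion of \autoref{thm:1} to ``always delete the top-indexed player'' by using contribution-ordering for claim (a) and arguing the divide branch is redundant for claim (b)---the paper phrases (b) as ``deleting $|N|$ in $\Gamma$ contributes at least as much to players in $X$ as deleting $|X|$ does in $\underline{\Gamma}(X)$, so by the time the delete-top policy has spent $\tau(\underline{\Gamma}(X))$ steps all of $X$ is already handled,'' which is exactly your induction-plus-\autoref{prop:3} comparison. One small correction: in the merely cost-ordered (not strongly) case your set $D$ need not itself be an initial segment---the chain $\{n_l\}$ only shows that $1,\ldots,j$ are \emph{iteratively} dominated once $j$ is---but since you immediately iterate the elimination step this does not affect the argument.
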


\begin{algorithm}[h]
\renewcommand{\algorithmicrequire}{\textbf{Input:}}
	\renewcommand{\algorithmicensure}{\textbf{Output:}}
\caption{Accelerated Algorithm}\label{alg:1}
\begin{algorithmic}[1]
\Require thresholds $(c_i)_{i=1}^{n}, c_i\in \mathbb{N}^+, 1\leq c_i\leq c_{i+1}\leq n-1$

\State $t \gets 0$
\State $l \gets 1$
\State $d \gets 0$
\State $r \gets n$
\While{$l<r$}
\If{$c_l \leq d$}
\State $ l \gets l+1, d \gets d+1$
\Else
\State $r \gets r-1, t \gets t+1, d \gets d+1$
\EndIf
\EndWhile
\Ensure $\tau(\Gamma)=t+1$
\end{algorithmic}
\end{algorithm}

Note that if the stage game is an ordered game, then the corresponding optimal asynchronous design also shares a much simpler form, due to the structure of the policy function. Now we derive three examples of ordered games.

\begin{example}[Aggregative Game with Heterogeneity]
A stage game is an aggregative game if $\forall i\in N, u_i=f_i(a_i,\sum_{j\in N}a_{j}), f_i:\{ 0,1 \}\times \mathbb{N}^+\rightarrow \mathbb{R}$. Then there exists unique $c_i\in \mathbb{N}^+, 1\leq c_i\leq |N|-1$ such that $f_i(0,c_i)> f_i(1,c_i), f_i(0,c_i+1)< f_i(1,c_i+1)$. We can further assume that $\forall i<j\in N, c_i\leq c_j$. An aggregative game is strongly cost-ordered and contribution natural. 

An action profile a is a NE of the stage game iff $X_a=\{1,2,...,k\}$, $c_{k}\geq k-1, c_{k+1}>k$. When the game is homogeneous, $\forall i\in N, c_i\equiv k$, it reduces to the classic result that $\tau_{\Gamma}(i)\equiv \tau_{\Gamma}=k+1$. Given $\tau({\Gamma})\leq k+1$, the greatest threshold vector $c$ can be $\overline{c}=(k,k+1,k+2,...,|N|-2,|N|-1,|N|-1,...,|N|-1)$.
\end{example}

\begin{example}[Weakest Link Game with Aligned Nested Split Graph]
A directed graph $G=(N,E)$ is an aligned nested split graph if $\forall i,j\in N, i<j, N_{in}(i)\subset N_{in}(j)\cup j, N_{out}(i)\subset N_{out}(j)\cup j$,  in which $N_{in}(i):=\{j\in N: (j,i)\in E\}, N_{out}(i):=\{j\in N: (i,j)\in E \}$. The weakest link game on an aligned nested split graph is cost-ordered and contribution-ordered. 

A directed graph is an aligned nested split graph iff $N_{in}(i)=\{I_i,I_i+1,I_i+2,...,|N|-1,|N|\}\backslash i, N_{out}(i)=\{O_i,O_i+1,O_i+2,...,|N|-1,|N|\}\backslash i$, in which $\forall i<j\in N, I_i\geq I_j \text{ or } I_i=j, I_j=j+1$; $ O_i\geq O_j \text{ or } O_i=j, O_j=j+1$.
\end{example}

\begin{example}[Thresholds Game with Opposed Nested Split Graph]
Thresholds game on a directed graph $G$: $u_i(a_i=1,a_{-i})>u_i(a_i=0,a_{-i})$ iff $\sum_{j\in N_{in}(i)} a_j\geq k_i$, in which $k_i\in \mathbb{N}^+, |N_{in}(i)|\geq k_i$. A directed graph $G$ is an opposed nested split graph if $\forall i,j\in N, i<j, k_i\leq k_j, N_{in}(j)\subset N_{in}(i)\cup i, N_{out}(i)\subset N_{out}(j)\cup j$. Thresholds game on an opposed nested split graph is strongly cost-ordered and contribution-ordered. 

A directed graph is an opposed nested split graph iff $N_{in}(i)=\{I_i,I_i+1,I_i+2,...,|N|-1,|N|\}\backslash i, N_{out}(i)=\{1,2,3,...,O_i-1,O_i\}\backslash i$, in which $\forall i<j\in N, I_i\leq I_j \text{ or } I_i=i+1, I_j=i$; $ O_i\leq O_j \text{ or } O_i=j, O_j=j-1$.
\end{example}

\begin{figure}[h]
  \centering
  \begin{tikzpicture}[>=stealth, node distance=2.5cm, every node/.style={circle, draw, minimum size=20pt}]
    
    \node (1) at (90:3) {1};
    \node (2) at (30:3) {2};
    \node (3) at (330:3) {3};
    \node (4) at (270:3) {4};
    \node (5) at (210:3) {5};
    \node (6) at (150:3) {6};

    \draw[->] (6) -- (1);
    \draw[->] (6) -- (2);
    \draw[->] (6) -- (3);
    \draw[->] (6) -- (4);
    \draw[->] (6) -- (5);
    
    \draw[->] (5) -- (1);
    \draw[->] (5) -- (2);
    \draw[->] (5) -- (3);
    \draw[->] (5) -- (4);
    \draw[->] (5) -- (6);
    
    \draw[->] (4) -- (1);
    \draw[->] (4) -- (2);
    
    \draw[->] (3) -- (1);
    \draw[->] (3) -- (2);
  \end{tikzpicture}
  \caption{Weakest Link Game with Aligned Nested Split Graph}
  \label{fig:4}
\end{figure}
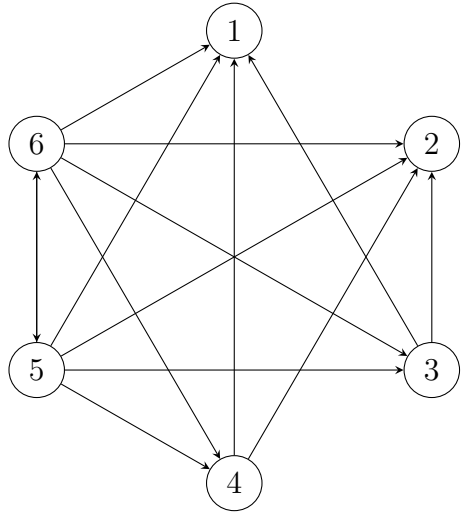

\subsection{Policy Intervention}
 
A natural conjecture is that, given the cruciality of some key players, like the central player in a weakest link game with a star network, the policymaker may facilitate the adversarial coordination performance significantly via simple policy intervention, like subsiding a few key players. The answer to the question includes typically two parts. 

From a pessimistic view, if the policymaker wants to eliminate all inefficient equilibrium outcomes only, the policymaker cannot do much too better than the adversarial coordination performance via policy intervention. The incentive is that given the game is of common interests and strategic complementarities, all of the players prefer the greatest equilibrium. If a player is crucial, then the player himself must understand this in equilibrium. Therefore, there are few spaces to improve it via outside intervention. More formally, subsiding 1 player can reduce the number of stages that an adversarial approach needs at most $1$: $\forall i\in N, \tau_{\overline{\Gamma}(N\backslash i)}(N\backslash i) \leq \tau_{\Gamma}(N)\leq \tau_{\overline{\Gamma}(N\backslash i)}(N\backslash i)+1$. 

From an optimistic view, however, the policymaker may facilitate the local performance significantly. That is to say, the inequality on the right-hand-side holds for $N$, but may not hold for some $X\subsetneq N\backslash i$. For instance, in the following weakest link game, though $\forall i\in N,\tau_{\Gamma}(\{5,6,7,8,9 \})=4$, $\tau_{\overline{\Gamma}(N\backslash 1)}(\{5,6,7,8,9 \})=1$. The reason is that: after subsiding some player $i$, there may emerge some new strictly sufficient sets. Originally,  player $i$ may take action $1$ under restrictive conditions only. In the example, for instance, the corresponding graph of the weakest link game restricted on the sub-group players $\{ 1,2,3,4\}$ is a complete graph, which takes a long time to coordinate efficiently. While the corresponding graph of the weakest link game restricted to the sub-group players $\{ 1,5,6,7,8,9\}$ is a star network, which is relatively fast then. 

Therefore, it is crucial to evaluate the effect of subsiding a set of players $X$ case by case: $\phi_{\overline{\Gamma}(N\backslash X)}(T) \backslash \phi_{\Gamma}(T)$. We do not have any bounds in general.

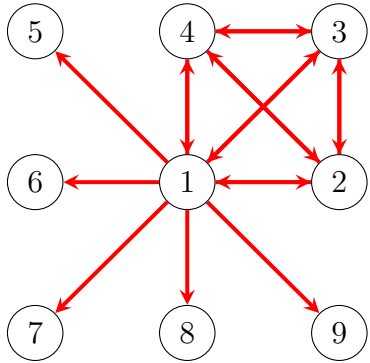
\begin{figure}[h]
  \centering
  \begin{tikzpicture}[>=stealth, node distance=2.5cm, every node/.style={circle, draw, minimum size=20pt}]
    
 \node    (1) at (0,0) {1};
\node (2) at (2,0) {2};
\node (3) at (2,2) {3};
\node (4) at (0,2) {4};
\node (5) at (-2,2) {5};
\node (6) at (-2,0) {6};
\node (7) at (-2,-2) {7};
\node (8) at (0,-2) {8};
\node    (9) at (2,-2) {9};

    \draw[->, red,line width=1.5pt] (1) -- (2);
    \draw[->, red, line width=1.5pt] (1) -- (3);
    \draw[->, red, line width=1.5pt] (1) -- (4);
    \draw[->, red, line width=1.5pt] (1) -- (5);
    \draw[->, red, line width=1.5pt] (1) -- (6);
    \draw[->, red, line width=1.5pt] (1) -- (7);
    \draw[->, red, line width=1.5pt] (1) -- (8);
    \draw[->, red,line width=1.5pt] (1) -- (9);
    
    \draw[->, red, line width=1.5pt] (2) -- (1);
    \draw[->, red, line width=1.5pt] (2) -- (3);
        \draw[->, red, line width=1.5pt] (2) -- (4);
    \draw[->, red, line width=1.5pt] (3) -- (1);
    \draw[->, red, line width=1.5pt] (3) -- (2);
        \draw[->, red, line width=1.5pt] (3) -- (4);
    \draw[->, red, line width=1.5pt] (4) -- (1);
    \draw[->, red, line width=1.5pt] (4) -- (2);
    \draw[->, red, line width=1.5pt] (4) -- (3);
  \end{tikzpicture}
  \caption{Policy Intervention of the Weakest Link Game}
  \label{fig:8}
\end{figure}

\subsection{Robustness}

We briefly discuss the robustness of the synchronous game. 

First, in the benchmark model, we assume the payoff depends only on the action profile in the last period to make the model elegant. But we could allow a less restrictive, time-dependent payoff at little cost. In the following, the utilities of the players are $U=(u_i)_{i\in N}: \prod_{1\leq t\leq T}\prod_{i\in N}A_{it} \rightarrow \mathbb{R}^{|N|}$.

\begin{proposition}
Under the following two conditions, all of the results hold:
\begin{enumerate}
\item Nearness: $\exists U^* =(u_i^*)_{i\in N}: \prod_{i\in N}A_i \rightarrow \mathbb{R}^{|N|} $ such that $\forall i\in N, \forall a,b\in \prod_{i\in N}A_i$, $u_i^*(a)>(=)u_i^*(b) \Rightarrow $ $\forall \{a_t\}_{t=1}^{T}, \{b_t\}_{t=1}^{T}\in \prod_{t=1}^{T}\prod_{i\in N}A_{it}$ such that $a_T=a,b_T=b$, $u_i(\{a_t\}_{t=1}^{T})>(=)u_i(\{b_t\}_{t=1}^{T})$.
\item Procrastination: $\forall i\in N, \forall \{a_{-it}\}_{t=1}^{T}\in \prod_{t=1}^{T}\prod_{j\in N\backslash i}A_{jt}$, \\$\text{argmax}_{\{a_{it}\}_{t=1}^{T}\in \{A_{it}\}_{t=1}^{T}}$$u_i(\{a_{it}\}_{t=1}^{T},\{a_{-it}\}_{t=1}^{T}) \neq 1^{T}$.
\end{enumerate}
\end{proposition}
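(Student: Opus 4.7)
The plan is to reduce the time-dependent synchronous game to the terminal-payoff synchronous game $\Gamma^{*}(T) := (N,A,U^{*},T)$ built from the stage utility $U^{*}$ supplied by nearness. Every earlier result is phrased in terms of the MSPNE outcome set $O(\Gamma(T))$, so it is enough to prove $O(\Gamma(T))=O(\Gamma^{*}(T))$; once this equality is established, Assumption 1 for $\Gamma^{*}$ plus \autoref{thm:1}, \autoref{prop:2}, and the bridge to asynchronous design via tree-depth all carry over verbatim.

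First I would verify that $\Gamma^{*}$ inherits \autoref{ass:1}. By applying nearness to the constant-in-time paths that stay at the terminal action, the strict and weak preference statements on $u_i^{*}$ are exactly specializations of the analogous statements on the time-dependent $u_i$. Thus single-crossing, common interests, and the deviation-proof condition lift from the time-dependent payoff to $U^{*}$, and the non-degeneracy assumption in \autoref{sec:3} transfers in the same way. This step is essentially translation and is the easy half.

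Next I would establish the two inclusions for $O(\Gamma(T))=O(\Gamma^{*}(T))$. For $O(\Gamma^{*}(T))\subset O(\Gamma(T))$, I take a MSPNE $s^{*}$ of $\Gamma^{*}(T)$ with outcome $X$ and, using the construction of \autoref{lem:1}, put it in its canonical ``delay pledging'' form in which no one commits before being required to. I then claim the same strategy profile is a MSPNE of the time-dependent game. A one-shot deviation by player $i$ at any history either (a) changes the terminal outcome from $a$ to $b$ with $u_i^{*}(a)\geq u_i^{*}(b)$, in which case nearness upgrades this to a weak preference in $u_i$ on the corresponding time paths; or (b) leaves the terminal outcome unchanged but raises some intermediate $a_{it}$, in which case procrastination (applied to the fixed profile of others' paths induced by the continuation) rules out the deviation being profitable. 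For the converse $O(\Gamma(T))\subset O(\Gamma^{*}(T))$, take a MSPNE $s$ of the time-dependent game with terminal outcome $X$. Nearness applied to one-shot deviations in period $T$ forces $X$ to be a Nash equilibrium of $\Gamma^{*}$. To lift $X$ to a MSPNE of $\Gamma^{*}(T)$, I would apply \autoref{lem:1} (whose proof uses only monotonicity and the stage-game structure, both of which are now available for $\Gamma^{*}$) to build a canonical MSPNE of $\Gamma^{*}(T)$ realizing $X$, and check using nearness at each subgame that the deviation incentives of $\Gamma^{*}(T)$ are no looser than those of the time-dependent game.

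The main obstacle is the joint handling of the two conditions in the backward direction: nearness alone does not rule out deviations that alter intermediate actions while keeping the terminal outcome fixed, and procrastination alone only bans the extreme ``pledge in every period'' strategy. The synthesis I would rely on is that along the canonical form given by \autoref{lem:1}, the only intermediate actions used are the minimal monotone-compatible ones, so the remaining deviations one must exclude are precisely those that pledge earlier than necessary — and these are exactly the deviations that procrastination excludes once nearness is used to align the terminal rankings. Verifying this carefully at every subgame, and especially checking that the recursion underlying \autoref{thm:1} still terminates when applied to $\Gamma^{*}$, is the only nontrivial bookkeeping in the proof.
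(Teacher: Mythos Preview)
Your proposal is essentially correct and matches the paper's own argument, though you provide considerably more detail than the paper does. The paper's proof is a two-sentence remark: nearness makes every payoff comparison between paths reducible to a comparison of terminal profiles under $U^{*}$, and procrastination is what keeps \autoref{lem:1} valid in the time-dependent setting; once those two ingredients are in place, ``all of the proofs above maintain.'' Your reduction $O(\Gamma(T))=O(\Gamma^{*}(T))$ is exactly a clean way to say this, and the subsequent appeal to \autoref{thm:1}, \autoref{prop:2}, and the tree-depth machinery is the right finish.

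One small point of imprecision: you invoke procrastination to rule out any deviation that ``pledges earlier than necessary,'' but the condition as stated only excludes the all-ones path $1^{T}$ from being the strict argmax. Under the literal reading of nearness, however, any two histories with the same terminal profile already yield identical $u_i$-values, so early-pledge deviations that leave the terminal unchanged are automatically payoff-neutral and procrastination is not doing the work you attribute to it there. The paper locates the role of procrastination specifically in re-establishing \autoref{lem:1}, i.e.\ in guaranteeing that the canonical delay form is itself a best response; that is where you should anchor it as well. This is a labeling issue rather than a logical gap.
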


Nearness is a natural assumption, it implies that the payoff does not depend on the previous history too much. Procrastination further makes sure that \autoref{lem:1} holds, without which we fail to show the equilibria convergence. Given both assumptions hold, all of the proofs above maintain, so we get the desired results.

For a non-trivial example, the lag-complementary model satisfies our assumptions: $\forall i\in N, u_i(a_i=0, a_{-i})\equiv 0$, $u_i(a_i=1,a_{-i})$ is non-decreasing in $a_{-i}$, and $\forall a\neq b, u_i(a)\neq u_i(b)$. $\forall \{ a_t\}_{t=1}^{T} \in \{ A_t\}_{t=1}^{T}, a_{-1}=a_{0}:=0^{N}, U_i(\{a_t\}_{t=1}^{T})=\sum_{t=1}^{T}\delta^{t-1} [u_i(a_{it}, a_{-i,t-1})-u_i(a_{i,t-1}, a_{-i,t-2})]$, $0< \delta<1, \delta \rightarrow 1$.

Moreover, people may argue that the synchronous game is not symmetric: only the action $1$ can be committed. However, even if players could pledge the socially inefficient action $0$, it is weakly dominated given all of the players adopt the monotonic pure strategy only\footnote{So if there is further any cost of committing 0, it is strictly dominated}. More formally, the monotone game $\Gamma'(N,A',U,T)$, in which the irreversible constraint turns into the following. The new order of actions is given by $1\succ w\succ 0$, and equipped with the product order if needed. The monotonic strategy implies that each player's strategy maps a higher history into a higher action. Once a player commits to action $1$ or action $0$, he cannot withdraw the commitment; while the player always has the option to wait($w$) without making any commitment during the pledging stage.

$$\forall 1\leq t\leq T,   A_{it}'=
 		\begin{cases}
\{w,0,1\} & \text{if } t=1 \vee [(a_{it-1}=w) \wedge (t\neq T)] \\
\{ 0\} & \text{else if } a_{it-1}=0\\
    \{1\} & \text{else if } a_{it-1}=1\\
\{0,1 \} & \text{else } \\
 		\end{cases}$$

To see why pledging the socially inefficient action $0$ is weakly dominated during the commitment stage if all of the players adopt the monotonic pure strategy only, the player could always choose to wait instead of committing $0$ among all commitment stages and adopt the most conservative strategy at the terminal. $w$ is a higher action compared with $0$, therefore, other players' actions are at least as high as before, which yields a payoff at least as high as before, which proves the desired result\footnote{The argument holds only under the assumption that all of the players adopt the monotonic pure strategy. Otherwise, one can easily construct some weird strategies that collectively map a lower history into a higher action, and the argument above will fail.}.

\section{Conclusion}\label{sec:colc}

This paper provides a comprehensive analysis of the equivalence between synchronous and asynchronous coordination mechanisms in dynamic games characterized by strategic complementarities and common interests. Despite the structural differences between these coordination modes, we have shown that their outcomes converge under pessimistic evaluation conditions, specifically when focusing on the least favorable equilibrium outcome (MSPNE). The equivalence we establish demonstrates that the greatest implementable outcome in the asynchronous game matches that of the synchronous setting, suggesting that the timing structure does not affect the achievable welfare as long as the critical conditions are met.

We offer novel characterizations for both coordination settings: a recursive approach for synchronous coordination and a graphical representation for asynchronous coordination. These results contribute to a deeper understanding of how different timing structures affect coordination outcomes. The ability to bridge these two approaches enriches our understanding of coordination dynamics, particularly where commitment and strategic timing significantly influence behavior.

Additionally, we address the computational aspects of achieving these outcomes, revealing that the complexity of the coordination problem is closely tied to Tree-depth, rendering it NP-Hard in general. Nevertheless, we identify a significant class of ordered games for which an efficient dynamic programming solution exists, making practical implementation feasible in certain scenarios. This tractability is particularly useful when designing optimal coordination strategies that depend on precise timing structures.

Our findings also inform the optimal design of sequential coordination mechanisms. By examining the principal's role in setting preferences and constraints, we demonstrate that the design of the sequence and the number of partitions can be approached through a simplified cost-benefit analysis. These insights are particularly valuable in environments where the strategic timing of decisions can significantly enhance efficiency and effectiveness.

In sum, our research advances the understanding of dynamic coordination through a unified framework that addresses both synchronous and asynchronous settings. By establishing equivalence between these coordination mechanisms, we provide new perspectives on how efficient outcomes can be attained across various economic contexts without reliance on specific timing structures. This work offers practical guidance for the design of mechanisms that enable effective coordination under strategic complementarities.

\newpage
\bibliography{ref}

\bibliographystyle{aer}

\newpage
\appendix

\section{Omitted Examples}\label{sec:counterexample}

\subsection{Assumptions: Discussion}
In this section, we show that each of our assumptions cannot be removed. 
\paragraph{Common interests cannot be removed:}

\begin{example}
Consider the stage game with 3 players. $N=\{ 1,2,3\}. u_1(a)=(1-a_1)[2(1-a_2)(1-a_3)-1], u_2(a)=a_2(2a_1-1), u_3(a)=a_3(2a_1-1)$. In this example, player 1 does not prefer the greatest equilibrium. 

In the synchronous game with $T=2$, $O(\Gamma(2))=\emptyset$. In the asynchronous design with $T=2$, there exists no partition such that all players take action $1$ in all MSPNE.
\end{example}

\begin{example}\label{exm:ci}
Consider the stage game with 8 players. $N=\{1,2,3,4,5,6,7,8\}$. $u_1(a)= \max\{ a_1(4a_2-3+a_3a_4a_5+a_6a_7a_8), 2a_3a_4a_5 \}$, $u_2(a)= \max\{ a_2(4a_1-3+a_3a_4a_5+a_6a_7a_8), 2a_3a_4a_5 \}$, $u_3(a)= a_3(2a_4a_5-1), u_4(a)=a_4(2a_3a_5-1), u_5(a)=a_5(2a_3a_4-1)$, $u_6(a)=a_6(2a_7a_8-1), u_7(a)=a_7(2a_6a_8-1), u_8(a)=a_8(2a_6a_7-1)$. In this example, the tie-break rule is violated.

In the synchronous game with $T=2$, the MSPNE outcome is not a lattice: $O(\Gamma(2))=\{\{1,2\}, \{3,4,5\}, \{1,2,3,4,5\}, \{1,2,6,7,8\}, N \}$. In the asynchronous game with $T=2$, there exists no partition such that players 1 and 2 take action $1$ in all MSPNE.
\end{example}

\paragraph{Deviation-proof cannot be removed:}
\begin{example}\label{exm:dp}
Consider the stage game with two players. The payoff matrix is given in \autoref{tab:1}.\footnote{It has both some features like prisoner's dilemma and matching pennies: player $1$ has a strictly dominant strategy $0$, and player $2$'s best response is to take the same action as player $1$, which explains why we call such a game ``mixed game''.} In this example, if player $1$ could commit to taking action $1$ creditably, then his payoff would be higher than the Nash equilibrium payoff, and the deviation-proof condition is therefore violated.

In the synchronous game with $T=2$, the unique MSPNE outcome is that both players take action 1. In the asynchronous game with $T=2$, under the partition $N_1=\{ 1\}, N_2=\{2\}$, the unique MSPNE outcome is that both players take action 1.
\end{example} 

\begin{table}[h]
\centering
 \begin{tabular}{|c|c|c|}
    \hline
$1 \backslash 2$ & 0 & 1 \\ \hline
0 & 1, 1 & 3, 0 \\ \hline
  1 & 0, 2 & 2, 3 \\ \hline 
    \end{tabular}
\caption{Payoff Matrix of \autoref{exm:dp}.}
\label{tab:1}
\end{table}

Moreover, when more than two players's utilities violate the deviation-proof condition, the analysis is in general intractable\footnote{One may ask whether we could replace it with an additional non-degenerate condition which claims that nobody has a dominant strategy. Unfortunately, to study the local behavior of the monotone game precisely, which is the main target of this paper, this assumption is inevitable.}.

\begin{example}\label{exm:dp3}
Consider the stage game with 3 players. $N=\{1,2,3 \}$. $ u_1(a)=2a_3-a_1, u_2=2a_3-a_2$, $u_3(a)=a_3[2\max\{a_1, a_2\}-1]$. Easy to see players 1 and 2 each have a strictly dominant strategy $0$, and player $3$'s best response is to take the greater action that player $1$ and player $2$ take. Though the stage game satisfies the single-crossing condition, the exists a free-rider incentive like the public good game: if player $1$ does not pledge action $1$, then player $2$ can benefit from committing action $1$; if player $1$ pledged action $1$, then player $2$ has no incentive to commit action $1$. The same thing holds for player $1$. So in some sense, the game between players 1 and 2 is a game of strategic substitutes.

In the synchronous game with $T=2$, there exists no least MSPNE outcome, as the two minimal MSPNE outcomes are player 1 and player 3 take action 1, and player 2 and player 3 take action 1. In the asynchronous game with $T=2$, for partition $N_1=\{1,2\},N_2=\{3\}$, there exists no least MSPNE outcome. Also, among all partitions that guarantee the existence of the least MSPNE outcome, there exists no partition that realizes the greatest least MSPNE outcome. Since there exists partition $N_1=\{1\}, N_2=\{2,3\}$ such that player 1 and player 3 take action 1 in all MSPNE, and partition $N_1=\{2\}, N_2=\{1,3\}$ such that player 2 and player 3 take action 1 in all MSPNE, but there exists no partition such that all players take 1 in all MSPNE.

\end{example}

\subsection{Solution Concept: Discussion}

In this section, we will compare MSPNE with SPNE. We show that SPNE could be less appealing for both synchronous and asynchronous games, due to several reasons.

\paragraph{A SPNE outcome of the asynchronous game is not necessarily a NE of the stage game:}

\begin{example}
Consider the stage game with 3 players: $N=\{1,2,3 \}$, $u_1(a)=a_1+a_2+a_3, u_2(a)=a_2(2a_3-1), u_3(a)=a_3(2a_2-1)$. Player 1 has a strictly dominant strategy $1$. In the asynchronous game $(T=2, P_1=\{1\}, P_2=\{2,3\})$, the following strategy is a SPNE but the SPNE outcome is not a NE of the stage game: $a_1=0$, $s_2(a_1=1)=s_3(a_1=1)=0, s_2(a_1=0)=s_3(a_1=0)=1$. 
\end{example}

\paragraph{The nonexistence of least SPNE outcome of the synchronous and asynchronous game}

\begin{example}
Consider the stage game with 5 players. $N=\{1,2,3,4,5\}$. $u_1(a)=a_1(2a_2-1)+2a_3a_4a_5, u_2(a)=a_2(2a_1-1)+2a_3a_4a_5, u_3(a)=a_3(2a_4a_5-1), u_4(a)=a_4(2a_3a_5-1), u_5(a)=a_5(2a_3a_4-1)$. 

In the synchronous game with $T=2$, there exists no least SPNE outcome: both players 1,2 take action 1 and others take action 0, and players 3,4,5 take action 1 and others take action 0 are SPNE outcomes, while no player take action 1 is not a SPNE. In the asynchronous game with $T=2, N_1=\{1\}, N_2=\{2,3,4,5\}$, there exists no least SPNE outcome: both players 1,2 take action 1 and others take action 0, and players 3,4,5 take action 1 and others take action 0 are SPNE outcomes.
\end{example}

\paragraph{The nonexistence of partition that maximizes the least SPNE outcome of the asynchronous game}

\begin{example}\label{exm:asyngame}
Consider the stage game with 7 players. $N=\{1,2,3,4,5,6,7 \}$. $u_1(a)=a_1(2a_2a_3-1)+2a_5, u_2(a)=a_2(2a_1-1), u_3(a)=a_3(2a_1-1)$, $u_4(a)=a_4(2a_1a_5a_6a_7-1), u_5(a)=a_5(2\max\{a_4,a_6a_7 \}-1), u_6(a)=a_6(2\max\{a_4,a_5a_7 \}-1), u_7(a)=a_7(2\max\{a_4,a_5a_6 \}-1)$.

It is easy to see that $P_1=\{1,4\}, P_2=\{2,3,5,6,7 \}$ is the unique partition such that all players take action 1 in all MSPNE of the corresponding asynchronous game when $T=2$. Nevertheless, player $5,6,7$ take action 1 and all others take action 0 is also a SPNE outcome according to the following strategy $s$: $a_{1}=a_{4}=0$, $s_2(a_1=1)=s_3(a_1=1)=1, s_2(a_1=0)=s_3(a_1=0)=0$, $s_5(a_4=1)=s_6(a_4=1)=s_7(a_4=1)=1$, $s_5(a_1=0,a_4=0)=s_6(a_1=0,a_4=0)=s_7(a_1=0,a_4=0)=1$, $s_5(a_1=1,a_4=0)=s_6(a_1=1,a_4=0)=s_7(a_1=1,a_4=0)=0$. Additionally, the least SPNE of the asynchronous game corresponding to the partition $P_1=\{1,4,5,6,7\}, P_2=\{2,3\}$ is player $1,2,3$ take action 1 and all others take action 0. Therefore, there exists no partition that maximizes the least SPNE of the corresponding asynchronous game.
\end{example}

\paragraph{SPNE is not reducible:}

MSPNE is reducible in the synchronous game in the sense that if a set of players $X$ take action 1 in all MPSNE, then we can without loss of generality consider the game $\overline{\Gamma}_{N\backslash X}(T)$, and $O(\Gamma(T))=X+O(\overline{\Gamma}_{N\backslash X}(T))=\{Y\cup X| Y \in O(\overline{\Gamma}_{N\backslash X}(T))\}$, which is a direct corollary of \autoref{prop:2}\footnote{Note that NE is reducible for the game of strategic complementarities, since the greatest and the least NE is exactly the greatest and least action profile that survives iterated elimination of strictly dominated strategy.}. Similarly, MSPNE is reducible in the asynchronous game. In fact, MSPNE is reducible even in the asynchronous design: if there exists a partition with $(T,\{N_t\}_{t=1}^{T}$ such that a set of players $X$ take action 1 in all MPSNE, then we can fix the partition $\{N_t\}_{t=1}^{T}\cap X$, and consider the game $\overline{\Gamma}_{N\backslash X}(T)$. This argument can be shown using the graphical representation.

Nevertheless, the argument does not hold for SPNE.

\begin{example}\label{exa:1}
Consider the stage game with 4 players. $N=\{1,2,3,4\}$. $u_1(a)=a_1(2a_2-1), u_2(a)=a_2(2a_1-1), u_3(a)=a_3(2a_2a_4-1), u_4(a)=a_4(2a_1a_3-1)$. It can be seen as a weakest link game on a directed graph in \autoref{fig:1}. Not hard to see $O(\Gamma)=\{ \emptyset, \{1,2\},N\}$. 

In the synchronous game $T=2$, there are two SPNE outcomes: players 1,2 take action 1, others take action 0, and all players take action 1. 
\end{example}

\begin{figure}[h]
  \centering
\begin{tikzpicture}[>=stealth, node distance=2cm, every node/.style={circle, draw, minimum size=20pt}]
    
    \node (1) at (0,0) {1};
    \node (2) at (3,0) {2};
    \node (3) at (3,-3) {3};
    \node (4) at (0,-3) {4};

    \draw[->] (1) -- (2);
    \draw[->] (1) -- (4);
    \draw[->] (2) -- (1);
    \draw[->] (2) -- (3);
    \draw[->] (3) -- (4);
    \draw[->] (4) -- (3);
  \end{tikzpicture}
 
  \label{fig:1}
\end{figure}

\begin{example}\label{exa:pq}
Consider the stage game with 4 players. $N=\{1,2,3,4\}$. $u_1(a)=a_1(2a_2-1), u_2(a)=a_2(2a_1-1), u_3(a)=a_3(2a_4-1)+2a_2, u_4(a)=a_4(2a_3-1)+2a_1$. $O(\Gamma)=\{ \emptyset, \{1,2\}, \{3,4\}, N\}$\footnote{In this example, the people may guess the two sub-games are ``independent'' so MSPNE should coincides with the SPNE. Unfortunately, it turns out to be wrong.}. 

In the synchronous game $T=2$, there are two SPNE outcomes: players 1,2 take action 1, others take action 0, and all players take action 1. 
\end{example}

To see the argument holds for asynchronous game, consider \autoref{exm:asyngame}. In the asynchronous game with $T=2, P_1=\{1,4\}, P_2=\{2,3,5,6,7 \}$, players 5,6,7 take action 1 in all SPNE.

\paragraph{SPNE is not robust to $\epsilon$-perturbation:}

\begin{example}

We construct two stage games for comparison. The first stage game involves 5 players, $N=\{ 1,2,3,4,5\}$. $\forall i=1,2,3, u_i(a)=2\min \{a_1,a_2,a_3 \}-a_i$, $u_4(a)=2\min \{a_1,a_4,a_5 \}-a_4$, $u_5(a)=2\min \{a_4,a_5 \}-a_5$. The stage game has 3 NE: $\{\emptyset, \{ 1,2,3\} , N \}$. 

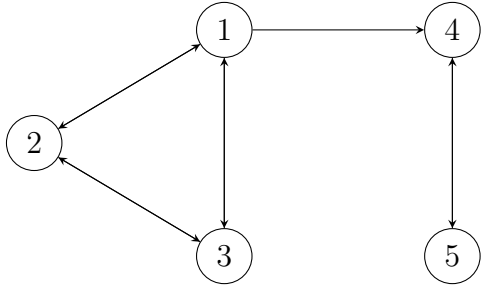
\begin{figure}[h]
  \centering
\begin{tikzpicture}[>=stealth, node distance=2cm, every node/.style={circle, draw, minimum size=20pt}]
    
    \node (1) at (0,0) {1};
    \node (2) at (-2.5,-1.5) {2};
    \node (3) at (0,-3) {3};
    \node (4) at (3,0) {4};
    \node (5) at (3,-3) {5};

    \draw[->] (1) -- (2);
     \draw[->] (1) -- (3);
     \draw[->] (1) -- (4);
    \draw[->] (4) -- (5);
    \draw[->] (5) -- (4);
    \draw[->] (2) -- (1);
    \draw[->] (2) -- (3);
    \draw[->] (3) -- (1);
    \draw[->] (3) -- (2);

  \end{tikzpicture}

  \caption{A Weakest Link Game without Spillovers}
  \label{fig:ex3}
\end{figure}

In the synchronous game with $T=3$, both $N$ and $\{1,2,3\}$ are SPNE outcomes. The outcome ${1,2,3}$ can be supported by the following SPNE: $\forall i\in N, s_{i1}=0$, $s_{i2}=1$ iff $a_{i1}=1$. $\forall i=1,2,3, s_{i3}=1$ iff $(a_{i2}=1) \vee (a_{12}+a_{22}+a_{32}\geq 2) \vee (a_{11}+a_{21}+a_{31}\geq 1) \vee (a_2=0^{N})$. $s_{43}=1$ iff $(a_{42}=1)\vee (a_{11}+a_{21}+a_{31}\geq 1) \vee ((a_{12}=1) \wedge (a_{52}=1)) \vee ((a_{22}+a_{32}=2) \wedge (a_{52}=1))$. $s_{53}=1$ iff $a_{42} \vee a_{52}=1$. 

The second stage game involves 5 players too, $N=\{ 1,2,3,4,5\}$. $u_1(a)=2\min \{a_1,a_2,a_3 \}-a_1+\epsilon a_4, \epsilon \in (0,1)$. $\forall i=2,3, u_i(a)=2\min \{a_1,a_2,a_3 \}-a_i$, $u_4(a)=2\min \{a_1,a_4,a_5 \}-a_4$, $u_5(a)=2\min \{a_4,a_5 \}-a_5$. The stage game has 3 NE: $\{\emptyset, \{ 1,2,3\} , N \}$. 

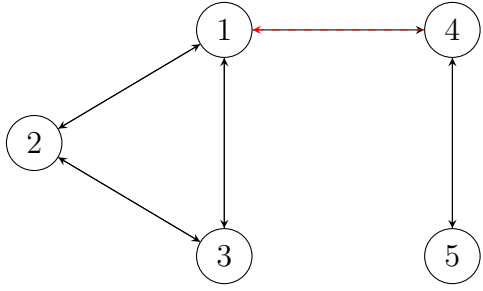
\begin{figure}[h]
  \centering
\begin{tikzpicture}[>=stealth, node distance=2cm, every node/.style={circle, draw, minimum size=20pt}]
    
    \node (1) at (0,0) {1};
    \node (2) at (-2.5,-1.5) {2};
    \node (3) at (0,-3) {3};
    \node (4) at (3,0) {4};
    \node (5) at (3,-3) {5};

    \draw[->] (1) -- (2);
     \draw[->] (1) -- (3);
     \draw[->] (1) -- (4);
    \draw[->] (4) -- (5);
    \draw[->] (5) -- (4);
    \draw[->] (2) -- (1);
    \draw[->] (2) -- (3);
    \draw[->] (3) -- (1);
    \draw[->] (3) -- (2);
\draw[dashed, ->, red] (4) -- (1);
  \end{tikzpicture}
  \caption{A Weakest Link Game with $\epsilon$-Spillovers}
  \label{fig:ex4}
\end{figure}

In the synchronous game with $T=3$, the unique SPNE outcome is $N$. Otherwise, the SPNE outcome candidate can be $\{1,2,3\}$ only. Then it must be the case that $a_{11}=0$, otherwise, all other players take action $1$ in equilibrium. While player $1$ has a strict incentive to deviate to $a_{i1}=1$, which is profitable for him since he receives the positive spillovers $\epsilon$ from player $4$'s action $1$.
    
\end{example}

\begin{example}
Consider the stage game with 7 players. $N=\{1,2,3,4,5,6,7 \}$. $u_1(a)=a_1(2a_2a_3-1)+(1+\epsilon) a_5, u_2(a)=a_2(2a_1-1), u_3(a)=a_3(2a_1-1)$, $u_4(a)=a_4(2a_1a_5a_6a_7-1), u_5(a)=a_5(2\max\{a_4,a_6a_7 \}-1), u_6(a)=a_6(2\max\{a_4,a_5a_7 \}-1), u_7(a)=a_7(2\max\{a_4,a_5a_6 \}-1)$, $\epsilon \geq -1$. 

In the asynchronous game with $T=2, P_1=\{1, 4\}, P_2=\{2,3,5,6,7 \}$, if $\epsilon<0$, then the unique SPNE outcome is all players take action 1. If $\epsilon>0$, then both all players take action 1, and players 5,6,7 take action 0 and all other players take action 0 is SPNE outcome.
\end{example}

Another alternative solution concept is Markov Perfect Equilibrium(MPE).  However, the well-developed theory of MPE requires more restrictive non-degenerate conditions to avoid personalized history partition and define state variables \citep{maskin2001markov}, which cannot be satisfied as the synchronous game is in general extremely degenerated. 

\paragraph{When we have SPNE coincides with MSPNE:}

We have some positive results, that is, under additional assumptions, MSPNE coincides with SPNE for synchronous and asynchronous games. Unfortunately, we cannot find uniform conditions to guarantee it.

\begin{proposition}
\begin{enumerate}
    \item In the synchronous game, if the stage game is a weakest link game on an undirected graph such that $\forall i\in N, \forall a_{-i},a_{-i}'\in A_{-i}$, $ u_i^*(a_{-i}):=\max_{a_i\in A_i} u_i(a_i, a_{-i})$, $u_i^*(a_{-i}')=u_i(a_i=1, a_{-i}'), u_i^*(a_{-i})=u_i(a_i=0, a_{-i}) \Rightarrow f(a_{-i}')>f(a_{-i})$, then the SPNE outcome coincides with the MSPNE outcome.
    \item  In the asynchronous game,  if $\forall i\in N, \forall a_{-i},a_{-i}'\in A_{-i}$, $u_i^*(a_{-i}):=\max_{a_i\in A_i} u_i(a_i, a_{-i})$, $u_i^*(a_{-i}')=u_i(a_i=1, a_{-i}'), u_i^*(a_{-i})=u_i(a_i=0, a_{-i}) \Rightarrow f(a_{-i}')>f(a_{-i})$, then the least SPNE outcome coincides with the the least MSPNE outcome.
    \item If the stage game is a weakest link game on a directed graph, then $\forall T\in \mathbb{N^+}$, there exists a partition such that in the corresponding asynchronous game, the least SPNE outcome coincides with the least MSPNE outcome, which is further the optimal asynchronous design outcome.
    \item If the stage game is cost-ordered and contribution-ordered, then in the synchronous game the SPNE outcome coincides with the MSPNE outcome, and there exists a partition such that in the corresponding asynchronous game, the least SPNE outcome coincides with the least MSPNE outcome, which is further the optimal asynchronous design outcome.
\end{enumerate}
\end{proposition}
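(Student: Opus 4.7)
The plan is to handle the four parts separately, but with a shared structural idea: each additional hypothesis eliminates the ``empty threat'' loophole that typically distinguishes SPNE from MSPNE (as illustrated in \autoref{exa:1} and \autoref{exa:pq}), either by strengthening stage-game preferences from weak to strict so that iterated strict dominance pins down play, or by constructing a partition under which the sequential structure cascades through iterated dominance.

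For parts (1) and (2), I would leverage the strengthened tie-break condition to show that every ``flip point'' (where a player's best response switches from action $0$ to action $1$) comes with a strict utility improvement, so that Nash equilibria of every sub-game are strict NE. This rules out the indifferences that non-monotone strategies exploit to sustain perverse off-path threats. Backward induction on $T$ then proceeds cleanly: at each stage and each history, the continuation SPNE play is pinned down (up to coordination on a strict NE of the induced sub-game) by iterated strict dominance, and the resulting strategy is automatically monotone in the observed history. For part (1), the weakest link structure on an undirected graph additionally guarantees that NE outcomes are unions of connected components, which combined with \autoref{lem:1} (to put strategies in canonical form) and the lattice characterization of \autoref{prop:2} upgrades the argument to full set equality of SPNE and MSPNE outcomes. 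For part (2), the argument is the same but restricted to the least outcome, using the IESEDS identification in \autoref{theorem1} to characterize both the least SPNE and the least MSPNE as the least action profile surviving iterated strict dominance in the extensive form.

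For parts (3) and (4), the construction is explicit. For (3), I would define the partition recursively via the tree-depth minimizer from \autoref{prop:4}: at stage $1$, commit the minimizer of $1 + \min_i td(G \setminus i)$ within each strongly connected component, then recurse on the remainder. Under this partition, once the stage-$1$ players take action $1$ (as any SPNE forces), every later player inherits an iteratively strictly dominant action via the weakest-link structure on the remaining directed graph, so iterated strict dominance eliminates every SPNE other than the one coinciding with the least MSPNE, which by the equivalence theorem equals the optimal asynchronous design outcome. For (4), I would partition players by the order in which they become dominated or absorbed into the SSE along the recursion in \autoref{thm:1}, which under the cost-ordered and contribution-ordered structure of \autoref{prop:8} is unambiguous (the policy function always dominates or deletes the highest-index player); iterated strict dominance again collapses the SPNE set onto the MSPNE set. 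The synchronous half of (4) follows by applying the same strict-preference argument as in (1), since cost-ordered plus contribution-ordered games have stage-game NE which are order ideals $\{1,\dots,k\}$ and which the ordered structure forces to be strict.

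The main obstacle, shared across parts, is verifying that the relevant hypothesis is inherited by every sub-game $\overline{\Gamma}(N \setminus X)$ and $\underline{\Gamma}(X)$ encountered in the backward induction—in particular, that the tie-break, the undirected weakest link structure, the tree-depth minimizer property, and the cost/contribution order each restrict cleanly to the appropriate ideals. The hardest sub-case is part (2), where the tie-break rule alone (without the weakest link or ordered structure) must suffice in arbitrary asynchronous partitions; here the crux is to show that at every off-path history the least continuation outcome is monotone in that history, so that any least-SPNE-supporting strategy can be replaced by a monotone one realizing the same outcome.
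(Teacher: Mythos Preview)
The paper does not supply a proof of this proposition: it is stated in the discussion section of \autoref{sec:counterexample} and the text moves on immediately to the non-binary example, with no corresponding entry in the ``Omitted Proofs'' appendix. So there is nothing to compare your proposal against.

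On its own merits, your high-level strategy is sound and correctly isolates the mechanism: the extra hypotheses kill the indifferences that non-monotone off-path threats exploit (as in \autoref{exa:1} and \autoref{exa:pq}). The constructions for (3) and (4) via the tree-depth policy function and the ordered-game recursion of \autoref{prop:8} are the natural ones and should go through. Two places deserve more care. First, in part (1) you invoke the undirected structure only to say NE are unions of connected components, but the real work the undirected hypothesis does is to rule out the asymmetric ``$i$ influences $j$ but not conversely'' configuration that powers \autoref{exa:1}; you should make explicit why symmetry forces any credible off-path continuation to be monotone (roughly: if $j$'s threat to play $0$ after $i$ plays $1$ is credible, then by undirectedness $j$'s own best response already depends on $i$, so the threat unravels). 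Second, your diagnosis of part (2) as the hardest case is correct, but the sketch ``show the least continuation outcome is monotone in history'' is close to assuming the conclusion; you will need to argue directly from the strict tie-break that the IESEDS construction of \autoref{theorem1} already coincides with the least SPNE, i.e., that no non-monotone SPNE can sustain a strictly lower outcome---and here the strict inequality $u_i^*(a_{-i}')>u_i^*(a_{-i})$ is exactly what prevents a player from being indifferent between punishing and not punishing an upward deviation.
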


\paragraph{The equivalence of MSPNE between the synchronous and asynchronous games fail when the action is non-binary:}
To conclude this section, we give an example which shows that the minimum number of periods $T$ such that there exists a $T-$partition in which all players take action $1$ in all MSPNE of the corresponding asynchronous game does not coincide with the minimum number of periods $T'$ such that all players take action $1$ in all MSPNE of the corresponding monotone game in general, when the feasible action is not binary. 

\begin{example}
Consider the example with 6 players. $N=\{1,2,3,4,5,6\}$, $A_1=A_2=\{0,1,2\}, A_3=A_4=A_5=A_6=\{0,1 \}$. $u_1(a)=\mathbb{I}(a_1\geq 1)(2\mathbb{I}(a_2\geq 1)-1)+\mathbb{I}(a_1\geq 2)(2a_3a_4\mathbb{I}(a_2\geq 1)-1)$, $u_2(a)=\mathbb{I}(a_2\geq 1)(2\mathbb{I}(a_1\geq 1)-1)+\mathbb{I}(a_2\geq 2)(2a_5a_6\mathbb{I}(a_1\geq 1)-1)$, $u_3(a)=a_3(2\mathbb{I}(a_1\geq 2)-1), u_4(a)=a_4(2\mathbb{I}(a_1\geq 2)-1)$, $u_5(a)=a_5(2\mathbb{I}(a_2\geq 2)-1), u_6(a)=a_6(2\mathbb{I}(a_2\geq 2)-1)$. One can see that the unique MSPNE outcome of the monotone game is that players 1,2 take action 2, and all others take action 1, the highest action profile. However, there exists no partition such that in all MSPNE outcomes of the asynchronous game, players 1,2 take action 2, and all others take action 1. Since partition $P_1=\{1\}, P_2=\{2,3,4,5,6\}$ realizes the least MSPNE outcome that player 1 takes action 2, players 2,3,4 take action 0, and players 5,6 take action 0; similarly, partition $P_1=\{2\}, P_2=\{1,3,4,5,6\}$ realizes the least MSPNE outcome that player 2 takes action 2, players 1,5,6 take action 0, and players 3,4 take action 0. While there exists no partition such that the least MSPNE outcome coincides with the highest action profile.
\end{example}

\section{Omitted Proofs}

\paragraph{Proof of the \autoref{lem:0}.}

For the first case who has an iterated strictly dominated strategy 0, suppose not, then the player who has a strictly dominated strategy 0 at least has the incentive to deviate at $t=T$. And repeat the argument for iterated strict dominance case.

For the second case who has an iterated dominated strategy 1, suppose not and apply the single deviation principle, then for the player who has a strictly dominated strategy 1, there exists some strategy $s$ such that for a player $i$ who has a strictly dominant strategy $0$ in the stage game, he has no strict incentive to not commit at some point: $u_i(a_i=1,a_{-i}') \geq  \max_{a_i\in A_i} u_i(a_i,a_{-i}) \geq \max_{a_i\in A_i} u_i(a_i,a_{-i}=0^{N\backslash i})=u_i(a_i=0,a_{-i}=0^{N\backslash i})$, then deviation-proof condition implies $u_i(a_i=1,a_{-i}') \geq u_i(a_i=0,a_{-i}')$, which is a contradiction. And repeat the argument for iterated strict dominance case.

\paragraph{Proof of the \autoref{lem:1}.}
Single deviation principle works in this problem. Therefore, we first show that nobody has the incentive to deviate at stage $t=T$.  If $b_{iT-1}=1$, player $i$ cannot deviate at $t=T$ by the monotone constraint. Otherwise, $b_{iT-1}=0$. If $a_{iT-1}=0$, note that all other players that take strategy $g$ in the sub-game of $t=T$ under the history $\{ b_t\}_{t=1}^{T-1}$ act as if taking strategy $s$ in the sub-game of $t=T$ under the history $\{ a_t\vee b_t\}_{t=1}^{T-1}$, so given $s$ is a SPNE, player $i$ has no incentive to deviate. Otherwise, $a_{iT-1}=1$. $s$ is a SPNE implies that $u_i(a_i=1,a_{-i}')\geq \max_{a_i\in A_i} u_i(a_i,a_{-i})$ for some $a_{-i}'$ and $a_{-i}$, which is determined by strategy profile $s$. Common interests implies $\max_{a_i\in A_i} u_i(a_i,a_{-i}) \geq \max_{a_i\in A_i} u_i(a_i,0^{N\backslash i})= u_i(a_i=0,0^{N\backslash i})$. The last equation holds since action $1$ is not a strictly dominant strategy. Deviation-proof condition implies $u_i(a_i=1,a_{-i}')\geq u_i(a_i=0,a_{-i}')$. So nobody has the incentive to deviate at stage $t=T$.

Now we show that  nobody has the incentive to deviate at stage $1\leq t\leq T-1$, which is relatively easy. If $b_{it-1}=1$, player $i$ cannot deviate at $t=T$ by the monotone constraint.  Otherwise, if $a_{it}=0$, given $s$ is a SPNE, player $i$ has no incentive to deviate. Otherwise if $a_{it}=1$, $i$'s payoff is indifferent regardless of deviating or not, given all other players adopt strategy $g$. 

If $s$ is a monotonic strategy, then $g$ is a monotonic strategy, which is easy to verify by definition.

\paragraph{Proof of the \autoref{prop:1}.}

Suppose $s$ is a MSPNE of $\Gamma(T+1)$ in the form of \autoref{lem:1},  define $c_1:=0^{N},\forall 2\leq k\leq T,\forall b_k\in A_k, c_{k}:=b_{k-1}$, then the following $g$ is a MSPNE of $\Gamma(T)$: $\forall i\in N, g_{i1}=0, \forall 2\leq t\leq T-1, g_{it}(\prod_{k=1}^{t-1}b_k)=1 \text{ iff } b_{it-1}=1,  g_T(\prod_{k=1}^{T-1}b_k)=s_{T+1}(\prod_{k=1}^{T} c_k)$, since $g$ is a monotonic strategy profile, and it is simply the sub-game of $\Gamma(T+1)$ under the history $c_1=0^{N}$. 

\paragraph{Proof of the \autoref{lem:2}.}

We first show the case $T=2$. given $a_1 \in A_1$, define $P(a_1)=\{i\in N: a_{i1}=1 \vee i\in \text{ inf }O(N\backslash a_1,\overline{U},1) \}$. $P(a_1)$ is well defined since the NE is a non-empty complete lattice. We construct the following strategy $s$: $\forall i\in N, s_{i1}=0, s_{i2}(a_1)=1 \text{ iff } i\in P(a_1) $. $s$ is a monotonic strategy profile, to see it is a MSPNE, first, nobody has the incentive to deviate at $t=2$, and once $i$ deviates at $t=1$, her payoff changes from $u_i(\emptyset)$ to $u_i(P(a_{i1}=1,a_{-i1}=0))$, while $N_i$ is a MSPNE outcome implies $u_i(N_i)\geq u_i(P(a_{i1}=1,a_{-i1}=0)\cup N_i)$. Suppose $\emptyset$ is not a MSPNE outcome, then $\exists i\in N, u_i(P(a_{i1}=1,a_{-i1}=0)) > u_i(\emptyset)$, therefore, $u_i(P(a_{i1}=1,a_{-i1}=0))> u_i(P(a_{i1}=1,a_{-i1}=0)\backslash i)$ by deviation-proof condition, and $u_i(P(a_{i1}=1,a_{-i1}=0)\cup N_i)>u_i([P(a_{i1}=1,a_{-i1}=0)\cup N_i]\backslash i)$ by single-crossing condition. Suppose $u_i(N_i)\geq u_i(P(a_{i1}=1,a_{-i1}=0)\cup N_i)$, then $u_i(P(a_{i1}=1,a_{-i1}=0)\cup N_i) \geq \max\{ u_i(N_i),u_i(N_i\cup i)\}$ by common interest. Regardless of $u_i(N_i\cup i)> u_i(N_i)$ or $u_i(N_i)\geq u_i(N_i\cup i)$, common interests implies $N_i$ cannot be a MSPNE outcome, which is a contradiction.

Now we can simply extend the result to general $T$ by mathematical induction. define $P(a_1)=\{i\in N: a_{i1}=1 \vee i\in \text{ inf }O(N\backslash a_1,\overline{U},T-1) \}, \forall 2\leq k\leq T-1, P(\prod_{t=1}^{k} a_{t})=P(\prod_{t=1}^{k-1 }a_{t}) \cup \{i\in N: a_{ik}=1 \vee i\in \text{ inf }O(N\backslash (P(\prod_{t=1}^{k-1 }a_{t}) \cup a_k),\overline{U},T-k) \}$. Once we have shown the validness of the result for $T=p$, then $P(\prod_{t=1}^{p} a_{t})$ is well-defined. We construct the following strategy $s$: $\forall i\in N, s_{i1}=0, \forall 2\leq t\leq T-1, s_{it}(\prod_{k=1}^{t-1}a_{k})=1 \text{ iff } a_{it-1}=1,  s_{iT}(a_1)=1 \text{ iff } i\in P(\prod_{t=1}^{T-1} a_{t}) $. Also let me introduce the notation $P_{ik}$ to represent $P(\prod_{t=1}^{T-1} a_{t})$, in which $\forall j\in N \forall 1\leq t\leq T-1, a_{jt}=\mathbb{I}(j= i, t\geq k)$. Once we have shown the validness of the result for $T=p$, then nobody has the incentive to deviate at $t=p+1$ for the game $\Gamma(T=p+1)$. Once $i$ deviates at $t=k$, her payoff changes from $u_i(\emptyset)$ to $u_i(P_{ik})$, while $N_i$ is a MSPNE outcome implies $u_i(N_i)\geq u_i(P_{ik}\cup N_i)$, and repeat the argument above, we have $u_i(\emptyset) \geq u_i(P_{ik})$. 

\paragraph{Proof of the \autoref{prop:2}.}

$\Rightarrow: $ Suppose $\phi_{\underline{\Gamma}(N\backslash X)}(T)=Y\neq \emptyset$, then given the strategy in the form of \autoref{lem:1}, under the lowest history, those players $X$ take action $1$, then the monotonic strategy implies those players $X$ take action $1$ in any of the history, then those players $Y$ have the incentive to deviate. 

$\Leftarrow: $ Define $P(a_0)=X, \forall 1\leq k\leq T-1, P(\prod_{t=1}^{k} a_{t})=P(\prod_{t=1}^{k-1 }a_{t}) \cup \{\{i\in N: a_{ik}=1\} \cup \text{ inf }O(N\backslash (P(\prod_{t=1}^{k-1 }a_{t}) \cup a_k),\overline{U},T-k) \}$. We construct a strategy $s$ that supports it: $\forall i\in N, s_{i1}=0, \forall 2\leq t\leq T-1, s_{it}(\prod_{k=1}^{t-1}a_{k})=1 \text{ iff } a_{it-1}=1. \forall i\in X, s_{iT}=1; \forall i\in N\backslash X,  s_{iT}(a_1)=1 \text{ iff } i\in P(\prod_{t=1}^{T-1} a_{t}) $, this is simply a MSPNE by \autoref{lem:2}.

\paragraph{Proof of the \autoref{thm:1}.}

\begin{enumerate}

\item Define $t_1:=\text{ inf }\{t: \phi_{\Gamma}(t) \supset X\}$, $Y:=\phi(t_1)$ is a strictly sufficient set/equilibrium, then $\tau_{\Gamma}(X)=t_1=\tau(\underline{\Gamma}(Y))$.
 
\item $\forall Y\in SSS(\Gamma): Y\supset X$, $\tau_\Gamma(X)\leq \tau(\underline{\Gamma}(Y))$. A more formal proof: move down the player $i\in N\backslash Y$'s strategy to $s_{ik}\equiv 0, 1\leq k\leq T$, then given $T\geq \tau(\underline{\Gamma}(Y))$, those players $Y$ take action $1$ in all MSPNE, and move back those players $ N\backslash Y$' strategies, strategic complementarities and monotonic-strategy implies the equilibrium profile cannot be lower, which proves the desired result.

\item $\tau(\Gamma)=\tau(\overline{\Gamma}(N\backslash i)) \text{ if } u_i(i)>u_i(\emptyset)$, this is obvious. $\tau(\Gamma)=1 \text{ if } N=\emptyset$, this is obvious too.

\item $\forall i\in N, \tau(\Gamma) \leq 1+ \tau(\overline{\Gamma}(N\backslash i))$. To show that, suppose given $T=1+ \tau(\overline{\Gamma}(N\backslash i))$, in some MSPNE $s$, the corresponding outcome $X\neq N$, if $i\notin X$, then obviously $i$ has the incentive to deviate to $s_{i1}'=1$; otherwise, the least outcome of the game cannot be lower than $O(\overline{\Gamma}(N\backslash i),T-1)$. In both cases, $N$ is the unique feasible MSPNE outcome.

\item $\forall X\in SSS(\Gamma)\backslash \{\emptyset, N\}, \tau(\Gamma)\leq \text{max}\{ \tau(\underline{\Gamma}(X)),\tau(\overline{\Gamma}(N\backslash X)) \}$. A similar argument as above shows that $\tau_{\Gamma}(X) \leq \tau(\underline{\Gamma}(X))$, and given $T\geq \tau(\underline{\Gamma}(X)$, the previous argument we use shows that those players $N\backslash X$ can simplify the game into $\overline{\Gamma}(N\backslash X)$, so $\tau_{\Gamma}(N\backslash X)\leq \text{max}\{ \tau(\underline{\Gamma}(X)),\tau(\overline{\Gamma}(N\backslash X)) \}$, then $\tau(\Gamma)\leq \text{max}\{ \tau(\underline{\Gamma}(X)),\tau(\overline{\Gamma}(N\backslash X)) \}$.

\item $\tau_{\Gamma}(N)\leq \tau(\Gamma)$. the first two cases are obvious; a similar argument as above shows that $\forall X\in SSS(\Gamma)\backslash \{N, \emptyset\}, \tau_{\Gamma}(N) \leq max\{ \tau(\underline{\Gamma}(X)),\tau(\overline{\Gamma}(N\backslash X)) \}$. To show that $\forall i\in N, \tau_{\Gamma}(N) \leq  1+ \tau(\overline{\Gamma}(N\backslash i))$, 
    
\item Define $t_1:=\text{ inf }\{t: \phi_{\Gamma}(t) \neq \emptyset \}$, if $X_1:=\phi(t_1)\neq N$, then $\tau_(\Gamma)=\tau(\overline{\Gamma}(N\backslash X_1))> \tau(\underline{\Gamma}(X_1)) =t_1$. To see $\forall i\in N, 1+ \tau(\overline{\Gamma}(N\backslash i))\geq \tau(\overline{\Gamma}(N\backslash X_1))$, note that $1+ \tau(\overline{\Gamma}(N\backslash i))\geq \tau(\Gamma)$ in above. Therefore, $\tau(\Gamma)=\text{min} \{ 1+ \text{min}_{i\in N} \tau(\overline{\Gamma}(N\backslash i))$, \\
$min_{X\in SSS(\Gamma)\backslash \{\emptyset, N\} } max\{ \tau(\underline{\Gamma}(X)),\tau(\overline{\Gamma}(N\backslash X)) \}\}$ if $X_1:=\phi(t_1)\neq N$.

\item Otherwise, $\phi(t_1)= N$. Then $T<\tau_{\Gamma}(N) \Leftrightarrow \phi_{\Gamma}(T)=\emptyset$ by definition. We show that if \small{$T=\text{min} \{ \text{min}_{i\in N} \tau(\overline{\Gamma}(N\backslash i)), min_{X\in SSS(\Gamma)\backslash \{\emptyset, N\} } max\{ \tau(\underline{\Gamma}(X)),\tau(\overline{\Gamma}(N\backslash X)) \}-1\}$}, then $\phi_{\Gamma}(T)=\emptyset$ by constructing the most conservative strategy in \autoref{prop:2}. Note that no player has the incentive to deviate: if $i$ deviates at $t=k$, $P_{ik}\neq N$ since $T-1 < \tau(\overline{\Gamma}(N\backslash i))$. Then if $u_i(P_{ik})> u_i(\emptyset)$, $P_{ik}$ must be a strictly sufficient set/equilibrium. $T< max\{ \tau(\underline{\Gamma}(P_{ik})),\tau(\overline{\Gamma}(N\backslash P_{ik})) \}$ implies that, $\emptyset \subsetneqq P_{ik}\subset \phi_{\Gamma}(T)\subsetneqq N$, which violates the assumption that $\phi(t_1)= N$, therefore, is a contradiction. 
\end{enumerate}

\paragraph{Proof of the \autoref{lemma1}:}
\begin{proof}
For a 2-partition game, the action profiles by $N_2$ is obviously a NE of the game $\Gamma(N_2,A_{N_2},U'(a_{N_2}\cup a_{N_1}))$. Suppose some player $i\in N_1$ has strict incentive to deviate from $a_i$ to $a_{i}'$ in the stage game, i.e., $u_i(a_i,a_{N_1\backslash i},s_{N_2}(a_i,a_{N_1\backslash i}))<u_i(a_i',a_{N_1\backslash i},s_{N_2}(a_i,a_{N_1\backslash i}))$, while $a$ is a SPNE outcome implies that $u_i(a_i,a_{N_1\backslash i},s_{N_2}(a_i,a_{N_1\backslash i}))\geq u_i(a_i',a_{N_1\backslash i},s_{N_2}(a_i',a_{N_1\backslash i}))$. If $a_i>a_i'$, since $s$ is monotonic in history, $s_{N_2}(a_i,a_{N_1\backslash i}) \geq s_{N_2}(a_i',a_{N_1\backslash i})$, the deviation-proof condition implies that $u_i(a_i,a_{N_1\backslash i},s_{N_2}(a_i,a_{N_1\backslash i}))\geq u_i(a_i',a_{N_1\backslash i},s_{N_2}(a_i,a_{N_1\backslash i}))$, which is a contradiction. Else if $a_i<a_i'$, then single-crossing condition shows that \\ $u_i(a_i,a_{N_1\backslash i},s_{N_2}(a_i',a_{N_1\backslash i}))<u_i(a_i',a_{N_1\backslash i},s_{N_2}(a_i',a_{N_1\backslash i}))$. Then the common interest condition implies that $u_i(a_i',a_{N_1\backslash i},s_{N_2}(a_i',a_{N_1\backslash i}))\geq u_i(u_i(a_i',a_{N_1\backslash i},s_{N_2}(a_i,a_{N_1\backslash i})))$, which is larger than $u_i(a_i,a_{N_1\backslash i},s_{N_2}(a_i,a_{N_1\backslash i})))$, a contradiction. 

For a $k-$partition game, repeating the argument above yields the desired result.  
\end{proof}

\paragraph{Proof of the \autoref{theorem1}:}
\begin{proof}

$1\rightarrow 2\rightarrow 3\rightarrow 1$.

$1\rightarrow 2$: What we need to do is to construct a MSPNE $s$ such that $a^*$ is the corresponding MSPNE outcome. The proof is almost self-contained in the definition: $\forall i\in N_T, \forall h_{T-1}\in H_{T-1}, s_i(h_{T-1})=\hat{a_{T}}(h_{T-1})$, the least NE of the auxiliary game $\Gamma(N_T,A_T,U'(h_{T-1}))$, which is a part of MSPNE. $\forall i\in N_{T-1}, \forall h_{T-2}\in H_{T-2}, s_i(h_{T-2})=\hat{a_{T}}(h_{T-2})$, the least NE of the auxiliary game $\Gamma(N_{T-1},A_{T-1},U'(h_{T-2}))$, which is a part of MSPNE given those players in $N_T$'s strategies fixed. Repeating the argument yields the desired result.

$2\rightarrow 3$: Without loss of generality, we may assume that in the least action profile survives IESEDS, the set of players who take action 1 is $M$. Without loss of generality, we can further assume $M=N$, since we can otherwise consider the new stage game $\Gamma(M,A_M,U')$, in which $U': A_M\rightarrow \mathbb{R}^{|M|}$. $\forall i\in M, a_M\in A_M, U_i'(a_M)=u_i(a_M,a_{N\backslash M}=0^{N\backslash M})$. Given the construction of $a^*$, $\exists i\in N_1$ has a strictly dominated strategy $0$ in the corresponding game $\Gamma(N_1,A_1,U')$, in which $\forall i\in N_1, a_1\in A_1, u_i'(a_1)=u_i(a_1,a_2^*(a_1))$, i.e., $u_i(a_i=1,a_{N_1\backslash i}=0^{N_1\backslash i},a_2^*(a_i=1,a_{N_1\backslash i}=0^{N_1\backslash i}))>u_i(a_1=0^{N_1},a_2^*(a_{N_1}=0^{N_1}))$. Then Let $B_i:=\{j\in N\backslash N_1: a_2^*(a_i=1,a_{N_1\backslash i}=0^{N_1\backslash i})|_{j}=1 \}$. $V(i):=\cup_{j\in B_i\cup i}\cup_{k\in N\backslash (B_i\cup i)}\{(j,k) \} \bigcup \cup_{j\in B_i} \{(j,i) \}$. Now consider the auxiliary game $\Gamma(B_i,A_{B_i},U')$, in which $\forall j\in B_i, \forall a_{B_i}\in A_{B_i}$, $u_j'(a_{B_i}):=u_j(a_{B_i},a_i=1,a_{N\backslash (B_i\cup i)}=0^{N\backslash (B_i\cup i)})$. According to the construction of $a^*$ for game $\Gamma(N,A,U)$, we know that $a^*=1^{B_i}$ for game $\Gamma(B_i,A_{B_i},U')$. So we can again pick some $j\in B_i\cap N_2$, and the corresponding $B_{j}$ for the auxiliary game $\Gamma(B_i,A_{B_i},U')$, and let $V(i,j):=V_i\bigcup \cup_{l\in B_j\cup j}\cup_{k\in B_i\backslash (B_j\cup j)}\{(l,k) \} \bigcup \cup_{l\in B_j} \{(l,j) \}$. Repeating the argument derives the desired result. Not hard to see that the graph we construct is $M-$feasible.

$3\rightarrow 1$: Without loss of generality, we only consider the greatest feasible graph, which means $\nexists P$-feasible graph such that $P\backslash M\neq \emptyset$. Otherwise, we may maintain the graph structure of $M$ and $P\backslash M$ unchanged, and $V':=V\bigcup \cup_{i\in M}{j\in P\backslash M}\{(i,j) \}$, which is therefore $M\cup P$-feasible. Without loss of generality, we can further assume $M=N$, since we can otherwise consider the new stage game $\Gamma(M,A_M,U')$, in which $U': A_M\rightarrow \mathbb{R}^{|M|}$. $\forall i\in M, a_M\in A_M, U_i'(a_M)=u_i(a_M,a_{N\backslash M}=0^{N\backslash M})$. Now we prove it with mathematical induction. The result is true for $T=1$ since the existence of such a consistent graph implies that the unique action profile that survives IESDS is all players take action 1. Suppose the claim holds for $T=k$. Now for $T=k+1$, given the graph $G$ is consistent, either $G$ is strongly connected and $|N_1|=1$, or there are several strongly connected components. So we can always pick some $i\in N_1$ that is unreachable from other $j\in N_1$. We claim that player $i$ has a strict incentive to take action $1$ in all MSPNE: according to the induction assumption, all those players that are in the same connected component of player $i$, represented by $P_i$, will take action 1 in all MSPNE. Given all players' strategies are monotonic in history, player i's payoff from taking $1$ is $u_i(a_i=1, a_{P_i}=1^{P_i}, a'_{-P_i\cup i})$, which is larger than $u_i(a_i=0, a_{P_i}'=1^{P_i}, a'_{-P_i\cup i})$ since $G$ is consistent. player i's payoff from taking $0$ is $u_i(a_i=0, a_{P_i}, a_{-P_i\cup i})$, which is less than $u_i(a_i=1, a_{P_i}=1^{P_i}, a'_{-P_i\cup i})$ due to the common interest assumption. So all those players in the same strongly connected component of player $i$ will take action 1 for sure. Repeating the argument above proves the desired result. Mathematical induction then implies the validity of the claim.

\end{proof}

\paragraph{Proof of the \autoref{proposition1}:}
\begin{proof}
The proof relies on the equivalent graph representation. Suppose there exists a $M_1$-feasible graph $G_1=(M_1,V_1)$ with respect to partition $P_1$, and there exists a $M_2$-feasible graph $G_2=(M_2,V_2)$ with respect to partition $P_2$. We claim that there exists a $M_1\cup M_2$-feasible graph with respect to a specific partition. We construct the partition in the following way: fix all those players in $M_1$'s position according to partition $P_1$, and fix all those players in $M_2 \backslash M_1$'s position according to partition $P_2$. We construct the $M_1\cup M_2$-feasible graph with respect to the previous partition: $V=V_1\bigcup V_2|_{M_2\backslash M_1} \bigcup \cup_{i\in M_1}\cup_{j\in M_2\backslash M_1}\{i,j\}$. The constructed graph $V$ is $M_1\cup M_2$ feasible, which proves the first part of the proposition. For the second part, find the smallest $1\leq t\leq T$ such that $|N_t|\geq 2$, arbitrarily choose some $i\in N_t$, and construct $N_l'=N_l$ for $l\leq t-1$, $N_t'=\{i\}$, $N_{t+1}'=N_t\backslash i$, $N_{l+1}'=N_l$ for $t+1\leq l \leq T$. The constructed graph is $M$-feasible with respect to the partition $\{ N_t'\}_{t'=1}^{T+1}$ then.
\end{proof}

\paragraph{Proof of the \autoref{lemma2}:}
\begin{proof}
The proof is basically self-contained. For the first part, all nodes in $N_k$ belong to different strongly connected components in $G|_{\cup_{t\geq k}N_t}$. For the second part, see tree-depth as a value function and consider the corresponding policy function. If the policy function is to divide the graph into several strongly connected components, then consider the position of nodes in each of the components independently. If the policy function is to remove node $i$, then $N_1=\{i\}$ and consider $N_2$ next.
\end{proof}

\paragraph{Proof of the \autoref{prop:4}.}
\begin{enumerate}
    \item The first case is trivial.
    \item Otherwise if $G$ is strongly connected, then $SSS(\Lambda)=\{N\}$, and apply \autoref{thm:1}.
    \item Otherwise, there exists at least one strongly connected component $G_i=(N_i,E_i)$, $G_i:=G|_{N_i}$ such that $W(N_i)=N_i$, in which the game is isolated, so $\tau_{\Lambda}(N_i)=\tau(\Lambda(N_i;G_i))$.
    \item For the remaining graph $G|_{N\backslash N_i}$, there exists at least one strongly connected component $G_j=(N_j,E_j)$ such that $W(N_i)=N_i$. Then we must have $\tau_{\Lambda}(N_i\cup N_j)= \text{max}\{\tau(\Lambda(N_i;G_i)), \tau(\Lambda(N_j;G_j))\}$.
    \item Repeat the process, we have $\tau_{\Lambda}(N)=\text{max}_{i}\tau(\Lambda(N_i;G_i))$, where $G_i$ is the strongly connected component.
    \item We rename $\tau(\Lambda(N_i;G_i)):=td(G_i)$, which is the desired result. 
    \item Note that the tree-depth(td) is weakly increasing in the graph $G$, so apply \autoref{thm:1} and we get $\tau_{\Lambda}(X)=td(G|_{W(X)})$.
\end{enumerate}

\paragraph{Proof of the \autoref{thm:2}:}
 \begin{proof}
 \begin{enumerate}
     \item A constructive proof. We will use the three operations defined above: dominate, divide, and delete. The priority is: divide, dominate, and delete, given the policy functions may admit several operations.
     \item Consider the smallest $t_1$ such that $\phi_{\Gamma}(t_1)\neq \emptyset$, if $\phi_{\Gamma}(t_1) \neq N$, then divide the players $N$ into $N_1:=\phi_{\Gamma}(t_1)$ and $N_2:=N\backslash \phi_{\Gamma}(t_1)$, $\forall i\in N_1,j\in N_2$, $i\rightarrow j$. Reconsider $\underline{\Gamma}(N_1)$, $\overline{\Gamma}(N_2)$, $G|_{N_1}$ and $G|_{N_2}$.
     \item Else if the operation is to divide the players into $N_1$ and $N_2$ by the policy function in \autoref{thm:1}, $\forall i\in N_1,j\in N_2$, $i\rightarrow j$. Reconsider $\underline{\Gamma}(N_1)$, $\overline{\Gamma}(N_2)$, $G|_{N_1}$ and $G|_{N_2}$.
    \item Else if dominate $i$, $\forall j\in N\backslash i, i\rightarrow j$. Reconsider $\overline{\Gamma}(N\backslash i)$ and $G|_{N \backslash i}$.
     \item Else if delete $i$, $\forall j\in N\backslash i, i\leftrightarrow j$. Reconsider $\overline{\Gamma}(N\backslash i)$ and $G|_{N \backslash i}$.
 \end{enumerate}

It is easy to check that the graph we constructed is a sufficient network of $\Gamma$, one can simply remove some branches to make it minimal. 

On the one hand, $\forall X\in 2^N$, the graph $G$ we constructed can make sure that $G|_{W(X)}$ be eliminated in at most $k=\tau_{\Gamma}(X)$ steps, (since we always take the division operation first), which is obvious from the procedure we construct it. so $\tau_{\Gamma}(X) \geq td(G|_{W(X)})$. On the other hand, $\tau_{\Gamma}(X)\leq td(G|_{W(X)})$ by \autoref{lem:3}.
 \end{proof}

\paragraph{Proof of the \autoref{prop:8}.}
\begin{proof}
The proof of the first part is easy. Given the game is cost-ordered and contribution-ordered, it implies that the former players are easier to dominate, while deleting the latter players contributes more to other players. Therefore, the optimal policy is to always delete the last players. To see why dividing the game into two sub-games does not accelerate, it is enough to note that: on the one hand, for the first sub-game $\underline{\Gamma}(X)$, compared with deleting $|X|$, deleting $|N|$ always contributes more for all players in $X$; on the other hand, given $T \geq \underline{\Gamma}(X)$, we must have that all those players in $X$ have been either dominated or deleted under the new policy, so dividing the game into two sub-games does not make additional benefit. 

The validness of the algorithm when the stage game is both strictly cost-ordered and contribution natural is guaranteed by the following fact:

$\tau_{\Gamma}(X)=\tau(\underline{\Gamma}(l(X)))$, $l(X)$ is the least element in $SSE(\Gamma)$ such that $X\subset l(X)$. 

\begin{footnotesize}
    
   \begin{equation}	  
\tau(\Gamma)=
 		\begin{cases}
 \tau(\overline{\Gamma(N\backslash 1)}) & \text{if } u_1(1)>u_1(\emptyset)\\
    1 & \text{else if } N=\emptyset\\
    1+  \tau(\overline{\Gamma(N\backslash |N|)}) & \text{else if $SSS(\Gamma)=\emptyset$}
 		\end{cases}
 	\end{equation} 
\end{footnotesize}

Since such a stage game is $\tau-$equivalent to an aggregative game with heterogeneous cost, it is enough to focus on the number of players who take action $1$, which accelerates the computation. Instead of computing the sub-game $\underline{\Gamma}(l(X))$, again, we may solve the problem equivalently during computing $\tau(\Gamma)$ since whether a player has the incentive to take action $1$ strictly depends on the number of players who take action $1$ only.
\end{proof}

\end{document}